\newtheorem{definition}{Definition}
\newtheorem{theorem}{Theorem}
\newtheorem{lemma}{Lemma}
\newtheorem{corollary}{Corollary}
\theoremstyle{remark} 
\newtheorem{remark}{remark}
\newtheorem{example}{Example}
\begin{document}

\title{
Dual Power Spectrum Manifold and Toeplitz HPD Manifold: Enhancement and Analysis for Matrix CFAR Detection
}

\author{Hao Wu, Yongqiang Cheng, Xixi Chen, Zheng Yang, Xiang Li and Hongqiang Wang
\thanks{This work was supported by the NSFC of China under grant No. 61871472. (Corresponding author: Yongqiang Cheng.)\par
Hao Wu, Yongqiang Cheng, Xixi Chen, Zheng Yang, Xiang Li, Hongqiang Wang are with the College of Electronic Science and Technology, National University of Defense Technology, Changsha 410000, China (Email: wu\_hao\_95@163.com; nudtyqcheng@gmail.com; yangzheng18@nudt.edu.cn; xixichen99@163.com; lixiang01@vip.sina.com; oliverwhq@tom.com).}
}

\markboth
{IEEE,~Vol.~x, No.~x, \today}
{Hao Wu\headeretal: The Dual Manifold of Toeplitz Hermitian Positive Definite Matrix Manifold for Matrix CFAR Detection}

\maketitle

\begin{abstract}
Recently, an innovative matrix CFAR detection scheme based on information geometry, also referred to as the geometric detector, has been developed speedily and exhibits distinct advantages in several practical applications. 
These advantages benefit from the geometry of the Toeplitz Hermitian positive definite (HPD) manifold $\mathcal{M}_{\mathcal{T}H_{++}}$, but the sophisticated geometry also results in some challenges for geometric detectors, such as the implementation of the enhanced detector to improve the SCR (signal-to-clutter ratio) and the analysis of the detection performance.
To meet these challenges, this paper develops the dual power spectrum manifold $\mathcal{M}_{\text{P}}$ as the dual space of $\mathcal{M}_{\mathcal{T}H_{++}}$.
For each affine invariant geometric measure on $\mathcal{M}_{\mathcal{T}H_{++}}$, we show that there exists an equivalent function named induced potential function on $\mathcal{M}_{\text{P}}$. 
By the induced potential function, the measurements of the dissimilarity between two matrices can be implemented on $\mathcal{M}_{\text{P}}$, and the geometric detectors can be reformulated as the form related to the power spectrum.
The induced potential function leads to two contributions: 1) The enhancement of the geometric detector, which is formulated as an optimization problem concerning $\mathcal{M}_{\mathcal{T}H_{++}}$, is transformed to an equivalent and simpler optimization on $\mathcal{M}_{\text{P}}$.
In the presented example of the enhancement, the closed-form solution, instead of the gradient descent method, is provided through the equivalent optimization.
2) The detection performance is analyzed based on $\mathcal{M}_{\text{P}}$, and the advantageous characteristics, which benefit the detection performance, can be deduced by analyzing the corresponding power spectrum to the maximal point of the induced potential function. 
\end{abstract}

\begin{IEEEkeywords}
geometric detector, Toeplitz HPD manifold, dual power spectrum manifold, affine invariant geometric measure, information geometry.
\end{IEEEkeywords}

\IEEEpeerreviewmaketitle

\section{Introduction}

\IEEEPARstart{D}{etection} of targets, especially in the clutter background, is a common challenge in radar, sonar, navigation and communication fields, and thus has been studied and gained much interest during the last decades\cite{Zeitouni1992,Stoica2004,Chen2009,Moustakides2012,Pascal2008,Pascal2010,Pascal2011,Rong2021}. Recently, an innovative matrix CFAR detection scheme based on information geometry\cite{Amari2016} is developed speedily and exhibits distinct advantages in the low SCR (signal-to-clutter ratio) regime, short pulse sequences, heterogeneous clutter backgrounds, etc\cite{Arnaudon2013Riemannian,zhao2019,Zhang2019,HUA2021}. 
In this detection scheme, the detection issues are transformed to the geometric problems, that the observed data are modeled as covariance matrices and treated as points on the matrix manifold.
Then, the decision is made based on the geometric measurement of the dissimilarity between the corresponding points of primary data and secondary data on the matrix manifold.
The Riemannian distance (RD)\cite{Said2017,Said2018} is the initial geometric measure in the matrix CFAR detection scheme, and it is
applied in target detection with X-band and high frequency surface wave radars\cite{Lapuyade2008Radar,Arnaudon2013Riemannian}, burg estimation of the scatter matrix\cite{Burgestimation} and the monitoring of wake vortex turbulences\cite{Barbaresco2010,Liu2013}. 
To improve the detection performance and circumvent the expensive computation, the matrix CFAR detection scheme has been extended by replacing the RD with other different geometric measures, such as Kullback-Leibler divergence (KLD)\cite{hua2017matrix}, Jensen-Shanon divergence (JSD)\cite{hua2018information2}, log-determinant divergence (LDD)\cite{Cherian2013} and their modified extension\cite{hua2018geometric,hua2018information,hua2017geometric}, etc\cite{zhao2019,Zhao2018}. By these pioneered works, the geometry-based detection method has been widely investigated from both theoretical and practical perspective, and is developing to a critical and effective approach in detection issues because of the extraordinary detection performance and little requirement of the statistical characteristics of clutter environment\cite{Hua_TBD,HUA2021,Wong2017,Barbaresco2019,Barbaresco2008}.
Therefore, the matrix CFAR detection scheme has a wide application prospect and is deserved to be further studied.\par
The geometry-based detection methods benefit from the high-dimensional and non-flat geometry of the matrix manifold, which can more accurately describe the intrinsic dissimilarity among the matrices. However, the sophisticated geometry also results in the complex principle and expressions of these detectors. Moreover, there are twofold challenges in these methods.
The first challenge concerns the low SCR regime and the enhancement of the geometry-based detection method. Under the low SCR regime, the target echo and the clutter are hard to discriminative. An effective way is to map the manifold into a more discriminative space. But, the sophisticated geometry leads to the difficulty of the determination of the effective enhanced mapping.
Another is about the analysis of the detection performance. The complex structure of these detectors lead to the absence of the analytic and closed PDF of the test statistic. Thus, the detection performance of the geometry-based detection method is hard to deduce. Moreover, because of the unavailable analysis of the detection performance, the relation between the performance and characteristics of the target echo cannot be studied, which would impede the further research about the geometry-based detection methods.\par
This paper mainly focuses on these two challenges and seeks for some new theoretical tools to deal with them. In the traditional signal theory, the duality of two signal spaces often play a pivotal role in the signal processing issues. For example, by the Fourier transformation, the time domain and frequency domain can be mutually transformed from each other, and the analysis and filtering operations of the signal on the time domain can be easily implemented on the frequency domain. Inspired by the duality between the time domain and the frequency domain, the dual manifold of the matrix manifold is desired to be investigated for the new theoretical tools.\par
This paper is mainly concentrated on the matrix CFAR detection scheme, and the covariance matrix is considered as a Toeplitz Hermitian positive definite (HPD) matrix and located on the Toeplitz HPD manifold $\mathcal{M}_{\mathcal{T}H_{++}}$.
As the dual manifold of $\mathcal{M}_{\mathcal{T}H_{++}}$, the dual power spectrum manifold $\mathcal{M}_{\text{P}}$ is established by using the Wiener-Khinchin theorem.
By the duality between these two manifolds, every geometric detectors with affine invariant geometric measures can be equivalently transformed to a relatively straightforward form with the induced potential function on the dual power spectrum manifold.
Moreover, the duality between these two manifolds can transform the mentioned twofold challenges to relatively simple problems on the dual power spectrum manifold. Specifically, the main contributions of this paper are listed as follows:
\begin{enumerate}
	\item \textbf{Toeplitz HPD manifold}: For the geometry-based detection methods, the covariance matrix of observed data is shown to be located on a Toeplitz HPD matrix manifold. On the Toeplitz HPD manifold, the geometric measures are established to quantify the intrinsic dissimilarity between two Toeplitz HPD matrix. Moreover, the definition of the affine invariant measure is provided, and we show that the commonly used RD, KLD, JSD and LDD belong to the affine invariant measures. 
	\item \textbf{Dual power spectrum manifold}: As the lower-dimensional dual manifold of the Toeplitz HPD manifold, the dual power spectrum manifold is established. The relation between the Toeplitz HPD manifold and dual power spectrum manifold is studied, that the power spectrum of observed data equals to the Fourier transform of the first row of the corresponding Toeplitz HPD matrix. Moreover, another significant property between these two manifold is shown, that the eigenvalues of the Toeplitz HPD matrix asymptotically tend to the corresponding power spectrum for each observed data.
	\item \textbf{Induced potential function}: By the duality between the two manifolds, we have shown: for every affine invariant geometric measure on $\mathcal{M}_{\mathcal{T}H_{++}}$, there exists an equivalent function on $\mathcal{M}_{\text{P}}$, which is defined as the induced potential function for $\mathcal{M}_{\text{P}}$. Therefore, by the induced potential function, the measurements of the dissimilarity between two matrices can be implemented on $\mathcal{M}_{\text{P}}$, and the geometric detectors can be reformulated as the form related to the power spectrum. The form with the induced potential function is more straightforward than it with the geometric measure, so the induced potential function is of great potentials to address the challenges concerning the geometric detector.
	\item \textbf{Equivalent enhancement on dual power spectrum manifold}: To improve the detection performance in the low SCR regime, the enhancement of the matrix CFAR detection is proposed and formulated as an optimization problem with respect to the Toeplitz HPD matrix. By the duality between the Toeplitz HPD manifold and the dual power spectrum manifold, we transform this optimization problem to an equivalent optimization on the dual power spectrum manifold. This equivalent optimization problem is simpler and easier to solve than the original optimization because the decision variable has lower dimension and the constraint of each component is independent. In the presented example, the closed-form solution of the equivalent optimization is figured out, that it was used to be solved by the gradient descent method from the perspective of the original optimization.
	\item \textbf{Performance analysis based on dual power spectrum manifold}: The analytic method for detection performance is proposed based on the induced potential function. Because the geometric detector using the affine invariant geometric measure can be reformulated as the form based on the induced potential function, the detection performance is related to the value of the induced potential function. Therefore, the advantageous characteristics, which benefit the detection performance, can be deduced by analyzing the corresponding power spectrum to the maximal point of the induced potential function. 
	As the examples, the detection performance of RD, KLD and LDD is analyzed by this method. Moreover, by the induced potential function, the optimal dimension of the enhanced mapping in the enhanced geometric detector is deduced.
\end{enumerate}\par
The remainder of this paper is organized as follows. Section~\ref{sec:form} reviews the covariance matrix model of the observed data for target detection issues, and the matrix formed binary hypothesis model is introduced. In section~\ref{sec:THPD_SP}, the Toeplitz HPD manifold, the dual power spectrum manifold, the affine invariant measure and the induced potential function are introduced. Section~\ref{sec:enhancement} reports on the equivalent enhancement on the dual power spectrum manifold. Section~\ref{sec:perform} presents the analytic method based on the dual power spectrum manifold for detection performance. Finally, section~\ref{sec:con} provides the brief conclusions of the paper.\par
The following notations are adopted in this paper: the math italic $x$, lowercase bold italic $\bm{x}$ and uppercase bold $\mathbf{A}$ denote the scalars, vectors and matrices, respectively. Symbol $(\cdot)^H$ indicates the conjugate transpose operator. $\text{rk}(\mathbf{A})$, $\text{tr}(\mathbf{A})$ and $|\mathbf{A}|$ mean the rank, trace and determinant of matrix $\mathbf{A}$, respectively. And, constant matrix $\mathbf{I}$ indicates the identity matrix.

\section{Toeplitz Covariance Matrix Model for Target Detections}\label{sec:form}
In detection issues, the observed data consist of a series of data which can be classified into two sorts. The first one is primary data that may contains the target echo, and the second one is secondary data which provides the reference of background clutters. By the primary data and the secondary data, the detection problem can be formulated as following binary hypothesis test model,
\begin{equation}
	\begin{split}
		&\mathcal{H}_0:
		\begin{cases}
			\bm{x}=\bm{w}\\
			\bm{x}_k=\bm{w}_k\quad k=1,\dots,K,\\
		\end{cases}\\
		&\mathcal{H}_1:
		\begin{cases}
			\bm{x}=\bm{s}+\bm{w}\\
			\bm{x}_k=\bm{w}_k\quad k=1,\dots,K,\\
		\end{cases}
	\end{split}
\end{equation}
where $\bm{x}$ is the primary data, $\bm{s}$ is the target echo, $\bm{w}$ is the noise (clutter), $\bm{x}_k$ is the $k$-th secondary data and $K$ is the number of secondary data. Suppose the noise $\bm{w}$ and $\bm{w}_k\;(k=1,\dots,K)$ are the independently and identically distributed stochastic variable, thus the power and other statistical characteristic of $\bm{w}$ can be estimated by $\bm{w}_k\;(k=1,\dots,K)$.\par
To deal with the detection problem, the covariance matrix as the second order statistics is often employed to represent the statistical characteristics of observed data, which is shown as the following Toeplitz structure
\begin{equation}
	\mathbf{C}_{\bm{x}}=
	\begin{bmatrix}
		c_0 & c_1 & \dots & c_{m-1}\\
		c_{-1} & c_0 &\dots &c_{m-2}\\
		\vdots & \vdots & \ddots & \vdots \\
		c_{1-m} & c_{2-m} &\dots &c_0\\
	\end{bmatrix},
	\label{eq:covariance}
\end{equation}
where
\begin{equation}
	c_k=c_{-k}^*=\sum_{i=k+1}^m x_ix_{i-k}^*\quad(k=0,\dots,m-1).
	\label{eq:correlation}
\end{equation}\par
Suppose that the noise $\bm{w}$ and the target echo $\bm{s}$ are independent, then the correlation coefficient, under hypothesis $\mathcal{H}_1$, can be decomposed as
\begin{equation}
	c_k|_{\mathcal{H}_1}=\sum_{i=k+1}^m x_ix_{i-k}^*=\sum_{i=k+1}^m  s_is_{i-k}^*+\sum_{i=k+1}^m w_iw_{i-k}^*.
\end{equation}
Therefore, the covariance matrix $\mathbf{C}_{\bm{x}}$ under $\mathcal{H}_1$ can be decomposed to
\begin{equation}
	\mathbf{C}_{\bm{x}}|_{\mathcal{H}_1}=\mathbf{C}_{\bm{s}}+\mathbf{C}_{\bm{w}}.
\end{equation}
\par
By the covariance matrix $\mathbf{C}_{\bm{x}}$, the binary hypothesis test model can be reformulated as
\begin{equation}
	\begin{split}
		&\mathcal{H}_0:
		\begin{cases}
			\mathbf{C}_{\bm{x}}=\mathbf{C}_{\bm{w}}\\
			\mathbf{C}_{\bm{x}_k}=\mathbf{C}_{\bm{w}_k}\quad k=1,\dots,K,\\
		\end{cases}\\
		&\mathcal{H}_1:
		\begin{cases}
			\mathbf{C}_{\bm{x}}=\mathbf{C}_{\bm{s}}+\mathbf{C}_{\bm{w}}\\
			\mathbf{C}_{\bm{x}_k}=\mathbf{C}_{\bm{w}_k}\quad k=1,\dots,K.\\
		\end{cases}
	\end{split}
\end{equation}
The covariance matrix is used to represent the statistical characteristic of the observed data, but the operation performed to the matrix is expensive because of the high dimension.
\section{Toeplitz HPD Manifold and Dual Power Spectrum Manifold}\label{sec:THPD_SP}
This section would introduce the Toeplitz HPD manifold and its dual manifold.
Briefly, the main contents of this section are presented as follows. Firstly, the covariance matrix $\mathbf{C}_{\bm{x}}$ in (\ref{eq:covariance}) is shown to be Toeplitz HPD matrix, and the geometry of the Toeplitz HPD manifold $\mathcal{M}_{\mathcal{T}H_{++}}$ is studied. Then, the dual manifold of $\mathcal{M}_{\mathcal{T}H_{++}}$, named dual power spectrum manifold, is builded, and two ways of connecting these two manifold are deduced. Finally, we show that there exists an induced potential function on dual power spectrum manifold for every affine invariant geometric measure, which is equivalent to the corresponding geometric measure.
\subsection{Toeplitz HPD Manifold}\label{sec:THPD}
For the covariance matrix $\mathbf{C}_{\bm{x}}$, the following theorem state that $\mathbf{C}_{\bm{x}}$ is a Toeplitz HPD matrix.
\begin{theorem}
	Given $\bm{x}=(x_1,x_2,\dots,x_m)\in\mathbb{C}^m\backslash\{\bm{0}\}$, the corresponding matrix $\mathbf{C}_{\bm{x}}$ is a Toeplitz Hermitian positive definite matrix.
	\label{the:THPD}
\end{theorem}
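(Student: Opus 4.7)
The three assertions --- Toeplitz, Hermitian, and positive definite --- have very different depths. That $\mathbf{C}_{\bm{x}}$ is Toeplitz is immediate from the display in (\ref{eq:covariance}), whose $(i,j)$ entry equals $c_{j-i}$ and is therefore constant along each diagonal. That it is Hermitian follows from the symmetry $c_{-k}=c_k^*$ already built into the definition in (\ref{eq:correlation}). The substantive content of the theorem is strict positive definiteness, and that is where I would concentrate the proof.

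The plan is to realize $\mathbf{C}_{\bm{x}}$ as a Gram matrix and then establish full column rank of the generating factor. I would introduce the $(2m-1)\times m$ \emph{convolution matrix} $\mathbf{X}$ whose $j$-th column is the zero-padded shift with $j-1$ leading zeros, then the entries $x_1^*,x_2^*,\dots,x_m^*$, and finally $m-j$ trailing zeros. A direct reindexing of $(\mathbf{X}^H\mathbf{X})_{j,k}=\sum_i x_{i-j+1}\,x_{i-k+1}^*$, restricted to the valid ranges of $i$, yields $c_{k-j}$ when $k\ge j$ and $c_{j-k}^{*}$ otherwise, which matches $(\mathbf{C}_{\bm{x}})_{j,k}$ on both sides of the diagonal. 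Hence $\mathbf{C}_{\bm{x}}=\mathbf{X}^H\mathbf{X}$ and consequently $\bm{v}^H\mathbf{C}_{\bm{x}}\bm{v}=\|\mathbf{X}\bm{v}\|^2\ge 0$ for every $\bm{v}\in\mathbb{C}^m$, which already delivers positive semi-definiteness.

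To upgrade this to strict positivity I would argue that $\mathbf{X}$ has full column rank. The cleanest route is the $z$-transform interpretation: $\mathbf{X}\bm{v}$ is the coefficient sequence of the polynomial product $V(z)\,X^{\sharp}(z)$, where $V(z)=\sum_{j=1}^{m} v_j z^{j-1}$ and $X^{\sharp}(z)=\sum_{i=1}^{m} x_i^* z^{i-1}$. Because $\bm{x}\neq\bm{0}$, the polynomial $X^{\sharp}$ is not identically zero, so $\mathbf{X}\bm{v}=\bm{0}$ forces $V\equiv 0$, i.e.\ $\bm{v}=\bm{0}$. An equivalent, more elementary ``staircase'' argument: if $i_0=\min\{i:x_i\neq 0\}$, then the leading nonzero entry of column $j$ sits in row $i_0+j-1$, so the columns are trivially linearly independent. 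Either version gives $\|\mathbf{X}\bm{v}\|^2>0$ for all $\bm{v}\neq\bm{0}$, which is exactly the HPD property.

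The only real obstacle is bookkeeping: aligning the asymmetric summation limits in $c_k=\sum_{i=k+1}^{m} x_i x_{i-k}^*$ and the conjugation convention $c_{-k}=c_k^*$ with the shifted-column layout of $\mathbf{X}$. I would sanity-check the factorization on the small case $m=2$, where $\mathbf{X}$ is $3\times 2$ and $\mathbf{X}^H\mathbf{X}$ can be evaluated by hand, before committing to the general index formula --- since the choice between placing $x_\ell$ or $x_\ell^*$ in the columns of $\mathbf{X}$ toggles $\mathbf{C}_{\bm{x}}$ with its conjugate and matters for the match.
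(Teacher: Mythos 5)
Your proposal is correct and follows essentially the same route as the paper: both realize the quadratic form as $\|\tilde{\mathbf{X}}\bm{y}\|^2$ for the same $(2m-1)\times m$ zero-padded convolution (staircase) matrix, reducing strict positivity to full column rank of that matrix. Your polynomial/$z$-transform and leading-nonzero-entry arguments simply make explicit the rank claim that the paper asserts with only the remark that the columns are linearly independent when $\bm{x}\neq\bm{0}$.
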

\begin{proof}
	According to the (\ref{eq:covariance}) and (\ref{eq:correlation}), the matrix $\mathbf{C}_{\bm{x}}$ is obviously Hermitian and Toeplitz. Then, the positive definiteness is shown below.\par
	For any $\bm{y}=(y_1,y_2,\dots,y_m)\in\mathbb{C}^m\backslash\{\bm{0}\}$, the quadratic form is
	\begin{equation}
		\bm{y}^H\mathbf{C}_{\bm{x}}\bm{y}=\sum^m_{i=1}\sum^m_{j=1} c_{i-j}y_iy_j^*.
	\end{equation}
	By (\ref{eq:correlation}), it can be expressed as
	\begin{equation}
		\begin{split}
			\bm{y}^H\mathbf{C}_{\bm{x}}\bm{y}=&\sum^m_{i=1}\sum^m_{j=1}\sum_{1\le i+k\le m\atop 1\le j+k\le m} x_{i+k}x_{j+k}^*y_iy_j^*\\
			=&\sum^{m-1}_{k=1-m}\sum_{1\le i+k\le m\atop1\le i\le m}\sum_{1\le j+k\le m\atop1\le j\le m} x_{i+k}x_{j+k}^*y_iy_j^*\\
			=&\sum^{m-1}_{k=1-m}\left(\sum_{1\le i+k\le m\atop1\le i\le m}x_{i+k}y_i\right)\left(\sum_{1\le j+k\le m\atop1\le j\le m} x_{j+k}y_j\right)^*\\
			=&\sum^{m-1}_{k=1-m}\left\Arrowvert\sum_{1\le i+k\le m\atop1\le i\le m}x_{i+k}y_i\right\Arrowvert^2\ge 0.\\
		\end{split}
	\end{equation}
	And, the quadratic form is zero, if and only if
	\begin{equation}
		\sum_{1\le i+k\le m\atop1\le i\le m}x_{i+k}y_i=0,\quad1-m\le k\le m-1,
	\end{equation}
	i.e.,
	\begin{equation}
		\tilde{\mathbf{X}}\bm{y}=
		\begin{bmatrix}
			& & & x_1\\
			& & x_1 & x_2\\
			& \iddots & \vdots & \vdots\\
			x_1 & \dots & x_{m-1} & x_m\\
			x_2 & \dots & x_m & \\
			\vdots & \iddots & & \\
			x_m & & & \\
		\end{bmatrix}
		\begin{bmatrix}
			y_1\\
			y_2\\
			\vdots\\
			y_m
		\end{bmatrix}=\bm{0}.
	\end{equation}
	Actually, as $\bm{y}\ne \bm{0}$, $\tilde{\mathbf{X}}\bm{y}=\bm{0}$ means $\text{rk}(\tilde{\mathbf{X}})<m$. However, the $m$ columns of $\tilde{\mathbf{X}}$ are linearly independent as $\bm{x}\ne\bm{0}$, i.e., $\text{rk}(\tilde{\mathbf{X}})=m$. 
	Therefore, for any $\bm{y}\in\mathbb{C}^m\backslash\{\bm{0}\}$ the quadratic form $\bm{y}^H\mathbf{C}_{\bm{x}}\bm{y}>0$, i.e., the matrix $\mathbf{C}_{\bm{x}}$ is positive definite.\par
	Totally, $\mathbf{C}_{\bm{x}}$ is a Toeplitz Hermitian positive definite matrix.
\end{proof}
Therefore, the matrices $\mathbf{C}_{\bm{x}}$ and $\mathbf{C}_{\bm{w}_k}\;(k=1,\dots,K)$ are located on the Toeplitz HPD manifold, which is defined as follows.
\begin{definition}[Toeplitz HPD manifold]
The Toeplitz HPD manifold is
\begin{equation}
	\mathcal{M}_{\mathcal{T}H_{++}}=\{\mathbf{C}|\mathbf{C}\in\mathcal{T}(m,\mathbb{C}),\mathbf{C}^H=\mathbf{C},\mathbf{C}\succ \mathbf{0}\},
\end{equation}
where $\mathcal{T}(m,\mathbb{C})$ is the set of $m\times m$ dimensional complex Toeplitz matrices.
\label{def:THPD}
\end{definition}
On the Toeplitz HPD matrix manifold, the geometric measure, which is used for quantifying the difference between two matrices $\mathbf{C}_1,\mathbf{C}_2$, is the function $\mathcal{D}:\;\mathcal{M}_{\mathcal{T}H_{++}}\times \mathcal{M}_{\mathcal{T}H_{++}}\to \mathbb{R}$ satisfying\cite{Amari2016}:
\begin{itemize}
	\item $\forall \mathbf{C}_1,\mathbf{C}_2\in\mathcal{M}_{\mathcal{T}H_{++}},\;\mathcal{D}(\mathbf{C}_1,\mathbf{C}_2)\ge 0$;
	\item $\mathcal{D}(\mathbf{C}_1,\mathbf{C}_2)=0$ if and only if $\mathbf{C}_1=\mathbf{C}_2$.
\end{itemize}
The geometric measure $\mathcal{D}$ can be regarded as the extended distance which is not limited by the symmetry and triangular inequality.\par
There are some typical geometric measures on the matrix manifold:
\begin{itemize}
	\item Riemannian Distance:
	\begin{equation}
		\mathcal{D}_{\text{RD}}(\mathbf{C}_1,\mathbf{C}_2)=\left\Arrowvert \log\left(\mathbf{C}_1^{-\frac{1}{2}}\mathbf{C}_2\mathbf{C}_1^{-\frac{1}{2}}\right)\right\Arrowvert_F^2;
	\end{equation}
	\item Kullback-Leibler Divergence:
	\begin{equation}
		\mathcal{D}_{\text{KL}}(\mathbf{C}_1,\mathbf{C}_2)=\text{tr}(\mathbf{C}_1\mathbf{C}_2^{-1}-\mathbf{I})-\log\left|\mathbf{C}_1\mathbf{C}_2^{-1}\right|;
	\end{equation}
	\item Jensen-Shannon Divergence:
	\begin{equation}
		\begin{split}
			&\mathcal{D}_{\text{JS}}(\mathbf{C}_1,\mathbf{C}_2)\\
			=&\frac{1}{2}\left[\mathcal{D}_{\text{KL}}\left(\mathbf{C}_1,\frac{\mathbf{C}_1+\mathbf{C}_2}{2}\right)+\mathcal{D}_{\text{KL}}\left(\mathbf{C}_2,\frac{\mathbf{C}_1+\mathbf{C}_2}{2}\right)\right];
		\end{split}
	\end{equation}
	\item log-determinant Divergence:
	\begin{equation}
		\mathcal{D}_{\text{LD}}(\mathbf{C}_1,\mathbf{C}_2)=\log\left|\frac{\mathbf{C}_1+\mathbf{C}_2}{2}\right|-\log\sqrt{\left|\mathbf{C}_1\mathbf{C}_2\right|}.
	\end{equation}
\end{itemize}
In fact, these geometric measures are affine invariant measures\cite{Pennec2006}, which remain invariant under affine transformations.
\begin{definition}[Affine Invariant Measure]
	The geometric measure $\mathcal{D}(\mathbf{C}_1,\mathbf{C}_2)$ is called affine invariant measure, if and only if $\forall \mathbf{W}\in \text{GL}(m,\mathbb{C})$ (general linear group) the following equation is established,
	\begin{equation}
		\mathcal{D}(\mathbf{W}^H\mathbf{C}_1\mathbf{W},\mathbf{W}^H\mathbf{C}_2\mathbf{W})=\mathcal{D}(\mathbf{C}_1,\mathbf{C}_2).
	\end{equation}
	\label{def:gm}
\end{definition}
\begin{lemma}
	The geometric measure $\mathcal{D}_{\text{RD}},\mathcal{D}_{\text{KL}},\mathcal{D}_{\text{JS}},\mathcal{D}_{\text{LD}}$ are affine invariant measures.
	\label{lem:afigm}
\end{lemma}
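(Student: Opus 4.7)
The plan is to verify the four identities one at a time, exploiting the fact that each measure depends on $\mathbf{C}_1,\mathbf{C}_2$ only through quantities (eigenvalues, traces, determinants) that are unaffected by the congruence $\mathbf{C}\mapsto \mathbf{W}^H\mathbf{C}\mathbf{W}$. The single algebraic fact that drives every case is that
\[
(\mathbf{W}^H\mathbf{C}_1\mathbf{W})^{-1}(\mathbf{W}^H\mathbf{C}_2\mathbf{W}) \;=\; \mathbf{W}^{-1}(\mathbf{C}_1^{-1}\mathbf{C}_2)\mathbf{W},
\]
so congruence on the pair $(\mathbf{C}_1,\mathbf{C}_2)$ induces a similarity on the product $\mathbf{C}_1^{-1}\mathbf{C}_2$, which leaves its spectrum, trace, and determinant invariant.

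I would first dispose of $\mathcal{D}_{\text{KL}}$, since it is the most transparent: $\text{tr}(\mathbf{C}_1\mathbf{C}_2^{-1})$ and $\log|\mathbf{C}_1\mathbf{C}_2^{-1}|$ both depend only on the spectrum of $\mathbf{C}_1\mathbf{C}_2^{-1}$, which by the displayed identity above is similarity-invariant under the map $\mathbf{C}_i\mapsto\mathbf{W}^H\mathbf{C}_i\mathbf{W}$. With $\mathcal{D}_{\text{KL}}$ in hand, $\mathcal{D}_{\text{JS}}$ follows immediately by noting $\mathbf{W}^H\mathbf{C}_1\mathbf{W}+\mathbf{W}^H\mathbf{C}_2\mathbf{W}=\mathbf{W}^H(\mathbf{C}_1+\mathbf{C}_2)\mathbf{W}$, so affine invariance of the two summands transfers to their sum. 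For $\mathcal{D}_{\text{LD}}$, I would use multiplicativity of the determinant: $|\mathbf{W}^H\mathbf{C}\mathbf{W}|=|\mathbf{W}|^2\,|\mathbf{C}|$ (in the complex case $|\mathbf{W}^H||\mathbf{W}|=|\mathbf{W}|^2$, treated as a positive real quantity since the terms all reappear symmetrically), and observe that the factor $\log|\mathbf{W}|^2$ appears once in the first term and once in the averaged second term, so it cancels cleanly in the difference.

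The main obstacle, as I see it, is $\mathcal{D}_{\text{RD}}$, because the expression $(\mathbf{W}^H\mathbf{C}_1\mathbf{W})^{-1/2}\,\mathbf{W}^H\mathbf{C}_2\mathbf{W}\,(\mathbf{W}^H\mathbf{C}_1\mathbf{W})^{-1/2}$ does \emph{not} simplify to $\mathbf{W}^{-1}\mathbf{C}_1^{-1/2}\mathbf{C}_2\mathbf{C}_1^{-1/2}\mathbf{W}$ because the principal square root is not multiplicative under arbitrary congruence. My route around this is to note that $\|\log(\mathbf{A})\|_F^2=\sum_i (\log\lambda_i(\mathbf{A}))^2$ for any HPD $\mathbf{A}$, so the Riemannian distance depends only on the spectrum of $\mathbf{C}_1^{-1/2}\mathbf{C}_2\mathbf{C}_1^{-1/2}$. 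Since this matrix is similar to $\mathbf{C}_1^{-1}\mathbf{C}_2$, and likewise $(\mathbf{W}^H\mathbf{C}_1\mathbf{W})^{-1/2}(\mathbf{W}^H\mathbf{C}_2\mathbf{W})(\mathbf{W}^H\mathbf{C}_1\mathbf{W})^{-1/2}$ is similar to $(\mathbf{W}^H\mathbf{C}_1\mathbf{W})^{-1}(\mathbf{W}^H\mathbf{C}_2\mathbf{W})$, the displayed identity again reduces both sides to two similar matrices with identical spectra. Thus the Frobenius norm of the matrix logarithm agrees, giving the desired equality. The whole lemma then amounts to four short computations, each invoking spectrum-, trace-, or determinant-invariance in the appropriate manner.
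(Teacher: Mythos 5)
Your proposal is correct and follows essentially the same route as the paper's Appendix A: reduce each measure to a function of the spectrum (or trace/determinant) of the product $\mathbf{C}_1^{-1}\mathbf{C}_2$, observe that congruence on the pair induces a similarity $\mathbf{W}^{-1}(\mathbf{C}_1^{-1}\mathbf{C}_2)\mathbf{W}$ on that product, handle $\mathcal{D}_{\text{LD}}$ by determinant multiplicativity, and deduce $\mathcal{D}_{\text{JS}}$ from $\mathcal{D}_{\text{KL}}$ via linearity of the congruence. In particular, your treatment of the $\mathcal{D}_{\text{RD}}$ case — passing to $\sum_k\log^2\lambda_k$ rather than trying to push $\mathbf{W}$ through the matrix square root — is exactly the paper's argument.
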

\begin{proof}
	See Appendix~\ref{app:aff_invar_mea}.
\end{proof}
\begin{figure*}[htp]
	\centering
	\includegraphics[width=0.75\textwidth]{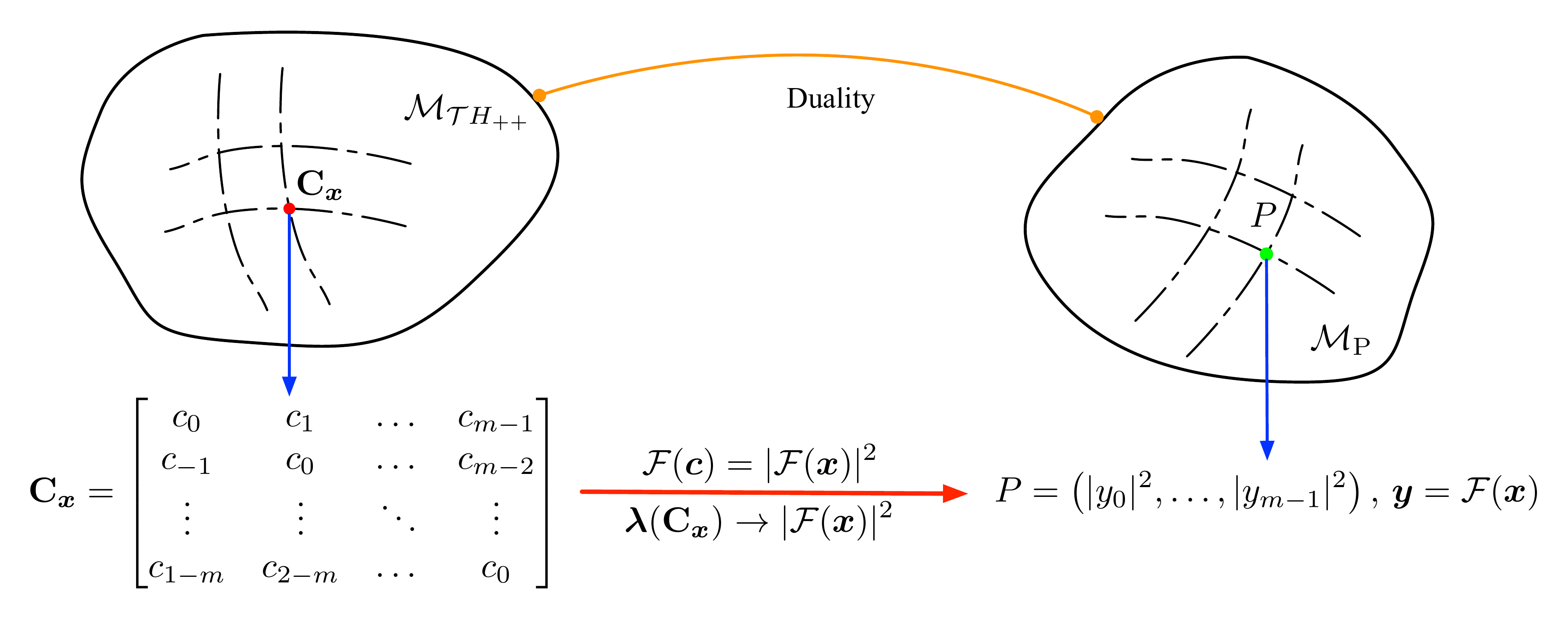}
	\caption{The relation between Toeplitz HPD matrix manifold and dual power spectrum manifold. There are two ways which can connect the two manifolds: the Fourier transforms of the elements of $\mathbf{C}_{\bm{x}}$ equals to the power spectrum of $\bm{x}$, and the eigenvalues of $\mathbf{C}_{\bm{x}}$ tend to the power spectrum of $\bm{x}$.}
	\label{fig:duality}
\end{figure*}
\subsection{Dual Power Spectrum Manifold}
By the Wiener-Khinchin theorem, the power spectrum is the Fourier transform of the autocorrelation. Moreover, the manifold composed by the power spectrum is the dual manifold of $\mathcal{M}_{\mathcal{T}H_{++}}$. Specifically, the dual power spectrum manifold $\mathcal{M}_{\text{P}}$ is defined as follows.
\begin{definition}[Dual Power Spectrum Manifold]
The dual power spectrum manifold is
	\begin{equation}
		\begin{split}
			\mathcal{M}_{\text{P}}=\left\{P=\left(|y_0|^2,\dots,|y_{m-1}|^2\right)\,\Big|\;\bm{y}=\mathcal{F}(\bm{x}),\bm{x}\in\mathbb{C}^m\backslash\bm{0}\right\},
		\end{split}
	\end{equation}
	where $\mathcal{F}(\bm{x})=(y_0,\dots,y_{m-1})$ that
	\begin{equation}
		y_k=\sum^m_{i=1}x_ie^{-j\frac{2\pi ki}{m}}.
	\end{equation}
	\label{def:dual}
\end{definition}
In fact, the points on Toeplitz HPD manifold and dual power spectrum manifold are both generated by the received signals $\bm{x}$, and they can be connected by the following lemma. 
\begin{lemma}
	Given $\bm{x}=(x_1,x_2,\dots,x_m)\in\mathbb{C}^m\backslash\bm{0}$, the power spectrum $P=\left(|y_0|^2,|y_1|^2,\dots,|y_{m-1}|^2\right)$ satisfies
	\begin{equation}
		|y_k|^2=\sum^{m-1}_{i=1-m}c_ie^{-j\frac{2\pi ki}{m}},
	\end{equation}
	\label{lem:c2p}
	where $\bm{c}=(c_0,c_1,\dots,c_{m-1})$ is the first row of $\mathbf{C}_{\bm{x}}$.
\end{lemma}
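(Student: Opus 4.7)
The plan is to compute $|y_k|^2 = y_k y_k^{*}$ directly from the DFT definition in Definition~\ref{def:dual} and then regroup the resulting double sum according to the lag $\ell = p-q$, matching what remains against the autocorrelation coefficients defined in (\ref{eq:correlation}). In effect, this is the discrete Wiener-Khinchin identity applied to the biased autocorrelation built from $\bm{x}$.

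First I would expand
\begin{equation*}
|y_k|^2 = \sum_{p=1}^{m}\sum_{q=1}^{m} x_p\, x_q^{*}\, e^{-j\frac{2\pi k(p-q)}{m}},
\end{equation*}
and then collect terms of equal lag $\ell = p-q$, with $\ell$ ranging over $1-m,\dots,m-1$, to obtain
\begin{equation*}
|y_k|^2 = \sum_{\ell=1-m}^{m-1} e^{-j\frac{2\pi k\ell}{m}} \sum_{\substack{1\le p,q\le m \\ p-q=\ell}} x_p\, x_q^{*}.
\end{equation*}
Next I would identify the inner sum with $c_\ell$. For $\ell\ge 0$, parameterizing by $q = p-\ell$ with $\ell+1 \le p \le m$ gives $\sum_{p=\ell+1}^{m} x_p x_{p-\ell}^{*}$, which is exactly $c_\ell$ by (\ref{eq:correlation}). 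For $\ell<0$, parameterizing by $p = q+\ell$ with $1-\ell \le q \le m$ gives $\sum_{q=1-\ell}^{m} x_{q+\ell} x_q^{*}$, which is the complex conjugate of $c_{-\ell}$ and therefore equals $c_\ell$ by the stated symmetry $c_\ell = c_{-\ell}^{*}$.

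The step most prone to error is the index bookkeeping in the negative-lag case, where one has to match the reindexed sum $\sum_{q} x_{q+\ell} x_q^{*}$ against the convention in (\ref{eq:correlation}); once the symmetry $c_\ell = c_{-\ell}^{*}$ is invoked, everything collapses to the claimed formula and no further argument is required.
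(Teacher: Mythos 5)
Your proposal is correct and follows essentially the same route as the paper: expand $|y_k|^2$ as a double sum over the DFT definition, regroup by the lag $\ell=p-q$, and identify the inner sum with $c_\ell$ via (\ref{eq:correlation}) and the Hermitian symmetry $c_\ell=c_{-\ell}^{*}$. Your treatment of the negative-lag case is in fact slightly more explicit than the paper's, which compresses both signs into a single reindexed sum.
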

\begin{proof}
	\begin{equation}
		\begin{split}
			|y_k|^2=&\sum^m_{i=1}x_ie^{-j\frac{2\pi ki}{m}}\left(\sum^m_{l=1}x_le^{-j\frac{2\pi kl}{m}}\right)^*\\
			=&\sum^m_{i=1}\sum^m_{l=1}x_ix_l^*e^{-j\frac{2\pi k(i-l)}{m}}\\
			=&\sum^{m-1}_{i=1-m}\sum_{1\le t\le m\atop1\le t-i\le m}x_tx_{t-i}^*e^{-j\frac{2\pi ki}{m}}\\
			=&\sum^{m-1}_{i=1-m}c_ie^{-j\frac{2\pi ki}{m}}\\
		\end{split}
		\label{eq:lm1_1}	
	\end{equation}
\end{proof}
Actually, according to the properties of autocorrelation sequences, the $c_k$ tends to zero, i.e., $\lim_{k\to\infty} c_k=0$. Then, the eigenvalue of $\mathbf{C}_{\bm{x}}$ satisfies the following theorem.
\begin{lemma}
	While $m\to\infty$, if $\sum_k c_k$ is absolute convergence, the eigenvalue $\lambda_k$ of $\mathbf{C}_{\bm{x}}$ satisfies
	\begin{equation}
		\lambda_k=|y_k|^2=\sum^{m-1}_{i=1-m}c_ie^{-j\frac{2\pi ki}{m}}\quad(k=0,\dots,m-1),
	\end{equation}
	i.e., $\Lambda=(\lambda_0,\lambda_1,\dots,\lambda_{m-1})=P=\left(|y_0|^2,|y_1|^2,\dots,|y_{m-1}|^2\right)$.
	\label{lem:lam2p}
\end{lemma}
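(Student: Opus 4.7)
The plan is to recognize the statement as the classical Szegő--Grenander asymptotic eigenvalue theorem for Toeplitz matrices and prove it by comparing $\mathbf{C}_{\bm{x}}$ to the circulant matrix built from the same autocorrelation sequence, which is exactly diagonalized by the DFT.

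First, I would construct the companion circulant matrix $\tilde{\mathbf{C}}_{\bm{x}}$ whose first row is obtained from $(c_0,c_{-1},\dots,c_{1-m})$ rearranged periodically. It is a standard fact that any circulant is diagonalized by the unitary DFT matrix, so the eigenvalues of $\tilde{\mathbf{C}}_{\bm{x}}$ are exactly $\tilde{\lambda}_k=\sum_{i=0}^{m-1}c_i\,e^{-j2\pi ki/m}$. Using the Hermitian symmetry $c_{-k}=c_k^*$ together with the $m$-periodicity of the exponential, this sum can be recast as $\sum_{i=1-m}^{m-1}c_i\,e^{-j2\pi ki/m}$, which by Lemma~\ref{lem:c2p} is exactly $|y_k|^2$. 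Thus the circulant version of the claim is immediate.

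Second, I would quantify the defect $\mathbf{\Delta}=\mathbf{C}_{\bm{x}}-\tilde{\mathbf{C}}_{\bm{x}}$. By construction, $\mathbf{C}_{\bm{x}}$ and $\tilde{\mathbf{C}}_{\bm{x}}$ agree in every entry except for the two triangular ``corner'' regions near the anti-diagonal, where a count shows that the nonzero entries of $\mathbf{\Delta}$ are indexed by $|i-j|\in\{1,\dots,m-1\}$ with multiplicity growing like $|i-j|$. A short calculation then gives the bound
\begin{equation}
\|\mathbf{\Delta}\|_F^{\,2}\;\le\;2\sum_{k=1}^{m-1}k\,|c_k|^2 .
\end{equation}
Under absolute summability of $\{c_k\}$, Kronecker's lemma yields $\|\mathbf{\Delta}\|_F^{\,2}/m\to 0$, i.e.\ $\mathbf{C}_{\bm{x}}$ and $\tilde{\mathbf{C}}_{\bm{x}}$ are asymptotically equivalent in the Grenander--Szeg\H{o} sense.

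Third, I would invoke the Hoffman--Wielandt inequality for the two Hermitian matrices $\mathbf{C}_{\bm{x}}$ and $\tilde{\mathbf{C}}_{\bm{x}}$: for some permutation $\pi$ of $\{0,\dots,m-1\}$,
\begin{equation}
\sum_{k=0}^{m-1}\bigl|\lambda_k-\tilde{\lambda}_{\pi(k)}\bigr|^2 \;\le\; \|\mathbf{\Delta}\|_F^{\,2},
\end{equation}
and combine this with the bound from step two to conclude that, asymptotically, each $\lambda_k$ coincides with the corresponding DFT value $|y_k|^2$, which is the assertion of the lemma.

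The main obstacle I expect is step three: Hoffman--Wielandt and the asymptotic equivalence immediately deliver equality of the eigenvalue \emph{distributions} (and of all moments), but the statement as written pairs each individual $\lambda_k$ with the specific DFT bin $|y_k|^2$. Getting from distributional matching to this indexed matching requires a careful ordering argument — either by matching the sorted eigenvalues via Weyl's inequality together with uniform continuity of the symbol, or by tightening to an $\ell^\infty$ bound on the defect under the absolute-summability assumption. All the earlier steps are routine; this last ordering/pairing step is where the real care is needed.
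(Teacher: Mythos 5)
Your proposal follows essentially the same route as the paper: both proofs compare $\mathbf{C}_{\bm{x}}$ to a companion circulant matrix whose eigenvalues are the DFT of the autocorrelation sequence, and then argue that the discrepancy vanishes as $m\to\infty$ under absolute summability of $\{c_k\}$. The difference is one of explicitness. The paper simply cites the Gray/Grenander--Szeg\H{o} theory of banded Toeplitz and circulant matrices to assert $\mathbf{C}_{\bm{x}}\to\mathbf{C}^*$ and that the eigenvalues of $\mathbf{C}_{\bm{x}}$ are given by the circulant formula (with the summation window $[1-m+l,\,l]$, subsequently extended to $[1-m,\,m-1]$ using the vanishing tails), whereas you unpack the mechanism behind that citation: the Frobenius-norm bound $\|\mathbf{\Delta}\|_F^2\le 2\sum_k k|c_k|^2=o(m)$ plus Hoffman--Wielandt. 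That buys a self-contained argument at the cost of some bookkeeping.

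The ``main obstacle'' you flag in your third step is real, and it is worth stating plainly that the paper's proof does not resolve it either. Asymptotic equivalence in the Grenander--Szeg\H{o} sense (equivalently, the Hoffman--Wielandt estimate divided by $m$) yields only that the eigenvalues of $\mathbf{C}_{\bm{x}}$ and the DFT values $|y_k|^2$ are \emph{asymptotically equally distributed} --- matching of empirical spectral distributions and of all moments --- not the per-index identity $\lambda_k=|y_k|^2$ asserted in the lemma. The paper obtains the indexed statement by fiat, writing $\lambda_k=\sum_i c_i e^{-j2\pi ki/m}$ directly from the circulant limit without justifying the pairing; your honest acknowledgment that an additional ordering argument (sorted-eigenvalue matching via Weyl's inequality together with continuity of the symbol, or an $\ell^\infty$ control on the perturbation) is needed makes your write-up, if anything, more faithful to what is actually provable. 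For the purposes of the rest of the paper only the distributional statement (and the limit $\mathbf{C}_{\bm{x}}\to\mathbf{F}\,\mathrm{diag}(|y_k|^2)\,\mathbf{F}^H$) is used, so this gap is not fatal downstream, but it should be closed or the lemma restated in the weaker, correct form.
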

\begin{proof}
	Let the circulant matrix 
	\begin{equation}
		\mathbf{C}^*=
		\begin{bmatrix}
			c_0 &  \cdots & c_l & c_{1-m+l} & \cdots & c_{-1}\\
			c_{-1} & \cdots & c_{l-1} & c_l & \cdots & c_{-2}\\
			\vdots  &  & \vdots & \vdots &  & \vdots\\
			c_1 &  \cdots & c_{1-m+l} & c_{2-m+l} & \cdots  & c_{0}
		\end{bmatrix},
	\end{equation}
	where $l=[m/2]$ is the nearest integer of $m/2$.\par
	As $\lim_{k\to\infty} c_k=0$ and the properties of banded Toeplitz matrix and circulant matrix\cite{gray2005toeplitz}, while $m\to\infty$, $\mathbf{C}_{\bm{x}}\to \mathbf{C}^*$ and the eigenvalues $\lambda_k\;(k=0,\dots,m-1)$ of $\mathbf{C}_{\bm{x}}$ satisfies
	\begin{equation}
		\lambda_k=\sum^{l}_{i=1-m+l}c_ie^{-j\frac{2\pi ki}{m}}.
	\end{equation}
	Because $\sum_k c_k$ is absolute convergence, we can get
	\begin{equation}
		\begin{split}
			&\lim_{m\to\infty}\left|\sum^{m-1}_{i=l+1}c_ie^{-j\frac{2\pi ki}{m}}\right|\le\lim_{m\to\infty}\sum^{m-1}_{i=l+1}|c_i|=0,\\
			&\lim_{m\to\infty}\left|\sum^{-m+l}_{i=1-m}c_ie^{-j\frac{2\pi ki}{m}}\right|\le\lim_{m\to\infty}\sum^{-m+l}_{i=1-m}|c_i|=0.
		\end{split}
	\end{equation}
	Then, the following equation is established under $m\to\infty$,
	\begin{equation}
		\lambda_k=\sum^{l}_{i=1-m+l}c_ie^{-j\frac{2\pi ki}{m}}=\sum^{m-1}_{i=1-m}c_ie^{-j\frac{2\pi ki}{m}}.
	\end{equation}
	And, as Lemma~\ref{lem:c2p}, we can get $\lambda_k=|y_k|^2$ and $\Lambda=(\lambda_0,\lambda_1,\dots,\lambda_{m-1})=P=\left(|y_0|^2,|y_1|^2,\dots,|y_{m-1}|^2\right)$.
\end{proof}
\begin{remark}
	Lemma~\ref{lem:lam2p} reveals that the eigenvalues of $\mathbf{C}_{\bm{x}}$, $\lambda_0,\dots,\lambda_{m-1}$, asymptotically equal to the power spectrum of $\bm{x}$. That means the point $\mathbf{C}_{\bm{x}}$ on $\mathcal{M}_{\mathcal{T}H_{++}}$ can be related to the $(\lambda_0,\dots,\lambda_{m-1})$ on $\mathcal{M}_{\text{P}}$. Moreover, as the properties of banded Toeplitz matrix and circulant matrix, while $m\to \infty$\cite{gray2005toeplitz},
	\begin{equation}
		\mathbf{C}_{\bm{x}}\to\mathbf{F}\,\text{diag}(\lambda_0,\lambda_1,\dots,\lambda_{m-1})\,\mathbf{F}^H,
	\end{equation}
	\begin{equation}
		\mathbf{C}_{\bm{x}}\to\mathbf{F}\,\text{diag}\left(|y_0|^2,|y_1|^2,\dots,|y_{m-1}|^2\right)\,\mathbf{F}^H,
	\end{equation}
	where $\mathbf{F}$ is a constant matrix that does not depend on $\mathbf{C}_{\bm{x}}$.
\end{remark}
 Totally, the duality between $\mathcal{M}_{\mathcal{T}H_{++}}$ and $\mathcal{M}_{\text{P}}$ are illustrated as Fig.\ref{fig:duality}. There are two crucial ways which can connect the two manifolds, and the asymptotical relationship ($\bm{\lambda}(\mathbf{C}_{\bm{x}})\to|\mathcal{F}(\bm{x})|^2$) will play a more important role in the next contents. 
\subsection{Induced Potential Function on Dual Power Spectrum Manifold}

In signal processing theory, the whitening processing is an effective method to eliminate the statistical characteristics of the noise in the whitened signal. Consider the matrices $\mathbf{C}_1,\mathbf{C}_2$ are generated by the signal $\bm{s}+\bm{w}$ and $\bm{w}$, respectively. The result of whitening processing can be regarded as $\mathbf{C}_1\mathbf{C}_2^{-1}$. This is because 
\begin{equation}
	\begin{split}
		&\mathbf{C}_1\mathbf{C}_2^{-1}\to \mathbf{F}\,\text{diag}(\lambda_0/\lambda'_0,\dots,\lambda_{m-1}/\lambda'_{m-1})\,\mathbf{F}^H\,(m\to\infty)\\
		&\begin{cases}
			\lambda_k=|y_k|^2,\quad\bm{y}=\mathcal{F}(\bm{s}+\bm{w}),\\
			\lambda'_k=|y'_k|^2,\quad\bm{y}'=\mathcal{F}(\bm{w}).\\
		\end{cases}
	\end{split}
\end{equation}
\par
We formulate the result of whitening processing to represent the difference between $\mathbf{C}_1,\mathbf{C}_2$, then the following mapping from $\mathcal{M}_{\mathcal{T}H_{++}}$ to $\mathcal{M}_{\text{P}}$ can be defined.
\begin{definition}[Whitening Spectrum Mapping]
	Define the whitening spectrum mapping $\mathcal{P}:\mathcal{M}_{\mathcal{T}H_{++}}\times\mathcal{M}_{\mathcal{T}H_{++}}\to \mathcal{M}_{\text{P}}$, that
	\begin{equation}
		\mathcal{P}(\mathbf{C}_1,\mathbf{C}_2)=\Lambda_{\mathbf{C}_1\mathbf{C}_2^{-1}}=(\lambda_0,\lambda_1,\dots,\lambda_{m-1}),
	\end{equation}
	where $\Lambda_{\mathbf{C}}=(\lambda_0,\lambda_1,\dots,\lambda_{m-1})$ are the eigenvalues of $\mathbf{C}$.
	\label{def:map}
\end{definition}
\begin{remark}
	As a well-known property, the eigenvalues of $\mathbf{A}\mathbf{B}$ equal to the eigenvalues of $\mathbf{B}\mathbf{A}$ when $\mathbf{A},\mathbf{B}\in \mathbb{C}^{m\times m}$. Therefore, the eigenvalues of $\mathbf{C}_1\mathbf{C}_2^{-1}=(\mathbf{C}_1\mathbf{C}_2^{-\frac{1}{2}})\mathbf{C}_2^{-\frac{1}{2}}$ equals to that of $\mathbf{C}_2^{-\frac{1}{2}}\mathbf{C}_1\mathbf{C}_2^{-\frac{1}{2}}$. That means $\mathcal{P}(\mathbf{C}_1,\mathbf{C}_2)=\Lambda_{\mathbf{C}'}$ is also established,
	where $\mathbf{C}'=\mathbf{C}_2^{-\frac{1}{2}}\mathbf{C}_1\mathbf{C}_2^{-\frac{1}{2}}$.
\end{remark}
Through such mapping, the difference between two matrices can also be quantified by a function on $\mathcal{M}_{\text{P}}$.
\begin{theorem}
	For each affine invariant measure $\mathcal{D}(\mathbf{C}_1,\mathbf{C}_2)$, there exists a function $\varphi_{\mathcal{D}}\,:\,\mathcal{M}_{\text{P}}\to\mathbb{R}$ satisfying 
	\begin{equation}
		\varphi_{\mathcal{D}}(\mathcal{P}(\mathbf{C}_1,\mathbf{C}_2))=\mathcal{D}(\mathbf{C}_1,\mathbf{C}_2),\quad \forall\mathbf{C}_1,\mathbf{C}_2\in\mathcal{M}_{\mathcal{T}H_{++}}.
	\end{equation}
	\label{the:afi2ipf}
\end{theorem}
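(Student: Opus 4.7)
The plan is to show that $\mathcal{D}(\mathbf{C}_1,\mathbf{C}_2)$ depends on the pair $(\mathbf{C}_1,\mathbf{C}_2)$ only through the whitening spectrum $\mathcal{P}(\mathbf{C}_1,\mathbf{C}_2)$. Once this invariance is established, $\varphi_{\mathcal{D}}$ can simply be read off as the value of $\mathcal{D}$ on a canonical representative of each fiber of $\mathcal{P}$. The engine of the reduction is, unsurprisingly, the affine invariance hypothesis of Definition~\ref{def:gm}, applied twice.

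I would proceed in two successive congruences. First, since $\mathbf{C}_2$ is HPD its Hermitian inverse square root $\mathbf{C}_2^{-1/2}\in\text{GL}(m,\mathbb{C})$ exists; taking $\mathbf{W}=\mathbf{C}_2^{-1/2}$ in Definition~\ref{def:gm} gives
\begin{equation*}
\mathcal{D}(\mathbf{C}_1,\mathbf{C}_2)=\mathcal{D}\bigl(\mathbf{C}_2^{-1/2}\mathbf{C}_1\mathbf{C}_2^{-1/2},\mathbf{I}\bigr),
\end{equation*}
and the matrix $\mathbf{C}':=\mathbf{C}_2^{-1/2}\mathbf{C}_1\mathbf{C}_2^{-1/2}$ is HPD with eigenvalues equal to the components of $\mathcal{P}(\mathbf{C}_1,\mathbf{C}_2)$ by the remark following Definition~\ref{def:map}. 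Second, writing the unitary diagonalisation $\mathbf{C}'=\mathbf{U}\,\text{diag}(\mathcal{P}(\mathbf{C}_1,\mathbf{C}_2))\,\mathbf{U}^H$ and applying affine invariance once more with $\mathbf{W}=\mathbf{U}$ yields $\mathcal{D}(\mathbf{C}_1,\mathbf{C}_2)=\mathcal{D}(\text{diag}(\mathcal{P}(\mathbf{C}_1,\mathbf{C}_2)),\mathbf{I})$. I would then simply set $\varphi_{\mathcal{D}}(P):=\mathcal{D}(\text{diag}(P),\mathbf{I})$ and the theorem follows by construction.

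A small loose end to tie up is order-independence of the resulting $\varphi_{\mathcal{D}}$: permuting the entries of $P\in\mathcal{M}_{\text{P}}$ corresponds to conjugating $\text{diag}(P)$ by a permutation matrix in $\text{GL}(m,\mathbb{C})$, so a third invocation of affine invariance shows that $\varphi_{\mathcal{D}}$ only sees the unordered spectrum, which is all $\mathcal{P}$ returns anyway. The one genuine subtlety — which I expect to be the main obstacle if the referee pushes on it — is that $\mathbf{C}'$ and $\text{diag}(P)$ are typically not Toeplitz, so the invariance identity of Definition~\ref{def:gm} must be interpreted as holding on the ambient HPD matrix space rather than strictly on $\mathcal{M}_{\mathcal{T}H_{++}}$. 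For the four measures of Lemma~\ref{lem:afigm} this causes no difficulty because their defining formulas extend transparently to all HPD matrices, but the argument genuinely relies on this ambient extension and it is worth stating once, up front, that Definition~\ref{def:gm} is to be read in this enlarged sense.
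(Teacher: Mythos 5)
Your proposal is correct and follows essentially the same route as the paper's own proof: whiten by congruence with $\mathbf{C}_2^{-1/2}$, diagonalise the resulting HPD matrix unitarily, apply affine invariance a second time, and define $\varphi_{\mathcal{D}}(P)=\mathcal{D}(\mathrm{diag}(P),\mathbf{I})$. Your two added remarks --- permutation invariance of the construction, and the fact that $\mathbf{C}_2^{-1/2}\mathbf{C}_1\mathbf{C}_2^{-1/2}$ and $\mathrm{diag}(P)$ leave the Toeplitz class so that Definition~\ref{def:gm} must be read on the ambient HPD cone --- are both legitimate points that the paper's proof silently assumes, so they strengthen rather than alter the argument.
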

\begin{proof}
	As affine invariant property, $\forall\mathbf{C}_1,\mathbf{C}_2\in\mathcal{M}_{\mathcal{T}H_{++}}$,
	\begin{equation}
		\mathcal{D}(\mathbf{C}_1,\mathbf{C}_2)=\mathcal{D}(\mathbf{C}_2^{-\frac{1}{2}}\mathbf{C}_1\mathbf{C}_2^{-\frac{1}{2}},\mathbf{I}).
	\end{equation}
	Because $\mathbf{C}_2^{-\frac{1}{2}}\mathbf{C}_1\mathbf{C}_2^{-\frac{1}{2}}$ is a symmetric matrix, it can be orthogonally decomposed to
	\begin{equation}
		\mathbf{C}_2^{-\frac{1}{2}}\mathbf{C}_1\mathbf{C}_2^{-\frac{1}{2}}=\mathbf{Q}^H\,\text{diag}\,(\lambda_0,\dots,\lambda_{m-1})\mathbf{Q},
	\end{equation}
	where $\mathbf{Q}^H\mathbf{Q}=\mathbf{Q}\mathbf{Q}^H=\mathbf{I}$ and $\lambda_0,\dots,\lambda_{m-1}$ are the eigenvalues of $\mathbf{C}_2^{-\frac{1}{2}}\mathbf{C}_1\mathbf{C}_2^{-\frac{1}{2}}$. Therefore,
	\begin{equation}
		\begin{split}
			\mathcal{D}(\mathbf{C}_1,\mathbf{C}_2)=&\mathcal{D}(\mathbf{Q}\mathbf{C}_2^{-\frac{1}{2}}\mathbf{C}_1\mathbf{C}_2^{-\frac{1}{2}}\mathbf{Q}^H,\mathbf{Q}\mathbf{Q}^H)\\
			=&\mathcal{D}(\text{diag}\,(\lambda_0,\dots,\lambda_{m-1}),\mathbf{I}).
		\end{split}
	\end{equation}
	That means $\mathcal{D}(\mathbf{C}_1,\mathbf{C}_2)$ only depending on the eigenvalues of $\mathbf{C}_2^{-\frac{1}{2}}\mathbf{C}_1\mathbf{C}_2^{-\frac{1}{2}}$, i.e., $\Lambda_{\mathbf{C}_1\mathbf{C}_2^{-1}}$.\par
	Then, let $\varphi_{\mathcal{D}}(\lambda_0,\dots,\lambda_{m-1})=\mathcal{D}(\text{diag}\,(\lambda_0,\dots,\lambda_{m-1}),\mathbf{I})$, we can get $\varphi_{\mathcal{D}}(\mathcal{P}(\mathbf{C}_1,\mathbf{C}_2))=\mathcal{D}(\mathbf{C}_1,\mathbf{C}_2)$.
\end{proof}
\begin{remark}
	The function $\varphi_{\mathcal{D}}$ can be constructed by $\varphi_{\mathcal{D}}(\mathcal{P}(\mathbf{C}_1,\mathbf{C}_2))=\mathcal{D}(\text{diag}\,(\lambda_0,\dots,\lambda_{m-1}),\mathbf{I})\;(\mathcal{P}(\mathbf{C}_1,\mathbf{C}_2)=\Lambda_{\mathbf{C}_1\mathbf{C}_2^{-1}}=(\lambda_0,\dots,\lambda_{m-1}))$.
\end{remark}
\begin{figure}[htp]
	\centering
	\includegraphics[width=0.37\textwidth]{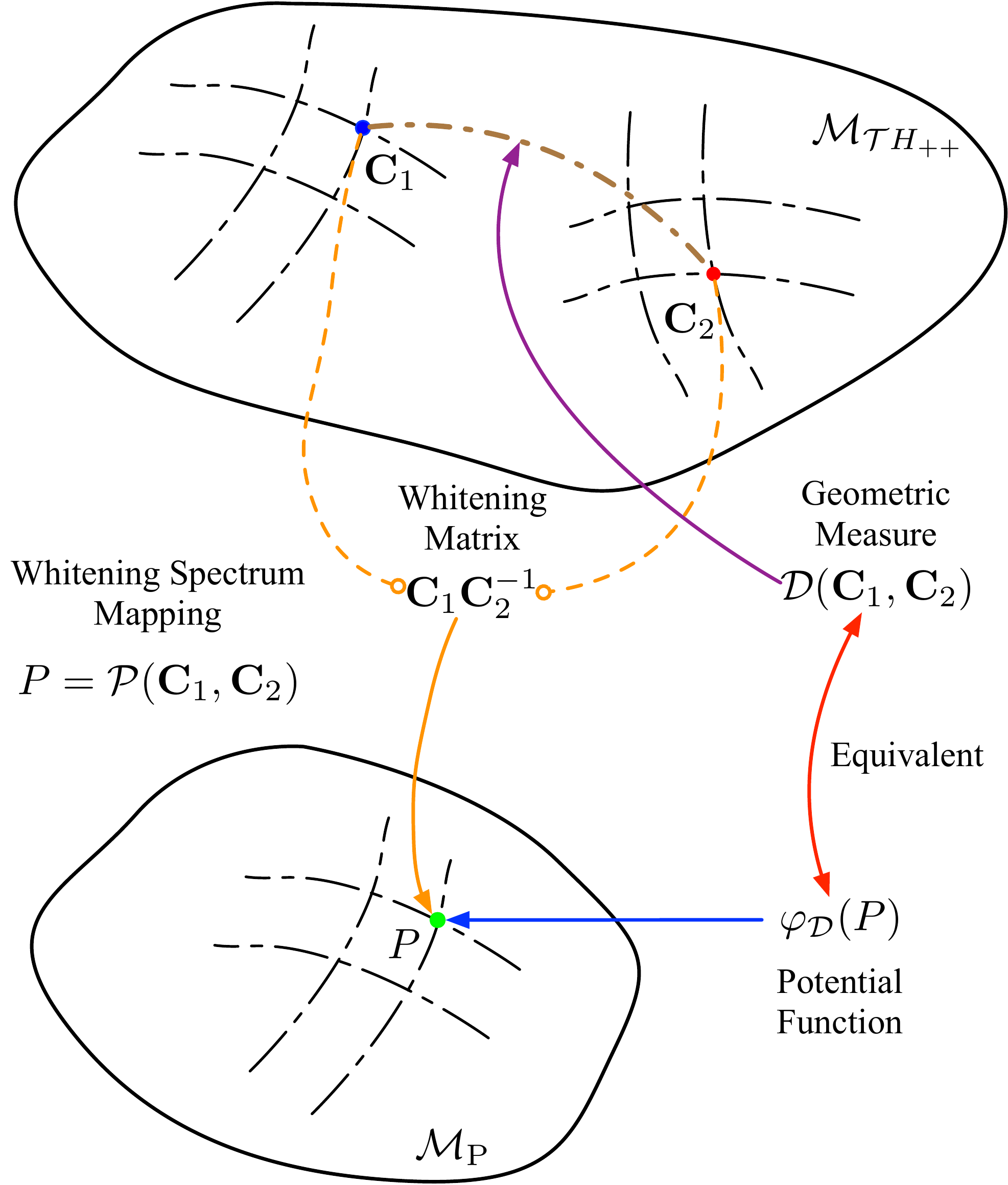}
	\caption{The relation between the geometric measure and induced potential function.}
	\label{fig:gm2pf}
\end{figure}
\begin{definition}[Induced Potential Function]
	For the affine invariant measure $\mathcal{D}$, the function $\varphi_{\mathcal{D}}$ is called the $\mathcal{D}$ induced potential function.
	\label{def:pf}
\end{definition}
Totally, the relation between the geometric measure and induced potential function is shown in Fig.\ref{fig:gm2pf}.\par
In addition, the induced potential functions of the typical geometric measures are as follows:
\begin{itemize}
	\item Riemannian Distance:
	\begin{equation}
		\varphi_{\mathcal{D}_{\text{RD}}}(\lambda_0,\dots,\lambda_{m-1})=\sum^{m-1}_{k=0} \log^2\lambda_k;
	\end{equation}
	\item Kullback-Leibler Divergence:
	\begin{equation}
		\varphi_{\mathcal{D}_{\text{KL}}}(\lambda_0,\dots,\lambda_{m-1})=\sum^{m-1}_{k=0} \lambda_k-1-\log\lambda_k;
	\end{equation}
	\item Jensen-Shannon Divergence:
	\begin{equation}
		\begin{split}
			\varphi_{\mathcal{D}_{\text{JS}}}(\lambda_0,\dots,\lambda_{m-1})=\sum^{m-1}_{k=0} \log\frac{\lambda_k+1}{2\sqrt{\lambda_k}};
		\end{split}
	\end{equation}
	\item log-determinant Divergence:
	\begin{equation}
		\varphi_{\mathcal{D}_{\text{LD}}}(\lambda_0,\dots,\lambda_{m-1})=\sum^{m-1}_{k=0} \log\frac{\lambda_k+1}{2\sqrt{\lambda_k}}.
	\end{equation}
\end{itemize}
The derivation of the induced potential functions is presented in Appendix~\ref{app:potential_function}. According to the induced potential function and Theorem~\ref{the:afi2ipf}, JSD equals to LDD for each pair of $\mathbf{C}_1,\mathbf{C}_2\in\mathcal{M}_{\mathcal{T}H_{++}}$. Therefore, only the LDD is considered in the next discussion.\par
In summary, the induced potential function is equivalent to the corresponding geometric measure on the Toeplitz HPD manifold. Moreover, compared with the RD, KLD and LDD, the corresponding induced potential functions have simpler forms, that means the operations and analysis of the induced potential functions are easier to implement than geometric measures.
\section{Enhanced Matrix CFAR Detection Based on Dual Power Spectrum Manifold}\label{sec:enhancement}
This section formulate the enhancement of the matrix CFAR detection as an optimization problem and then transform it to an equivalent optimization problem on the dual power spectrum manifold, which has a simpler form and is easier to solve than the original optimization. 
Finally, some examples of the enhancement, which are solved by the equivalent transformation, are provided. 
\subsection{Matrix CFAR Detection}
In the geometric detector, the difference between the Toeplitz covariance matrix of the primary data and the secondary data is quantified by the selected geometric measure, then the geometric measure is compared with the threshold to make the decision. Totally, the matrix CFAR detection scheme is depicted in Fig.\ref{fig:CFAR}.\par
\begin{figure}[htp]
	\centering
	\includegraphics[width=0.45\textwidth]{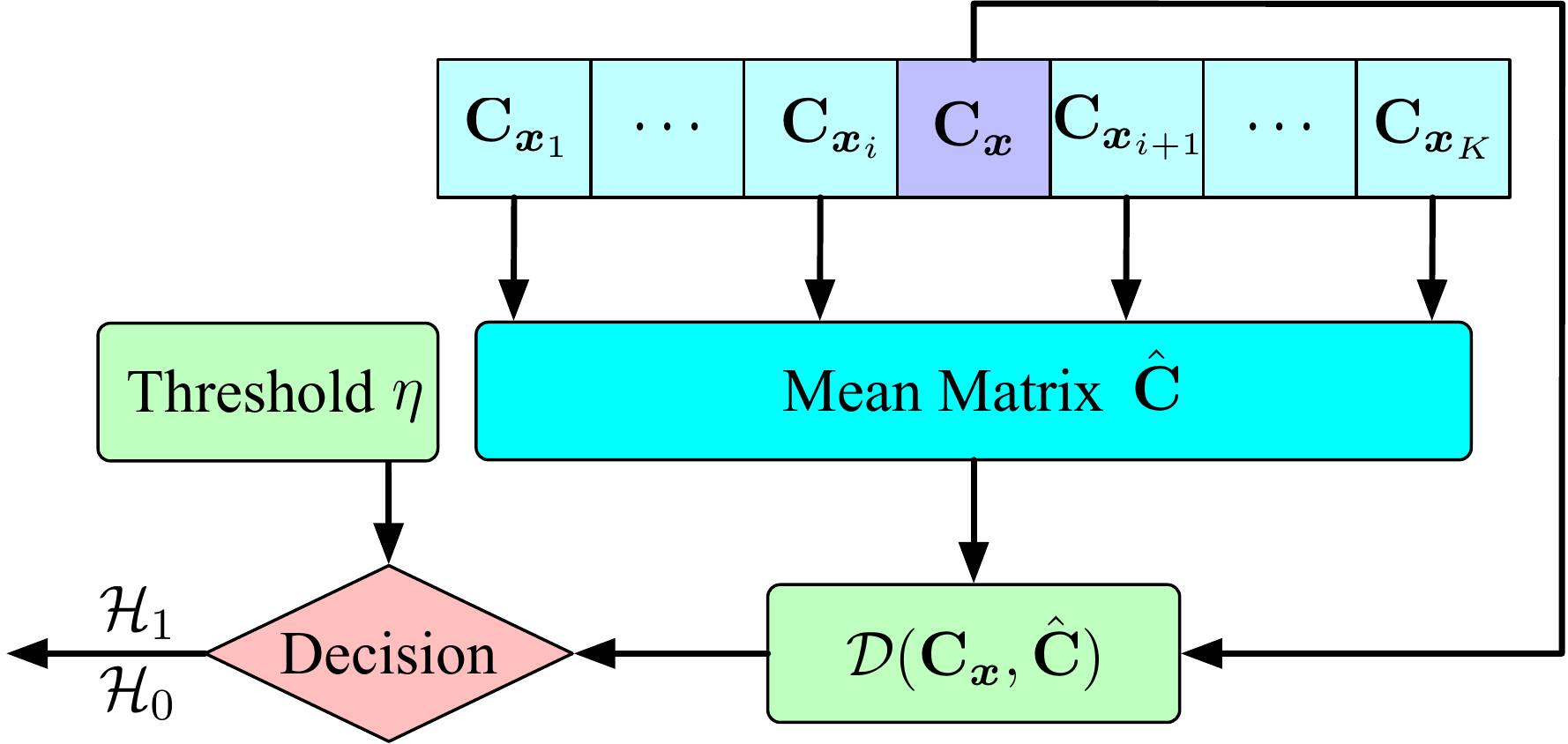}
	\caption{The diagram of the matrix CFAR detection scheme, where $\mathbf{C}_{\bm{x}}$ denotes the covariance matrix of the primary data, and $\mathbf{C}_{\bm{x}_k}$ is the covariance matrix of the secondary data.}
	\label{fig:CFAR}
\end{figure}
The detection is performed by a sliding window, which divides the data into the primary data and secondary data. The detailed steps are summarized as follows:
\begin{enumerate}
	\item The Toeplitz covariance matrices are calculated using the primary data and the secondary data, respectively. 
	\item The mean matrix $\hat{\mathbf{C}}$ of secondary data is estimated using $\mathbf{C}_{\bm{x}_1},\dots,\mathbf{C}_{\bm{x}_K}$.
	\item The selected geometric measure is employed to quantify the difference between $\mathbf{C}_{\bm{x}}$ and $\hat{\mathbf{C}}$.
	\item The decision is made by the comparison between the test statistics and the threshold, i.e.,
	\begin{equation}
		\mathcal{D}(\mathbf{C}_{\bm{x}},\hat{\mathbf{C}})\underset{\mathcal{H}_0}{\overset{\mathcal{H}_1}{\gtrless}}\eta.
		\label{eq:dec_rul}
	\end{equation}
	where $\eta$ is the threshold under false alarm probability $P_f$ 
	and often obtained by $100P_f^{-1}$ independent Monte Carlo runs\cite{zhao2019}.
\end{enumerate}
\begin{table*}[hbt]
  \caption{Solution or iteration formula of estimate $\hat{\mathbf{C}}$ based on different geometric measures.}
  \label{tab:mean}
  \centering
  \begin{tabular}{lc}
  	\toprule
   	Geometric measure & Formula\\
   	\midrule
   	Riemannian Distance & $\hat{\mathbf{C}}_{t+1}=\hat{\mathbf{C}}_t^{\frac{1}{2}}\exp\left(-\epsilon_t\sum\limits^K_{k=1}\log\left(\hat{\mathbf{C}}_t^{-\frac{1}{2}}\mathbf{C}_{\bm{x}_k}\hat{\mathbf{C}}_t^{-\frac{1}{2}}\right)\right)\hat{\mathbf{C}}_t^{\frac{1}{2}}$ \\
    KL Divergence & $\hat{\mathbf{C}}=\left(\frac{1}{K}\sum\limits^K_{k=1}\mathbf{C}^{-1}_{\bm{x}_k}\right)^{-1}$ \\
    log-determinant Divergence & $\hat{\mathbf{C}}_{t+1}=\left(\frac{1}{K}\sum\limits^K_{k=1}\left(\frac{\mathbf{C}_{\bm{x}_k}+\hat{\mathbf{C}}_t}{2}\right)^{-1}\right)^{-1}$ \\
    \bottomrule
  \end{tabular}
\end{table*}
The estimate $\hat{\mathbf{C}}$ in step 2 is often deduced from the following optimization problem
\begin{equation}
	\hat{\mathbf{C}}=\arg\min_{\mathbf{C}}\sum^K_{k=1}\mathcal{D}(\mathbf{C}_{\bm{x}_k},\mathbf{C})
\end{equation}
the solutions or iteration formulas with different geometric measures are listed in the Table~\ref{tab:mean}.\par
\subsection{Enhancement of Matrix CFAR Detection}\label{sec:enhance_detector}
The dimensionality reduction on HPD matrix is an effective way to make the $\mathcal{H}_0$ and $\mathcal{H}_1$ data more discriminative by the enhanced mapping\cite{Harandi2018}, as shown in Fig.\ref{fig:dim_reduc}.\par
\begin{figure}[htp]
	\centering
	\includegraphics[width=0.49\textwidth]{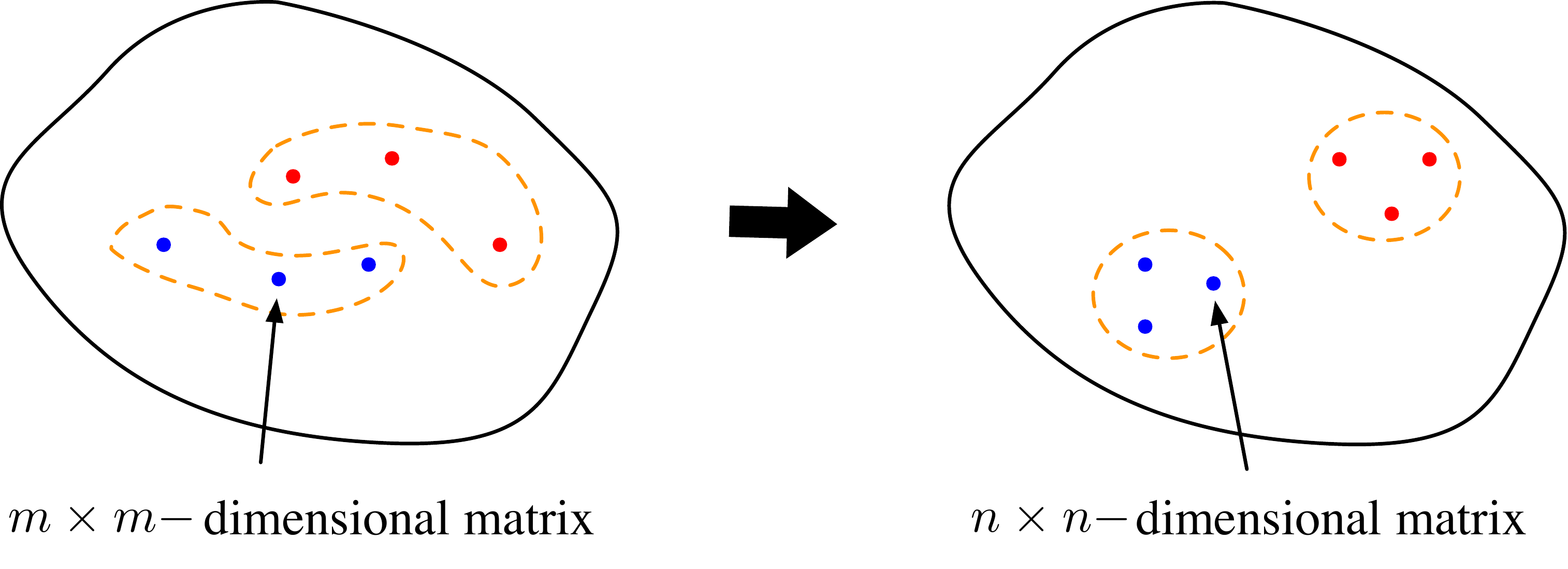}
	\caption{Dimensionality reduction on Toeplitz HPD manifold: The data on the high-dimensional Toeplitz HPD manifold can be mapped to a lower-dimensional manifold, where the $\mathcal{H}_0$ data and $\mathcal{H}_1$ data are more discriminative.}
	\label{fig:dim_reduc}
\end{figure}
Define the enhanced mapping as
\begin{equation}
	\Pi_{\mathbf{W}}(\mathbf{C})=\mathbf{W}^H\mathbf{C}\mathbf{W},
\end{equation}
where $\mathbf{W}\in\mathbb{C}^{m\times n}$ is a full column rank matrix, i.e., $\text{rk}(\mathbf{W})=n$. In this paper, we only consider $n\le m/2$. By the enhanced mapping, the modified geometric measure between $\mathbf{C}_1$ and $\mathbf{C}_2$ is\cite{Yang2020}
\begin{equation}
	\mathcal{D}_m(\Pi_{\mathbf{W}}(\mathbf{C}_1),\Pi_{\mathbf{W}}(\mathbf{C}_2))=\mathcal{D}_n(\mathbf{W}^H\mathbf{C}_1\mathbf{W},\mathbf{W}^H\mathbf{C}_2\mathbf{W}),
\end{equation}
where $\mathcal{D}$ is an affine invariant geometric measure, and $\mathcal{D}_m,\mathcal{D}_n$ is the implementation of $\mathcal{D}$ in $m\times m$ and $n\times n$ dimensional matrix, respectively.\par
Without loss of generality, for any fixed $n\le m/2$, the crucial point of the enhanced geometric measure is the full column rank mapping matrix $\mathbf{W}$, which is determined by the following optimization problem
\begin{equation}
	\begin{split}
		\max_{\mathbf{W}}&\;f(\mathcal{D}_n(\mathbf{W}^H\mathbf{C}_1\mathbf{W},\mathbf{W}^H\mathbf{C}_2\mathbf{W}))\\
		\text{s.t.}&\quad\mathbf{W}\in\mathbb{C}^{m\times n},\,\text{rk}(\mathbf{W})=n,
	\end{split}
	\label{eq:op_THPD}
\end{equation}
where $f$ is a real function in $\mathbb{R}$. In most cases, $f$ is the trivial function $f(x)=x$.\par
Let $\mathbf{C}_1=\mathbf{C}_{\bm{x}}$ and $\mathbf{C}_2=\hat{\mathbf{C}}$, the detection decision of the enhanced detection method is given by
\begin{equation}
	\mathcal{D}_n(\Pi_{\mathbf{W}}(\mathbf{C}_{\bm{x}}),\Pi_{\mathbf{W}}(\hat{\mathbf{C}}))\underset{\mathcal{H}_0}{\overset{\mathcal{H}_1}{\gtrless}}\eta',
\end{equation}
where $\eta'$ is the threshold of the enhanced detector and still obtained by independent Monte Carlo runs.\par
\subsection{Equivalent Enhancement Based on Dual Power Spectrum Manifold}
The optimization (\ref{eq:op_THPD}) is an optimization problem on the Grassmann manifold and often solved by the Riemannian conjugate gradient method\cite{Harandi2018} which requires many iterations and results in the expensive computational burden. 
This part aims to obtain a simpler solution by equivalently transforming the optimization (\ref{eq:op_THPD}) onto $\mathcal{M}_{\text{P}}$.\par
According to Theorem~\ref{the:afi2ipf}, there is a induced potential function $\varphi_{\mathcal{D}_m}$ for the geometric measure $\mathcal{D}_m$ satisfying $\varphi_{\mathcal{D}_m}(\mathcal{P}(\mathbf{C}_1,\mathbf{C}_2))=\mathcal{D}_m(\mathbf{C}_1,\mathbf{C}_2)$. Moreover, considering the enhanced mapping, the following equation is also established
\begin{equation}
	\varphi_{\mathcal{D}_n}(\mathcal{P}(\Pi_{\mathbf{W}}(\mathbf{C}_1),\Pi_{\mathbf{W}}(\mathbf{C}_2)))=\mathcal{D}_n(\Pi_{\mathbf{W}}(\mathbf{C}_1),\Pi_{\mathbf{W}}(\mathbf{C}_2)).
	\label{eq:ps_w_c1c2}
\end{equation}
By the mapping $\mathbf{W}$, the power spectrum $\mathcal{P}(\mathbf{C}_1,\mathbf{C}_2)$ is transformed to $\mathcal{P}(\Pi_{\mathbf{W}}(\mathbf{C}_1),\Pi_{\mathbf{W}}(\mathbf{C}_2))$. In oder to transform the optimization (\ref{eq:op_THPD}) onto $\mathcal{M}_{\text{P}}$, the impact of the mapping $\mathbf{W}$ on $\mathcal{M}_{\text{P}}$ should be studies firstly. In the next contents, the relationship between $\mathcal{P}(\mathbf{C}_1,\mathbf{C}_2)$ and $\mathcal{P}(\Pi_{\mathbf{W}}(\mathbf{C}_1),\Pi_{\mathbf{W}}(\mathbf{C}_2))$ would be discussed.\par
For the matrices $\mathbf{C}_1,\mathbf{C}_2\in\mathcal{M}_{\mathcal{T}H_{++}}$, let $\mathbf{C}'_1=\mathbf{C}_2^{-\frac{1}{2}}\mathbf{C}_1\mathbf{C}_2^{-\frac{1}{2}}$, $\mathbf{C}'_2=\mathbf{I}$, $\mathbf{W}'=\mathbf{C}_2^{\frac{1}{2}}\mathbf{W}$ and suppose the QR decomposition of $\mathbf{W}'$ is $\mathbf{W}'=\mathbf{Q}\mathbf{R}\;(\mathbf{Q}^H\mathbf{Q}=\mathbf{I},\,\mathbf{R}\in GL(m,\mathbb{C}))$, the following equation is established,
\begin{equation}
	\begin{split}
		&\mathcal{D}_n(\mathbf{W}^H\mathbf{C}_1\mathbf{W},\mathbf{W}^H\mathbf{C}_2\mathbf{W})\\
		=&\mathcal{D}_n(\mathbf{W}'^H\mathbf{C}_2^{-\frac{1}{2}}\mathbf{C}_1\mathbf{C}_2^{-\frac{1}{2}}\mathbf{W}',\mathbf{W}'^H\mathbf{W}')\\
		=&\mathcal{D}_n(\mathbf{R}^H\mathbf{Q}^H\mathbf{C}_2^{-\frac{1}{2}}\mathbf{C}_1\mathbf{C}_2^{-\frac{1}{2}}\mathbf{Q}\mathbf{R},\mathbf{R}^H\mathbf{R})\\
		\overset{(a)}{=}&\mathcal{D}_n(\mathbf{Q}^H\mathbf{C}_2^{-\frac{1}{2}}\mathbf{C}_1\mathbf{C}_2^{-\frac{1}{2}}\mathbf{Q},\mathbf{I}),
	\end{split}
	\label{eq:W2Q}
\end{equation}
where $(a)$ is established as the affine invariant property of $\mathcal{D}_n$.
Overall, equation (\ref{eq:W2Q}) means that finding the enhanced mapping $\mathbf{W}$ is equivalent to figuring out the orthogonal mapping $\mathbf{Q}$, and the relationship between them is shown in Fig.\ref{fig:Enhan_Map}.\par
\begin{figure}[htp]
	\centering
	\includegraphics[width=0.4\textwidth]{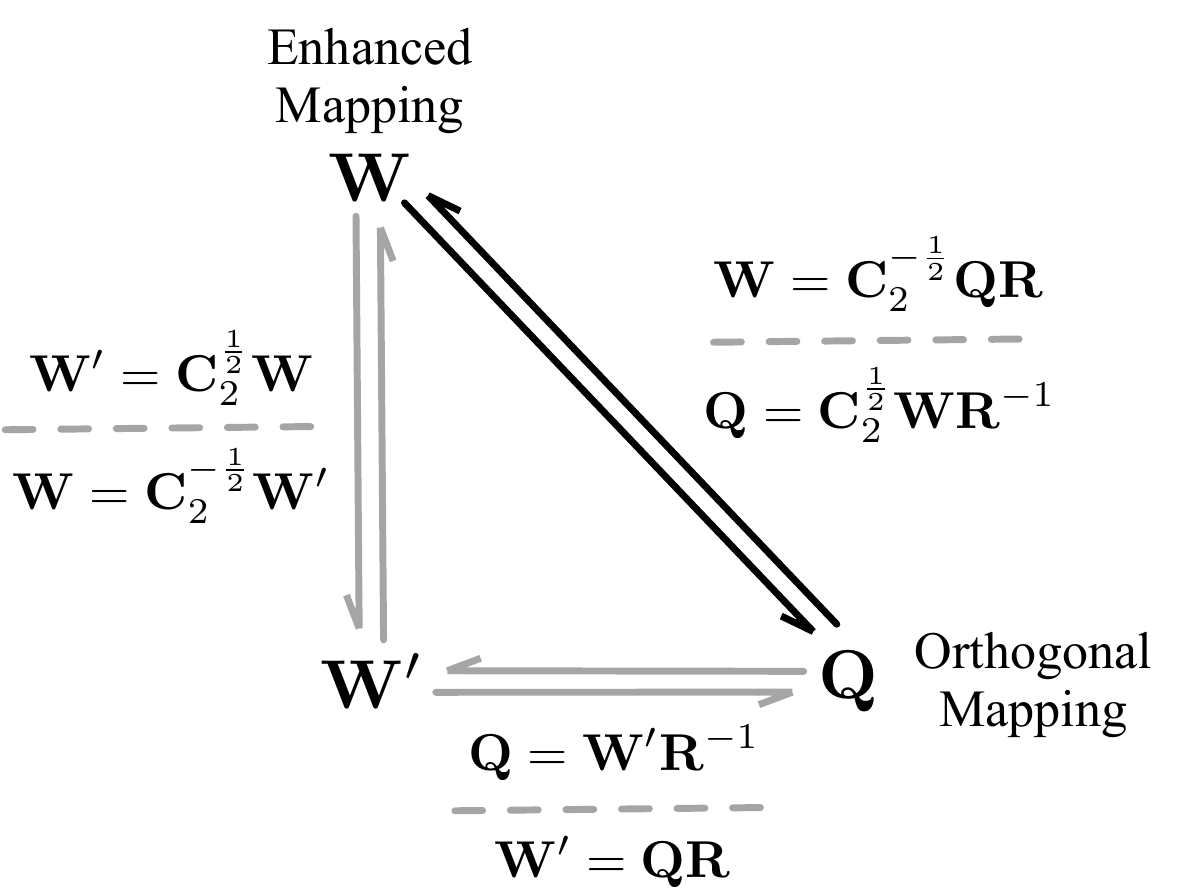}
	\caption{The relationship between the enhanced mapping $\mathbf{W}$ and the orthogonal mapping $\mathbf{Q}$.}
	\label{fig:Enhan_Map}
\end{figure}
Combining the (\ref{eq:ps_w_c1c2}) and (\ref{eq:W2Q}), the power spectrum $\mathcal{P}(\Pi_{\mathbf{W}}(\mathbf{C}_1),\Pi_{\mathbf{W}}(\mathbf{C}_2))$ can be formulated as
\begin{equation}
	\mathcal{P}(\Pi_{\mathbf{W}}(\mathbf{C}_1),\Pi_{\mathbf{W}}(\mathbf{C}_2))=\mathcal{P}(\mathbf{Q}^H\mathbf{C}'\mathbf{Q},\mathbf{I})=\Lambda_{\mathbf{Q}^H\mathbf{C}'\mathbf{Q}},
\end{equation}
where $\mathbf{C}'=\mathbf{C}_2^{-\frac{1}{2}}\mathbf{C}_1\mathbf{C}_2^{-\frac{1}{2}}$. 
\par
Therefore, by the mapping $\mathbf{W}$, the power spectrum is changed from the eigenvalues of $\mathbf{C}'$ to that of $\mathbf{Q}^H\mathbf{C}'\mathbf{Q}$. In order to investigate the more detailed impact of the mapping $\mathbf{W}$, the varying rule of the eigenvalues  is discussed in the following contents. Firstly, the Cauchy interlace theorem is introduced.
\begin{theorem}[Cauchy Interlace Theorem\cite{CauthyThe}]
	Let $\mathbf{A}$ be a Hermitian matrix of order $m$, and let $\mathbf{B}$ be a principle submatrix of $\mathbf{A}$ of order $m-1$. If $\lambda_{m-1}\le\lambda_{m-2}\le\cdots\le\lambda_0$ lists the eigenvalues of $\mathbf{A}$ and $\mu_{m-2}\le\mu_{m-3}\le\cdots\le\mu_0$ the eigenvalues of $\mathbf{B}$, then $\lambda_{m-1}\le\mu_{m-2}\le\lambda_{m-2}\le\cdots\le\lambda_1\le\mu_0\le\lambda_0$.
	\label{the:cauthy}
\end{theorem}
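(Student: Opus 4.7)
The plan is to prove the Cauchy interlacing inequalities $\lambda_{k+1}\le\mu_k\le\lambda_k$ for each $k=0,\dots,m-2$ by invoking the Courant--Fischer min--max characterization of eigenvalues of a Hermitian matrix. Since any principal submatrix of order $m-1$ can be brought into the leading position by a permutation similarity (which preserves both the Hermitian property and the spectrum of $\mathbf{A}$), I would without loss of generality assume that $\mathbf{B}$ is the leading $(m-1)\times(m-1)$ block of $\mathbf{A}$. Then the natural embedding $V=\{x\in\mathbb{C}^m:x_m=0\}$ is an $(m-1)$-dimensional subspace of $\mathbb{C}^m$ on which the quadratic forms agree: $x^H\mathbf{A}x = \tilde{x}^H\mathbf{B}\tilde{x}$, where $\tilde{x}\in\mathbb{C}^{m-1}$ is $x$ with its last zero coordinate dropped. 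This identification is what lets the two variational problems talk to each other.

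For the upper bound $\mu_k\le\lambda_k$, I would use the max--min form
\begin{equation*}
\lambda_k=\max_{\dim S=k+1}\;\min_{0\ne x\in S}\frac{x^H\mathbf{A}x}{x^Hx},
\end{equation*}
with the analogous formula for $\mu_k$ where $S$ ranges over $(k+1)$-dimensional subspaces of $\mathbb{C}^{m-1}\cong V$. Since every such subspace of $V$ is in particular a $(k+1)$-dimensional subspace of $\mathbb{C}^m$, the maximum defining $\mu_k$ is taken over a subfamily of the one defining $\lambda_k$, giving $\mu_k\le\lambda_k$. For the lower bound $\lambda_{k+1}\le\mu_k$, I would switch to the min--max form
\begin{equation*}
\lambda_{k+1}=\min_{\dim S=m-k-1}\;\max_{0\ne x\in S}\frac{x^H\mathbf{A}x}{x^Hx},
\end{equation*}
and observe that $\mu_k$ is given by the same formula but with $S$ restricted to $(m-1-k)$-dimensional subspaces of $V$. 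Each such $S$ is again a valid $(m-k-1)$-dimensional subspace of $\mathbb{C}^m$, so the minimum defining $\mu_k$ is over a subfamily of the one defining $\lambda_{k+1}$, forcing $\mu_k\ge\lambda_{k+1}$. Chaining these inequalities for $k=0,1,\dots,m-2$ yields the full interlacing pattern stated in the theorem.

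The only real subtlety, and the step I would be most careful about, is the bookkeeping: the paper orders eigenvalues in the decreasing convention $\lambda_{m-1}\le\cdots\le\lambda_0$, so I must consistently pair each $\mu_k$ with the correct pair $(\lambda_{k+1},\lambda_k)$ and use the ``$k+1$'' versus ``$m-k-1$'' dimensions appropriate to each variant of Courant--Fischer. There is no genuine analytic obstacle, since the Rayleigh quotient machinery does all the work; the proof is essentially a restriction--of--domain argument, and the WLOG reduction via a permutation matrix handles the fact that the stated $\mathbf{B}$ need not be the leading submatrix.
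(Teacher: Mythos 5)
Your proof is correct. Note that the paper does not actually prove this statement: it is quoted as a classical result with a citation, and only the subsequent corollary (iterating the interlacing $m-n$ times) is argued in the text. Your Courant--Fischer argument is the standard proof and is carried out with the right bookkeeping: the permutation-similarity reduction to the leading block is legitimate (it preserves the spectra of both $\mathbf{A}$ and $\mathbf{B}$), the max--min form with $(k+1)$-dimensional subspaces gives $\mu_k\le\lambda_k$ by restricting the family of competitors to subspaces of $V=\{x:x_m=0\}$, and the min--max form with $(m-k-1)$-dimensional subspaces gives $\lambda_{k+1}\le\mu_k$ by the same restriction-of-domain reasoning, the dimensions matching because $(m-1)-k=m-k-1$. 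Chaining over $k=0,\dots,m-2$ yields exactly the stated interlacing in the paper's decreasing-index convention.
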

\begin{corollary}
	When the order of the principle submatrix $\mathbf{B}$ is $n$, then $\lambda_{i+m-n}\le\mu_i\le\lambda_i\,(0\le i\le n-1)$.
\end{corollary}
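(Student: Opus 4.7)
The plan is to prove the corollary by iterating the Cauchy Interlace Theorem. The idea is simple: a principal submatrix of order $n$ sits inside $\mathbf{A}$ as the bottom of a nested chain of principal submatrices whose orders descend one at a time from $m$ down to $n$, and each adjacent pair is covered by Theorem~\ref{the:cauthy}.

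First, I would construct a chain $\mathbf{A}=\mathbf{A}_m\supset\mathbf{A}_{m-1}\supset\cdots\supset\mathbf{A}_n=\mathbf{B}$, where $\mathbf{A}_k$ is a principal submatrix of $\mathbf{A}_{k+1}$ of order $k$. This is always possible: starting from $\mathbf{B}$, which is obtained from $\mathbf{A}$ by deleting some $m-n$ rows together with the correspondingly indexed columns, one adjoins those deleted rows and columns back one index at a time. Let $\lambda^{(k)}_0\ge\lambda^{(k)}_1\ge\cdots\ge\lambda^{(k)}_{k-1}$ denote the eigenvalues of $\mathbf{A}_k$, so that $\lambda^{(m)}_i=\lambda_i$ and $\lambda^{(n)}_i=\mu_i$.

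Second, for each adjacent pair $(\mathbf{A}_{k+1},\mathbf{A}_k)$, Theorem~\ref{the:cauthy} yields the single-step interlacing
\begin{equation}
\lambda^{(k+1)}_{i+1}\le\lambda^{(k)}_i\le\lambda^{(k+1)}_i,\qquad 0\le i\le k-1.
\end{equation}
Chaining the right-hand inequalities from $k=n$ up to $k=m-1$ at a fixed $i$ gives $\mu_i=\lambda^{(n)}_i\le\lambda^{(n+1)}_i\le\cdots\le\lambda^{(m)}_i=\lambda_i$. Chaining the left-hand inequalities, but shifting the index up by one at each step, gives $\mu_i=\lambda^{(n)}_i\ge\lambda^{(n+1)}_{i+1}\ge\cdots\ge\lambda^{(m)}_{i+m-n}=\lambda_{i+m-n}$. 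Together these yield $\lambda_{i+m-n}\le\mu_i\le\lambda_i$ for $0\le i\le n-1$, which is the desired conclusion.

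The argument is essentially a telescoping induction on $m-n$, with Theorem~\ref{the:cauthy} as the base case. There is no real obstacle; the only subtle point is bookkeeping the index shift in the lower bound so that the floor index of the outer matrix moves from $i$ up to $i+m-n$ as the chain is traversed. Formally, the induction hypothesis should carry the statement that for every principal submatrix of order $m-j$ the interlacing $\lambda_{i+j}\le\mu^{(m-j)}_i\le\lambda_i$ holds for all valid $i$, and the inductive step applies Theorem~\ref{the:cauthy} once more to pass from order $m-j$ to order $m-j-1$.
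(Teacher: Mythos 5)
Your proposal is correct and follows exactly the route the paper takes: the paper's proof is the one-line remark that the corollary follows ``by repeating the Cauchy interlace theorem $m-n$ times,'' and your chain of nested principal submatrices with the telescoped inequalities $\mu_i=\lambda^{(n)}_i\le\cdots\le\lambda^{(m)}_i=\lambda_i$ and $\mu_i\ge\lambda^{(n+1)}_{i+1}\ge\cdots\ge\lambda^{(m)}_{i+m-n}$ is precisely the detailed bookkeeping that remark leaves implicit.
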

\begin{proof}
	By repeating the Cauchy interlace theorem $m-n$ times, this corollary can be derived.
\end{proof}
By the Cauchy interlace theorem and its corollary, the following lemma is established.
\begin{lemma}
	Given matrix $\mathbf{Q}\in\mathbb{C}^{m\times n}$ satisfying $\mathbf{Q}^H\mathbf{Q}=\mathbf{I}$ and suppose the matrix $\mathbf{C}\in\mathcal{M}_{\mathcal{T}H_{++}}$, if the eigenvalues of $\mathbf{C}$ are $\lambda_0,\dots,\lambda_{m-1}\,(\lambda_{m-1}\le\cdots\le\lambda_0)$, then the eigenvalues of $\mathbf{Q}^H\mathbf{C}\mathbf{Q}$, $\mu_0,\dots,\mu_{n-1}\,(\mu_{n-1}\le\cdots\le\mu_0)$, satisfy $\lambda_{i+m-n}\le\mu_i\le\lambda_i\,(0\le i\le n-1)$.
	\label{lem:eig_vary}
\end{lemma}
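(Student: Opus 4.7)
The plan is to reduce the statement directly to the Corollary of the Cauchy Interlace Theorem that has already been established, by interpreting $\mathbf{Q}^H\mathbf{C}\mathbf{Q}$ as a principal submatrix of a matrix unitarily similar to $\mathbf{C}$. The key observation is that although $\mathbf{Q}\in\mathbb{C}^{m\times n}$ is only a ``tall'' isometry, its $n$ orthonormal columns can be completed to an orthonormal basis of $\mathbb{C}^m$.

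First I would construct a matrix $\mathbf{Q}_{\perp}\in\mathbb{C}^{m\times(m-n)}$ whose columns form an orthonormal basis of the orthogonal complement of the column space of $\mathbf{Q}$, so that $\tilde{\mathbf{Q}}=[\mathbf{Q},\mathbf{Q}_{\perp}]\in\mathbb{C}^{m\times m}$ satisfies $\tilde{\mathbf{Q}}^H\tilde{\mathbf{Q}}=\tilde{\mathbf{Q}}\tilde{\mathbf{Q}}^H=\mathbf{I}$. Next I would compute the block form
\begin{equation}
\tilde{\mathbf{Q}}^H\mathbf{C}\tilde{\mathbf{Q}}=\begin{bmatrix}\mathbf{Q}^H\mathbf{C}\mathbf{Q} & \mathbf{Q}^H\mathbf{C}\mathbf{Q}_{\perp}\\ \mathbf{Q}_{\perp}^H\mathbf{C}\mathbf{Q} & \mathbf{Q}_{\perp}^H\mathbf{C}\mathbf{Q}_{\perp}\end{bmatrix},
\end{equation}
which shows that $\mathbf{Q}^H\mathbf{C}\mathbf{Q}$ is exactly the leading $n\times n$ principal submatrix of $\tilde{\mathbf{Q}}^H\mathbf{C}\tilde{\mathbf{Q}}$.

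Because $\tilde{\mathbf{Q}}$ is unitary, the matrix $\tilde{\mathbf{Q}}^H\mathbf{C}\tilde{\mathbf{Q}}$ is unitarily similar to $\mathbf{C}$ and is therefore Hermitian with the same spectrum $\lambda_0\ge\lambda_1\ge\cdots\ge\lambda_{m-1}$. Applying the Corollary to Theorem~\ref{the:cauthy} with $\mathbf{A}=\tilde{\mathbf{Q}}^H\mathbf{C}\tilde{\mathbf{Q}}$ and $\mathbf{B}=\mathbf{Q}^H\mathbf{C}\mathbf{Q}$, the eigenvalues $\mu_0\ge\cdots\ge\mu_{n-1}$ of $\mathbf{B}$ satisfy $\lambda_{i+m-n}\le\mu_i\le\lambda_i$ for $0\le i\le n-1$, which is the desired conclusion.

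I do not anticipate a real obstacle here; the proof is essentially a bookkeeping argument. The only minor technical point worth double-checking is that the Corollary as stated applies in the complex Hermitian setting (the statement in Theorem~\ref{the:cauthy} is phrased for Hermitian matrices, so this is fine) and that the constructed $\tilde{\mathbf{Q}}$ genuinely places $\mathbf{Q}^H\mathbf{C}\mathbf{Q}$ as a \emph{principal} (not merely arbitrary) submatrix. Both follow immediately from the block-matrix computation above. Notice also that the Toeplitz assumption on $\mathbf{C}$ is not used: only the Hermitian property (implied by $\mathbf{C}\in\mathcal{M}_{\mathcal{T}H_{++}}$) is needed.
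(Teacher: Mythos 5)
Your proof is correct and follows essentially the same route as the paper: extend $\mathbf{Q}$ to a unitary $\tilde{\mathbf{Q}}$, note that $\tilde{\mathbf{Q}}^H\mathbf{C}\tilde{\mathbf{Q}}$ has the same spectrum as $\mathbf{C}$ and contains $\mathbf{Q}^H\mathbf{C}\mathbf{Q}$ as its leading principal submatrix, then invoke the corollary of the Cauchy interlace theorem. No gaps.
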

\begin{proof}
	Let $\widetilde{\mathbf{Q}}$ with order $m$ is the orthonormal expansion of $\mathbf{Q}$, i.e., $\widetilde{\mathbf{Q}}^H\widetilde{\mathbf{Q}}=\widetilde{\mathbf{Q}}\widetilde{\mathbf{Q}}^H=\mathbf{I}$ and
	\begin{equation}
		\widetilde{\mathbf{Q}}=
		\begin{bmatrix}
			\mathbf{Q} & \mathbf{Q}'
		\end{bmatrix}.
	\end{equation}
	Because the eigenvalues of $\widetilde{\mathbf{Q}}^H\mathbf{C}\widetilde{\mathbf{Q}}$ equal to the eigenvalues of $\mathbf{C}\widetilde{\mathbf{Q}}\widetilde{\mathbf{Q}}^H=\mathbf{C}$, the eigenvalues of $\widetilde{\mathbf{Q}}^H\mathbf{C}\widetilde{\mathbf{Q}}$ are $\lambda_0,\dots,\lambda_{m-1}$.\par
	And,
	\begin{equation}
		\widetilde{\mathbf{Q}}^H\mathbf{C}\widetilde{\mathbf{Q}}=
		\begin{bmatrix}
			\mathbf{Q}^H\\
			\mathbf{Q}'^H
		\end{bmatrix}
		\mathbf{C}
		\begin{bmatrix}
			\mathbf{Q} & \mathbf{Q}'
		\end{bmatrix}=
		\begin{bmatrix}
			\mathbf{Q}^H\mathbf{C}\mathbf{Q} & \mathbf{Q}^H\mathbf{C}\mathbf{Q}'\\
			\mathbf{Q}'^H\mathbf{C}\mathbf{Q} & \mathbf{Q}'^H\mathbf{C}\mathbf{Q}'\\
		\end{bmatrix},
	\end{equation}
	thus the matrix $\mathbf{Q}^H\mathbf{C}\mathbf{Q}$ is the principle submatrix of $\widetilde{\mathbf{Q}}^H\mathbf{C}\widetilde{\mathbf{Q}}$. According to the Cauchy interlace theorem, eigenvalue $\mu_i\,(0\le i\le n-1)$ satisfies $\lambda_{i+m-n}\le\mu_i\le\lambda_i\,(0\le i\le n-1)$.
\end{proof}
Lemma~\ref{lem:eig_vary} reveals that the eigenvalues of all the matrices shaped like $\mathbf{Q}^H\mathbf{C}\mathbf{Q}$ should satisfies the inequalities $\lambda_{i+m-n}\le\mu_i\le\lambda_i\,(0\le i\le n-1)$. The remaining question is whether there exists an orthogonal mapping $\mathbf{Q}$ for each eigenvalues $\mu_1,\mu_2,\dots,\mu_n$ satisfying inequalities $\lambda_{i+m-n}\le\mu_i\le\lambda_i\,(0\le i\le n-1)$, which makes the eigenvalues of $\mathbf{Q}^H\mathbf{C}\mathbf{Q}$ actually equal to $\mu_1,\mu_2,\dots,\mu_n$. The following lemma answers this problem.
\begin{lemma}
	Given the matrix $\mathbf{C}\in\mathcal{M}_{\mathcal{T}H_{++}}$ and suppose its eigenvalues are $\lambda_0,\dots,\lambda_{m-1}\,(\lambda_{m-1}\le\cdots\le\lambda_0)$. Then, for each $\mu_i\,(0\le i\le n-1)$ satisfying $\lambda_{i+m-n}\le\mu_i\le\lambda_i\,(0\le i\le n-1)$, there exists a matrix $\mathbf{Q}\in\mathbb{C}^{m\times n}$ satisfying $\mathbf{Q}^H\mathbf{Q}=\mathbf{I}$, which makes the eigenvalues of $\mathbf{Q}^H\mathbf{C}\mathbf{Q}$ are $\mu_0,\dots,\mu_{n-1}$.
	\label{lem:L2Q}
\end{lemma}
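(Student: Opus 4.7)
The plan is to reduce the problem to a pure eigenvalue inverse problem for compressions, and then to attack that by induction on the codimension $m-n$, with the single-step case handled via a bordered-matrix construction.

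\textbf{Reduction to the diagonal case.} Since $\mathbf{C}\in\mathcal{M}_{\mathcal{T}H_{++}}$ is Hermitian, write $\mathbf{C}=\mathbf{U}^H\,\text{diag}(\lambda_0,\dots,\lambda_{m-1})\,\mathbf{U}$ with $\mathbf{U}$ unitary. If I produce $\mathbf{Q}_0\in\mathbb{C}^{m\times n}$ with $\mathbf{Q}_0^H\mathbf{Q}_0=\mathbf{I}$ such that $\mathbf{Q}_0^H\,\text{diag}(\lambda_0,\dots,\lambda_{m-1})\,\mathbf{Q}_0$ has spectrum $\{\mu_i\}$, then $\mathbf{Q}:=\mathbf{U}^H\mathbf{Q}_0$ solves the original instance. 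So it suffices to treat $\mathbf{C}=\Lambda:=\text{diag}(\lambda_0,\dots,\lambda_{m-1})$.

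\textbf{Induction on $m-n$.} The base $m-n=0$ is trivial because interlacing forces $\mu_i=\lambda_i$ and $\mathbf{Q}=\mathbf{I}$ works. Assume the lemma for all codimensions $< m-n$; the heart of the argument is the single-step case.

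\textbf{Single-step case.} Given any $\rho_0\geq\dots\geq\rho_{m-2}$ with $\lambda_{j+1}\leq\rho_j\leq\lambda_j$, I look for a Hermitian bordered matrix
\[
\mathbf{A}=\begin{bmatrix}\text{diag}(\rho_0,\dots,\rho_{m-2}) & \mathbf{b}\\ \mathbf{b}^H & c\end{bmatrix}
\]
whose spectrum equals $\{\lambda_k\}$. Expansion of the characteristic polynomial along the last row gives
\[
\det(x\mathbf{I}-\mathbf{A})=\prod_{j=0}^{m-2}(x-\rho_j)\Bigl[(x-c)-\sum_{j=0}^{m-2}\tfrac{|b_j|^2}{x-\rho_j}\Bigr],
\]
and matching this to $\prod_k(x-\lambda_k)$ by partial fractions forces
\[
|b_j|^2=-\frac{\prod_{k=0}^{m-1}(\rho_j-\lambda_k)}{\prod_{i\neq j}(\rho_j-\rho_i)},\qquad c=\sum_k\lambda_k-\sum_j\rho_j.
\]
A direct sign count from $\lambda_{j+1}\leq\rho_j\leq\lambda_j$ (the numerator has sign $(-1)^{j+1}$, the denominator has sign $(-1)^j$) shows $|b_j|^2\geq 0$, with degenerate equalities collapsing the corresponding $b_j$ to $0$. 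Diagonalizing $\mathbf{A}=\widetilde{\mathbf{Q}}^H\Lambda\widetilde{\mathbf{Q}}$ and taking the first $m-1$ columns of $\widetilde{\mathbf{Q}}$ produces the required isometry that descends $\Lambda$ to a matrix of spectrum $\{\rho_j\}$.

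\textbf{Telescoping the induction.} To move from length $m$ down to length $n$, I would pick the intermediate sequence
\[
\rho_j=\begin{cases}\max(\mu_j,\lambda_{j+1}), & 0\leq j\leq n-1,\\ \lambda_{j+1}, & n\leq j\leq m-2,\end{cases}
\]
and verify directly from $\lambda_{i+m-n}\leq\mu_i\leq\lambda_i$ that both the $\lambda$--$\rho$ interlacing $\lambda_{j+1}\leq\rho_j\leq\lambda_j$ and the $\rho$--$\mu$ interlacing $\rho_{i+(m-1-n)}\leq\mu_i\leq\rho_i$ hold, together with monotonicity $\rho_0\geq\cdots\geq\rho_{m-2}$. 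Applying the single-step case yields $\mathbf{Q}_1\in\mathbb{C}^{m\times(m-1)}$ with $\mathbf{Q}_1^H\mathbf{Q}_1=\mathbf{I}$ and $\mathbf{Q}_1^H\Lambda\mathbf{Q}_1$ of spectrum $\{\rho_j\}$. The inductive hypothesis (codimension $m-1-n$) applied to $\mathbf{Q}_1^H\Lambda\mathbf{Q}_1$ yields $\mathbf{Q}_2\in\mathbb{C}^{(m-1)\times n}$ with $\mathbf{Q}_2^H\mathbf{Q}_2=\mathbf{I}$ and $\mathbf{Q}_2^H(\mathbf{Q}_1^H\Lambda\mathbf{Q}_1)\mathbf{Q}_2$ of spectrum $\{\mu_i\}$. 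Then $\mathbf{Q}:=\mathbf{Q}_1\mathbf{Q}_2\in\mathbb{C}^{m\times n}$ satisfies $\mathbf{Q}^H\mathbf{Q}=\mathbf{I}$ and $\mathbf{Q}^H\Lambda\mathbf{Q}$ has the required spectrum.

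\textbf{Expected main obstacle.} The delicate step is the non-negativity of the $|b_j|^2$ in the bordered-matrix construction: it relies on a sign analysis of the residues, and degenerate configurations in which $\rho_j=\lambda_k$ for some $j,k$ require either a case-by-case cancellation check or a continuity/perturbation argument (perturb $\lambda_k$'s slightly, solve, and take a limit). The interpolation step is otherwise routine bookkeeping on interlacing inequalities.
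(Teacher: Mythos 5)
Your proposal is correct in substance, but it takes a genuinely different route from the paper. The paper's proof is a one-shot explicit construction: it writes each target $\mu_i$ as a convex combination $\mu_i=t_i\lambda_i+(1-t_i)\lambda_{i+m-n}$ and takes the $i$-th column of $\mathbf{Q}$ to be $\sqrt{t_i}\,\bm{v}_i+\sqrt{1-t_i}\,\bm{v}_{i+m-n}$, where the $\bm{v}_k$ are eigenvectors of $\mathbf{C}$; the standing assumption $n\le m/2$ (made earlier in Section~IV) guarantees the index sets $\{0,\dots,n-1\}$ and $\{m-n,\dots,m-1\}$ are disjoint, so these columns are orthonormal and $\mathbf{Q}^H\mathbf{C}\mathbf{Q}$ comes out exactly equal to $\mathrm{diag}(\mu_0,\dots,\mu_{n-1})$. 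Your argument is instead the classical Fan--Pall converse of Cauchy interlacing: reduce to the diagonal case, peel off one dimension at a time via the bordered-matrix residue formula, and telescope through an intermediate interlacing sequence. Your sign count for $|b_j|^2$ and your choice of $\rho_j$ both check out (the latter needs the $\mu_i$ indexed in decreasing order, which is the intended reading). What your approach buys is generality: it needs neither $n\le m/2$ nor the specific two-eigenvalue convex-hull structure, and it proves the full converse interlacing theorem. What it costs is length and explicitness: you must still discharge the degenerate configurations ($\rho_j$ coinciding with some $\lambda_k$ or with each other) by a perturbation-and-compactness argument, and the resulting $\mathbf{Q}$ is only obtained by diagonalizing a bordered matrix at each step, whereas the paper's closed-form $\mathbf{Q}$ is exactly what its Algorithm~1 consumes downstream. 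If you intend your version to replace the paper's, you should also note explicitly that the induction is carried out over arbitrary Hermitian matrices (not just Toeplitz HPD ones), since the intermediate compressions $\mathbf{Q}_1^H\Lambda\mathbf{Q}_1$ leave $\mathcal{M}_{\mathcal{T}H_{++}}$.
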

\begin{proof}
	Suppose the eigenvalue decomposition of matrix $\mathbf{C}$ is
	\begin{equation}
		\mathbf{C}=\mathbf{V}^H\text{diag}\,(\lambda_0,\dots,\lambda_{m-1})\mathbf{V},
	\end{equation}
	where $\mathbf{V}^H=[\bm{v}_0\;\bm{v}_1\;\cdots\;\bm{v}_{m-1}]$ is an orthogonal matrix.\par
	For each $\mu_i\,(0\le i\le n-1)$ satisfying $\lambda_{i+m-n}\le\mu_i\le\lambda_i\,(0\le i\le n-1)$, let
	\begin{equation}
		t_i=\frac{\mu_i-\lambda_{i+m-n}}{\lambda_i-\lambda_{i+m-n}},
		\label{eq:lm2t}
	\end{equation}
	then $0\le t_i\le 1$ and $\mu_i=t_i\lambda_i+(1-t_i)\lambda_{i+m-n}$.\par
	Let
	\begin{equation}
		\mathbf{Q}^H=
		\begin{bmatrix}
			\sqrt{t_0}\bm{v}^H_0+\sqrt{1-t_0}\bm{v}^H_{m-n}\\
			\sqrt{t_1}\bm{v}^H_1+\sqrt{1-t_1}\bm{v}^H_{1+m-n}\\
			\vdots\\
			\sqrt{t_{n-1}}\bm{v}^H_{n-1}+\sqrt{1-t_{n-1}}\bm{v}^H_{m-1}
		\end{bmatrix},
		\label{eq:q_construct}
	\end{equation}
	then we can get $\mathbf{Q}^H\mathbf{Q}=\mathbf{I}$ because
	\begin{equation}
		\begin{split}
			&(\sqrt{t_i}\bm{v}^H_i+\sqrt{1-t_i}\bm{v}^H_{i+m-n})(\sqrt{t_j}\bm{v}_j+\sqrt{1-t_j}\bm{v}_{j+m-n})\\
			=&\sqrt{t_it_j}\bm{v}^H_i\bm{v}_j+\sqrt{t_i(1-t_j)}\bm{v}^H_i\bm{v}_{j+m-n}\\
			&+\sqrt{(1-t_i)t_j}\bm{v}^H_{i+m-n}\bm{v}_j+\sqrt{(1-t_i)(1-t_j)}\bm{v}^H_{i+m-n}\bm{v}_{j+m-n}\\
			\overset{\text{(a)}}{=}&\sqrt{t_it_j}\bm{v}^H_i\bm{v}_j+\sqrt{(1-t_i)(1-t_j)}\bm{v}^H_{i+m-n}\bm{v}_{j+m-n}
			=\begin{cases}
				0 & i\ne j\\
				1 & i=j
			\end{cases}
		\end{split}
	\end{equation}
	where $(a)$ holds as $i\ne j+m-n$, $j\ne i+m-n$ ($n\le m/2$, so $0\le i,j\le n-1<m-n\le i+m-n,j+m-n\le m-1$).\par
	And, we can get
	\begin{equation}
		\mathbf{V}\mathbf{Q}=
		\begin{bmatrix}
			\sqrt{t_0} & & & \\
			& \sqrt{t_1} & & \\
			& & \ddots & \\
			& & & \sqrt{t_{n-1}}\\
			& & & \\
			& & & \\
			\sqrt{1-t_0} & & & \\
			& \sqrt{1-t_1} & & \\
			& & \ddots & \\
			& & & \sqrt{1-t_{n-1}}\\
		\end{bmatrix}
	\end{equation}
	and
	\begin{equation}
		\begin{split}
			\mathbf{Q}^H\mathbf{C}\mathbf{Q}=&\mathbf{Q}^H\mathbf{V}^H\text{diag}\,(\lambda_0,\dots,\lambda_{m-1})\mathbf{V}\mathbf{Q}\\
		=&\text{diag}\,(\mu_0,\dots,\mu_{n-1}).
		\end{split}
	\end{equation}
	That means the eigenvalues of $\mathbf{Q}^H\mathbf{C}\mathbf{Q}$ are $\mu_0,\dots,\mu_{n-1}$.
\end{proof}
\begin{remark}
	For each $\mu_i\,(0\le i\le n-1)$ satisfying $\lambda_{i+m-n}\le\mu_i\le\lambda_i\,(0\le i\le n-1)$, the above proof provides a construction approach for the orthogonal matrix $\mathbf{Q}$.
\end{remark}
Therefore, according to Theorem~\ref{the:afi2ipf}, Lemma~\ref{lem:L2Q} and above discussion, the optimization problem (\ref{eq:op_THPD}) can be transferred onto the dual power spectrum manifold as shown in the following theorem.
\begin{theorem}
	Given two optimization problems
	\begin{equation}
		\begin{split}
			\max_{\mathbf{W}}\;&f(\mathcal{D}_n(\mathbf{W}^H\mathbf{C}_1\mathbf{W},\mathbf{W}^H\mathbf{C}_2\mathbf{W}))\\
			\text{s.t.}&\quad\mathbf{W}\in\mathbb{C}^{m\times n},\,\text{rk}(\mathbf{W})=n,
		\end{split}
		\label{eq:op_THPD2}
	\end{equation}
	and
	\begin{equation}
		\begin{split}
			&\max_{P=\{\mu_0,\dots,\mu_{n-1}\}} f(\varphi_{\mathcal{D}_n}(P))\\
			s.t.\quad&\lambda_{i+m-n}\le\mu_i\le\lambda_i\quad(0\le i\le n-1),\\
		\end{split}
		\label{eq:op_PS}
	\end{equation}
	where $\lambda_{m-1}\le\cdots\le\lambda_0$ and $\mathcal{P}(\mathbf{C}_1,\mathbf{C}_2)=(\lambda_0,\dots,\lambda_{m-1})$. 
	Suppose the solution of the optimization (\ref{eq:op_PS}) is $P^*=(\mu^*_0,\dots,\mu^*_{n-1})$, then there exists a matrix $\mathbf{W}_*\in\mathbb{C}^{m\times n}$ satisfying $\mathcal{P}(\mathbf{W}_*^H\mathbf{C}_1\mathbf{W}_*,\mathbf{W}_*^H\mathbf{C}_2\mathbf{W}_*)=(\mu^*_0,\dots,\mu^*_{n-1})$ and $\text{rk}(\mathbf{W}_*)=n$. The matrix $\mathbf{W}_*$ is the solution of the optimization (\ref{eq:op_THPD2}).
%
	\label{the:op_eq}
\end{theorem}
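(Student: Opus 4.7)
The plan is to prove the equivalence by establishing the two optimization problems have the same optimal value, and then exhibit an explicit construction of $\mathbf{W}_*$ that attains it. I will show this by a two-sided inequality: any feasible $\mathbf{W}$ in (\ref{eq:op_THPD2}) produces a feasible $P$ in (\ref{eq:op_PS}) with the same objective value, and conversely, any feasible $P$ in (\ref{eq:op_PS}) can be realized by some feasible $\mathbf{W}$ in (\ref{eq:op_THPD2}) with the same objective value. The two key ingredients I expect to use are already in hand: equation (\ref{eq:W2Q}), which reduces the enhanced mapping $\mathbf{W}$ to an orthonormal mapping $\mathbf{Q}$, and Lemmas~\ref{lem:eig_vary} and \ref{lem:L2Q}, which together exactly characterize the set of eigenvalue tuples realizable by $\mathbf{Q}^H\mathbf{C}'\mathbf{Q}$.

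For the forward direction, I would take an arbitrary feasible $\mathbf{W}$ and define $\mathbf{W}' = \mathbf{C}_2^{1/2}\mathbf{W}$, then invoke the QR decomposition $\mathbf{W}' = \mathbf{Q}\mathbf{R}$ as in (\ref{eq:W2Q}) to rewrite
\[
\mathcal{D}_n(\mathbf{W}^H\mathbf{C}_1\mathbf{W},\mathbf{W}^H\mathbf{C}_2\mathbf{W}) = \mathcal{D}_n(\mathbf{Q}^H\mathbf{C}'\mathbf{Q},\mathbf{I}),
\]
with $\mathbf{C}' = \mathbf{C}_2^{-1/2}\mathbf{C}_1\mathbf{C}_2^{-1/2}$. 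By Theorem~\ref{the:afi2ipf} and the Remark after Definition~\ref{def:map}, this equals $\varphi_{\mathcal{D}_n}(\mu_0,\dots,\mu_{n-1})$, where $(\mu_0,\dots,\mu_{n-1})$ are the eigenvalues of $\mathbf{Q}^H\mathbf{C}'\mathbf{Q}$. Since the eigenvalues of $\mathbf{C}'$ coincide with those of $\mathbf{C}_1\mathbf{C}_2^{-1}$, namely $(\lambda_0,\dots,\lambda_{m-1})$, Lemma~\ref{lem:eig_vary} guarantees the interlacing bounds $\lambda_{i+m-n} \le \mu_i \le \lambda_i$, so $(\mu_0,\dots,\mu_{n-1})$ is feasible for (\ref{eq:op_PS}) with the same objective value.

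For the reverse direction, given the optimal $P^* = (\mu_0^*,\dots,\mu_{n-1}^*)$ of (\ref{eq:op_PS}), Lemma~\ref{lem:L2Q} produces an orthonormal $\mathbf{Q} \in \mathbb{C}^{m\times n}$ with $\mathbf{Q}^H\mathbf{Q}=\mathbf{I}$ such that the eigenvalues of $\mathbf{Q}^H\mathbf{C}'\mathbf{Q}$ are exactly $(\mu_0^*,\dots,\mu_{n-1}^*)$. I then define
\[
\mathbf{W}_* = \mathbf{C}_2^{-1/2}\mathbf{Q}.
\]
Invertibility of $\mathbf{C}_2^{-1/2}$ and $\text{rk}(\mathbf{Q})=n$ yield $\text{rk}(\mathbf{W}_*)=n$, so feasibility is immediate. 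A direct substitution gives $\mathbf{W}_*^H\mathbf{C}_2\mathbf{W}_* = \mathbf{Q}^H\mathbf{Q} = \mathbf{I}$ and $\mathbf{W}_*^H\mathbf{C}_1\mathbf{W}_* = \mathbf{Q}^H\mathbf{C}'\mathbf{Q}$, hence $\mathcal{P}(\mathbf{W}_*^H\mathbf{C}_1\mathbf{W}_*,\mathbf{W}_*^H\mathbf{C}_2\mathbf{W}_*) = (\mu_0^*,\dots,\mu_{n-1}^*) = P^*$, and by Theorem~\ref{the:afi2ipf} the objective value of (\ref{eq:op_THPD2}) at $\mathbf{W}_*$ equals $f(\varphi_{\mathcal{D}_n}(P^*))$.

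Combining the two directions gives that the optimal values coincide and $\mathbf{W}_*$ attains it. The main conceptual obstacle is not really an obstacle but rather the verification that the eigenvalue sets of $\mathbf{C}_1\mathbf{C}_2^{-1}$ and $\mathbf{C}_2^{-1/2}\mathbf{C}_1\mathbf{C}_2^{-1/2}$ coincide, and that the QR / change-of-variable $\mathbf{W} \leftrightarrow \mathbf{Q}$ is truly a bijection up to redundancy (different $\mathbf{W}$'s can map to the same $\mathbf{Q}$, but this only affects uniqueness, not the existence claimed). Both are already handled in the material preceding the theorem, so the proof is largely a synthesis of (\ref{eq:W2Q}), Lemma~\ref{lem:eig_vary} and Lemma~\ref{lem:L2Q}.
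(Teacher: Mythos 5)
Your proposal is correct and follows essentially the same route as the paper's proof: both reduce a feasible $\mathbf{W}$ to an orthonormal $\mathbf{Q}$ via (\ref{eq:W2Q}), use Lemma~\ref{lem:eig_vary} to show the induced eigenvalues are feasible for (\ref{eq:op_PS}), and use Lemma~\ref{lem:L2Q} with $\mathbf{W}_*=\mathbf{C}_2^{-1/2}\mathbf{Q}_*$ to realize the optimizer $P^*$. The only cosmetic difference is that you state the forward inequality for an arbitrary feasible $\mathbf{W}$ rather than only for an optimizer $\mathbf{W}'$ of (\ref{eq:op_THPD2}), which is the same two-sided bound.
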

\begin{proof}
	Suppose the solution of the optimization (\ref{eq:op_PS}) is $P^*=(\mu^*_0,\dots,\mu^*_{n-1})$ and the solution of the optimization (\ref{eq:op_THPD2}) is $\mathbf{W}'$.\par
	According to Lemma~\ref{lem:L2Q}, there exists a matrix $\mathbf{Q}_*$ satisfying $\mathbf{Q}_*^H\mathbf{Q}_*=\mathbf{I}$ and that the eigenvalues of $\mathbf{Q}_*^H\mathbf{C}_2^{-\frac{1}{2}}\mathbf{C}_1\mathbf{C}_2^{-\frac{1}{2}}\mathbf{Q}_*$ are $\mu_0,\dots,\mu_{n-1}$. Let $\mathbf{W}_*=\mathbf{C}_2^{-\frac{1}{2}}\mathbf{Q}_*$, then we can get $\mathcal{D}_n(\mathbf{W}_*^H\mathbf{C}_1\mathbf{W}_*,\mathbf{W}_*^H\mathbf{C}_2\mathbf{W}_*)=\mathcal{D}_n(\mathbf{Q}_*^H\mathbf{C}_2^{-\frac{1}{2}}\mathbf{C}_1\mathbf{C}_2^{-\frac{1}{2}}\mathbf{Q}_*,\mathbf{I})$. As $\text{rk}(\mathbf{W}_*)=n$, we can get $f(\mathcal{D}_n(\mathbf{W}'^H\mathbf{C}_1\mathbf{W}',\mathbf{W}'^H\mathbf{C}_2\mathbf{W}'))\ge f(\mathcal{D}_n(\mathbf{W}_*^H\mathbf{C}_1\mathbf{W}_*,\mathbf{W}_*^H\mathbf{C}_2\mathbf{W}_*))=\mathcal{D}_n(\mathbf{Q}_*^H\mathbf{C}_2^{-\frac{1}{2}}\mathbf{C}_1\mathbf{C}_2^{-\frac{1}{2}}\mathbf{Q}_*,\mathbf{I})=f(\varphi_{\mathcal{D}_n}(P^*))$.\par
	Besides, according to (\ref{eq:W2Q}), there exists a matrix $\mathbf{Q}$ satisfying $\mathbf{Q}^H\mathbf{Q}=\mathbf{I}$ and $\mathcal{D}_n(\mathbf{W}'^H\mathbf{C}_1\mathbf{W}',\mathbf{W}'^H\mathbf{C}_2\mathbf{W}')=\mathcal{D}_n(\mathbf{Q}^H\mathbf{C}_2^{-\frac{1}{2}}\mathbf{C}_1\mathbf{C}_2^{-\frac{1}{2}}\mathbf{Q},\mathbf{I})$. As Lemma~\ref{lem:eig_vary}, the eigenvalues $\mu'_0,\dots,\mu'_{n-1}$ of $\mathbf{Q}^H\mathbf{C}_2^{-\frac{1}{2}}\mathbf{C}_1\mathbf{C}_2^{-\frac{1}{2}}\mathbf{Q}$ should satisfy the inequality $\lambda_{i+m-n}\le\mu'_i\le\lambda_i\;(0\le i\le n-1)$. Therefore, we can get $f(\mathcal{D}_n(\mathbf{W}'^H\mathbf{C}_1\mathbf{W}',\mathbf{W}'^H\mathbf{C}_2\mathbf{W}'))=f(\varphi_{\mathcal{D}_n}(\mu'_1,\dots,\mu'_n))\le f(\varphi_{\mathcal{D}_n}(P^*))$.\par 
	Totally, we can get $f(\mathcal{D}_n(\mathbf{W}'^H\mathbf{C}_1\mathbf{W}',\mathbf{W}'^H\mathbf{C}_2\mathbf{W}'))=f(\mathcal{D}_n(\mathbf{W}_*^H\mathbf{C}_1\mathbf{W}_*,\mathbf{W}_*^H\mathbf{C}_2\mathbf{W}_*))=f(\varphi_{\mathcal{D}_n}(P^*))$, i.e., $\mathbf{W}_*$ is also the solution of optimization (\ref{eq:op_THPD2}).\par
\end{proof}
Suppose the solution of (\ref{eq:op_PS}) is $(\mu^*_0,\dots,\mu^*_{n-1})$, then the solution $\mathbf{W}_*$ of (\ref{eq:op_THPD2}) can be obtained by the optimal solution $(\mu^*_0,\dots,\mu^*_{n-1})$ and matrices $\mathbf{C}_1,\mathbf{C}_2$, which is shown in algorithm~\ref{alg:P2W}.
\par
\begin{algorithm}[htb]
	\setstretch{1.1}
	\caption{Enhanced mapping calculation by $\mu^*_0,\dots,\mu^*_{n-1}$}
	\label{alg:P2W}
	\renewcommand{\algorithmicrequire}{\textbf{Input:}}
	\renewcommand{\algorithmicensure}{\textbf{Output:}}
	\begin{algorithmic}[1]
		\REQUIRE ~~\\
		The optimal power spectrum $P=(\mu^*_0,\dots,\mu^*_{n-1})$.\\
		The matrices $\mathbf{C}_1,\mathbf{C}_2$.\\
		\ENSURE ~~\\
		The enhanced mapping $\mathbf{W}_*$.\\
		\STATE Calculating $\mathbf{C}=\mathbf{C}_2^{-\frac{1}{2}}\mathbf{C}_1\mathbf{C}_2^{-\frac{1}{2}}$.
		\STATE Performing the eigenvalue decomposition of $\mathbf{C}$ that $\mathbf{C}=\mathbf{V}^H\text{diag}\,(\lambda_0,\dots,\lambda_{m-1})\mathbf{V},\;\mathbf{V}^H=[\bm{v}_0\;\bm{v}_1\;\cdots\;\bm{v}_{m-1}]$.\\
		\STATE Calculating $t_0,\dots,t_{n-1}$ with $\lambda_0,\dots,\lambda_{m-1}$ and $\mu^*_0,\dots,\mu^*_{n-1}$ by (\ref{eq:lm2t}).\\
		\STATE Calculating $\mathbf{Q}$ with $t_0,\dots,t_{n-1}$ and $\bm{v}_0,\dots,\bm{v}_{m-1}$ by (\ref{eq:q_construct}).\\
		\STATE Calculating enhanced mapping by $\mathbf{W}_*=\mathbf{C}_2^{-\frac{1}{2}}\mathbf{Q}$.\\
	\end{algorithmic}
\end{algorithm}
Compared with the optimization (\ref{eq:op_THPD2}) and (\ref{eq:op_PS}), the decision variable of (\ref{eq:op_THPD2}) is a matrix $\mathbf{W}$ with dimension $m\times n$, and it is $(\mu_0,\dots,\mu_{n-1})$ with dimension $n$ for (\ref{eq:op_PS}), that means the dimension of gradient is $m\times n$ for (\ref{eq:op_THPD2}) and $n$ for (\ref{eq:op_PS}) in the gradient descent method. Moreover, the constraint condition of (\ref{eq:op_THPD2}) is $\text{rk}(\mathbf{W})=n$ which is independent for each element of $\mathbf{W}$, and that of (\ref{eq:op_PS}) is just $\lambda_{i+m-n}\le\mu_i\le\lambda_i$ that the feasible region is just a $n$-dimensional cube. 
\subsection{Numerical Example}
Compared with the optimization (\ref{eq:op_THPD2}), the optimization (\ref{eq:op_PS}) is easier to solve. 
In the following example, the closed-form solution of the optimization (\ref{eq:op_PS}) is figured out.
\begin{figure*}[htpb]
	\centering
	\subfigure[Riemannian distance]{\includegraphics[width=0.32\textwidth]{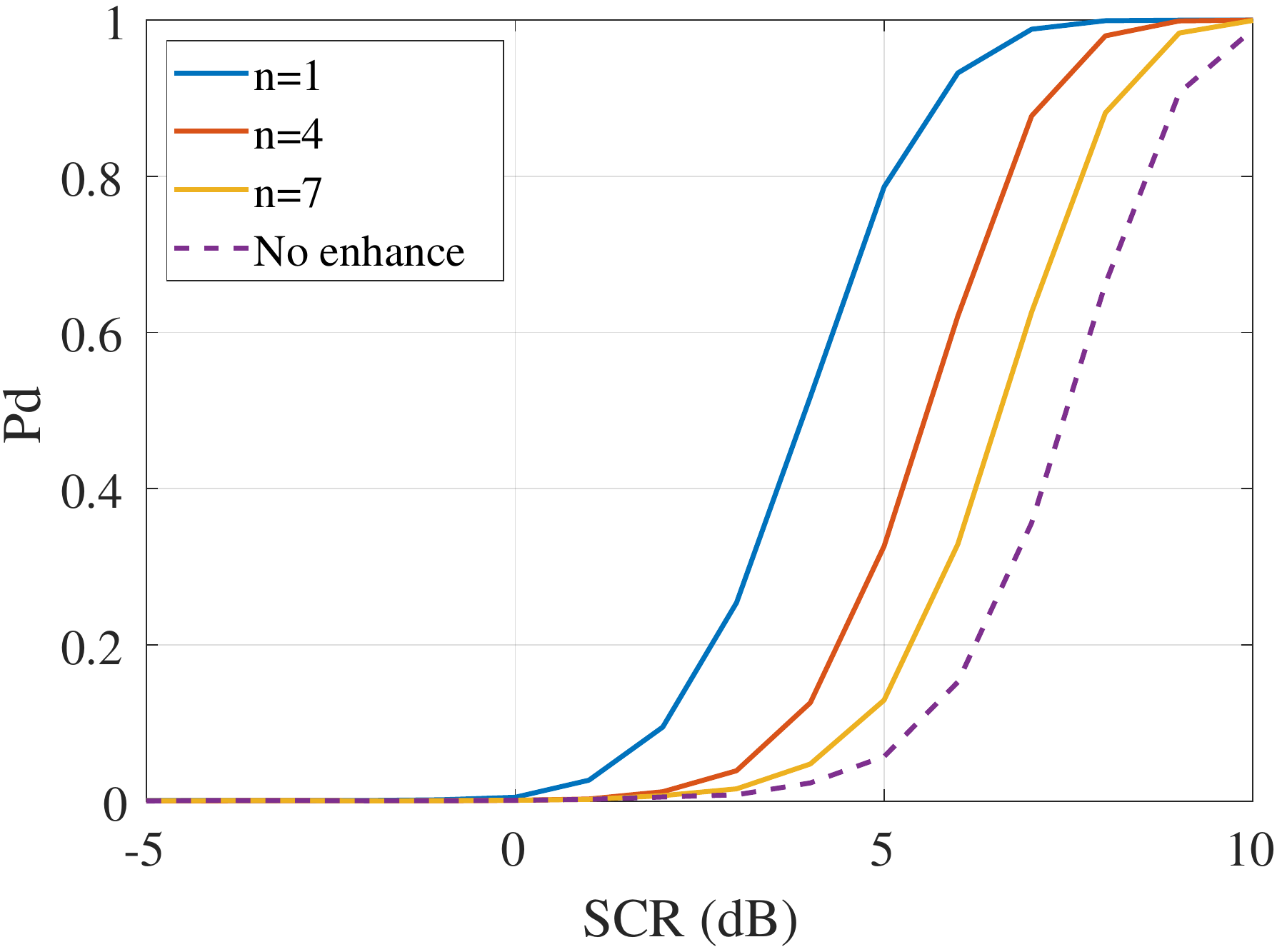}}
	\subfigure[Kullback-Leibler divergence]{\includegraphics[width=0.32\textwidth]{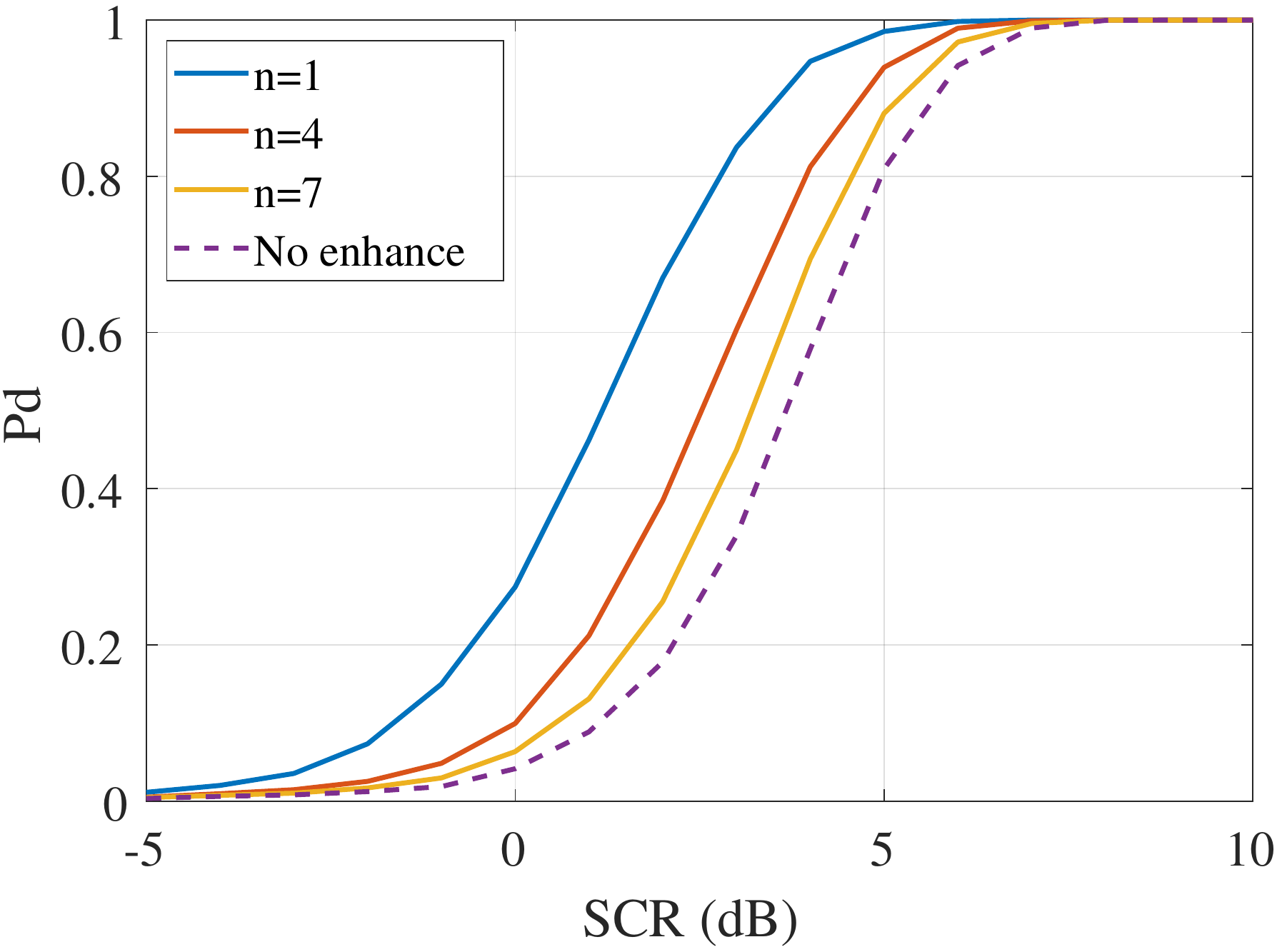}}
	\subfigure[log-determinant divergence]{\includegraphics[width=0.32\textwidth]{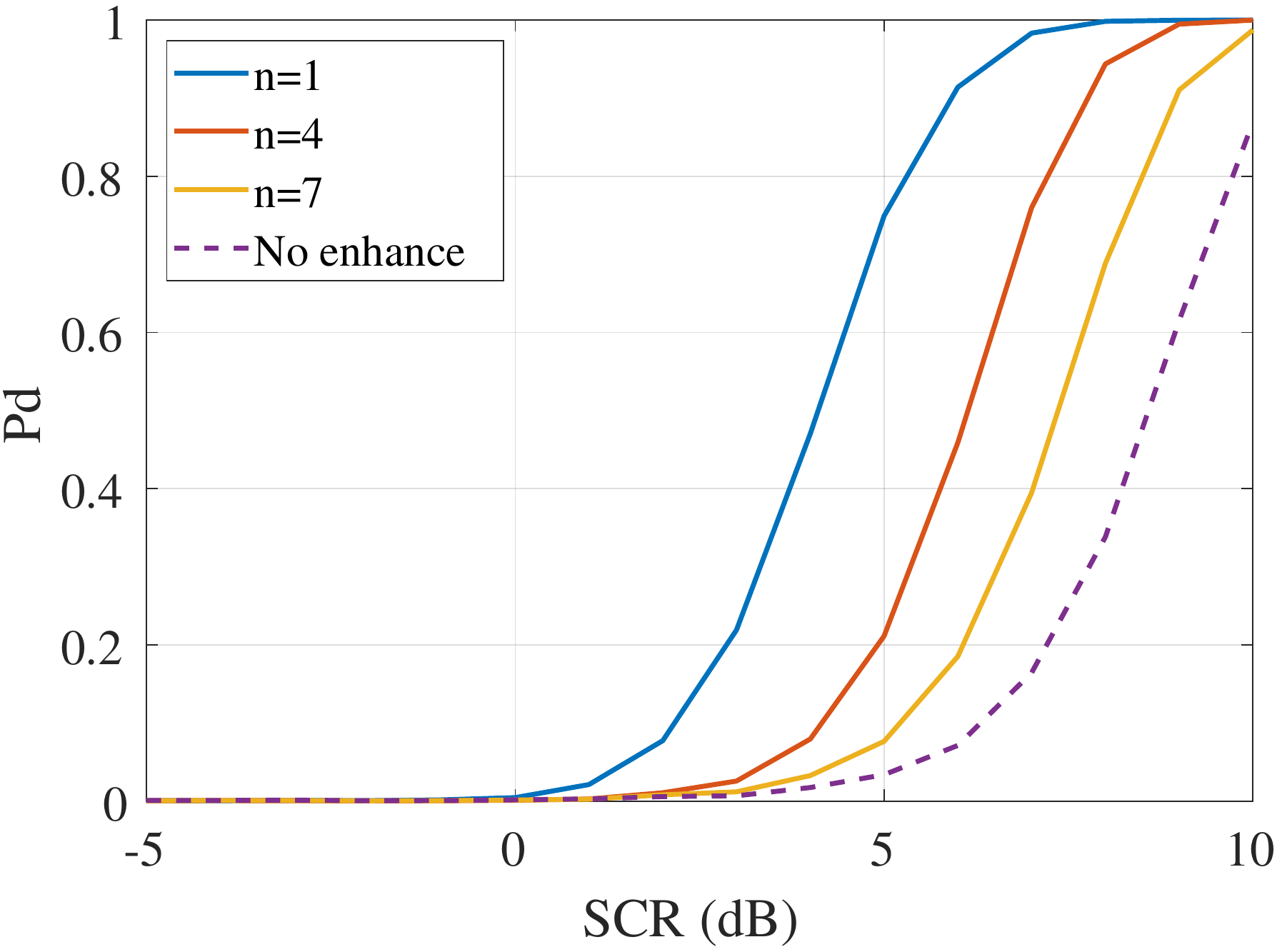}}
	\caption{The detection performance of enhanced detector with SCR from -5 dB to 10 dB. In these figures, the detection probabilities are calculated based on $10^4$ Monte Carlo runs.}
	\label{fig:curve_pd_enhance}
\end{figure*}
\begin{example}
	Consider the radar target detection scenario, the details of the simulation parameters are listed in the Table~\ref{Tab:para}. The primary data is from the cell under test, and the secondary data are from the reference cells.\par
	\begin{table}[hbt]
	  \caption{Settings of the Simulation.}
	  \label{Tab:para}
	  \centering
	  \begin{tabular}{|l|c|}
	  	\hline
	   	Parameters or Variables & Setting\\
	   	\hline
	   	\hline
	    Number of range cells & 17 \\
	    Number of pulses & 15\\
	    False-alarm Probability & $P_f=10^{-3}$\\
	    Target cell & $9^{\text{th}}$\\
	    Pulse repetition frequency & $f_r=1000$ Hz\\
	    \hline
	    \multirow{2}{*}{K-distribution clutter} & Shape parameter 1\\
	     & Scalar parameter 0.5\\
	    \hline
	  \end{tabular}
	\end{table}
	The target echo steering vector is
	\begin{equation}
		\bm{s}=A[1\;\text{exp}(j2\pi f_df_r^{-1})\;\cdots\;\text{exp}(j2\pi (M-1)f_df_r^{-1})],
	\end{equation}
	where $A$ is the power of target echo, and $f_d$ is the Doppler frequency which is 135 Hz in this example.\par
	Let the objective function in (\ref{eq:op_THPD2}) be $f(x)=x$, the enhancements of $\mathcal{D}_{\text{RD}},\mathcal{D}_{\text{KL}},\mathcal{D}_{\text{LD}}$ are discussed.
	\label{exp:exp}
\end{example}
The enhancement of the geometric measures $\mathcal{D}_{\text{RD}},\mathcal{D}_{\text{KL}},\mathcal{D}_{\text{LD}}$ are formulated as follows,
\begin{equation}
	\begin{split}
		&\max_{\mathbf{W}}\mathcal{D}_{\text{RD}}(\mathbf{W}^H\mathbf{C}_1\mathbf{W},\mathbf{W}^H\mathbf{C}_2\mathbf{W})\\
		=&\max_{\mathbf{W}}\Arrowvert \log((\mathbf{W}^H\mathbf{C}_1\mathbf{W})^{-\frac{1}{2}}\mathbf{W}^H\mathbf{C}_2\mathbf{W}(\mathbf{W}^H\mathbf{C}_1\mathbf{W})^{-\frac{1}{2}})\Arrowvert_F^2,
	\end{split}
	\label{eq:THPD_rem_op}
\end{equation}
\begin{equation}
	\begin{split}
		\max_{\mathbf{W}}\mathcal{D}_{\text{KL}}(\mathbf{W}^H\mathbf{C}_1\mathbf{W},\mathbf{W}^H\mathbf{C}&_2\mathbf{W})\\
		=\max_{\mathbf{W}}\text{tr}(\mathbf{W}^H\mathbf{C}_1\mathbf{W}(\mathbf{W}\mathbf{C}_2\mathbf{W}^H&)^{-1}-\mathbf{I})\\
	-\log&\left|\mathbf{W}^H\mathbf{C}_1\mathbf{W}(\mathbf{W}\mathbf{C}_2\mathbf{W}^H)^{-1}\right|,
	\end{split}
	\label{eq:THPD_KLD_op}
\end{equation}
\begin{equation}
	\begin{split}
		&\max_{\mathbf{W}}\mathcal{D}_{\text{RD}}(\mathbf{W}^H\mathbf{C}_1\mathbf{W},\mathbf{W}^H\mathbf{C}_2\mathbf{W})\\
		=&\max_{\mathbf{W}}\log\left|\frac{\mathbf{W}^H\mathbf{C}_1\mathbf{W}+\mathbf{W}^H\mathbf{C}_2\mathbf{W}}{2}\right|-\log\sqrt{\left|\mathbf{W}^H\mathbf{C}_1\mathbf{W}\mathbf{W}^H\mathbf{C}_2\mathbf{W}\right|}.
	\end{split}
	\label{eq:THPD_LDD_op}
\end{equation}
By Theorem~\ref{the:op_eq}, the equivalent enhancements on $\mathcal{M}_{\text{P}}$ are 
\begin{equation}
	\begin{split}
		&\max_{P=\{\mu_0,\dots,\mu_{n-1}\}} \varphi_{\mathcal{D}_{\text{RD}}}(P)=\sum^{n-1}_{k=0} \log^2\mu_k\\
		s.t.&\quad\lambda_{i+m-n}\le\mu_i\le\lambda_i\quad(0\le i\le n-1),\\ 
	\end{split}
	\label{eq:PSD_rem_op}
\end{equation}
\begin{equation}
	\begin{split}
		&\max_{P=\{\mu_0,\dots,\mu_{n-1}\}} \varphi_{\mathcal{D}_{\text{KL}}}(P)=\sum^{n-1}_{k=0} \mu_k-1-\log\mu_k\\
		&s.t.\quad\lambda_{i+m-n}\le\mu_i\le\lambda_i\quad(0\le i\le n-1),\\ 
	\end{split}
	\label{eq:PSD_KLD_op}
\end{equation}
\begin{equation}
	\begin{split}
		&\max_{P=\{\mu_0,\dots,\mu_{n-1}\}} \varphi_{\mathcal{D}_{\text{LD}}}(P)=\sum^{n-1}_{k=0} \log\frac{\mu_k+1}{2\sqrt{\mu_k}}\\
		&s.t.\quad\lambda_{i+m-n}\le\mu_i\le\lambda_i\quad(0\le i\le n-1),\\ 
	\end{split}
	\label{eq:PSD_LDD_op}
\end{equation}
where $\lambda_{m-1}\le\cdots\le\lambda_0$ and $\mathcal{P}(\mathbf{C}_1,\mathbf{C}_2)=\{\lambda_0,\dots,\lambda_{m-1}\}$.\par
The optimization problems of (\ref{eq:PSD_rem_op}-\ref{eq:PSD_LDD_op}) are much easier than problems of (\ref{eq:THPD_rem_op}-\ref{eq:THPD_LDD_op}), because the variables $\mu_i\,(i=0,\dots,n-1)$ are independent in the objective function and constraints. Therefore, these optimization problems can be divided into the summation of subproblem as
\begin{equation}
	\sum^{n-1}_{k=0}\max_{\lambda_{k+m-n}\le\mu_k\le\lambda_k}\log^2\mu_k,
	\label{eq:PSD_rem_op_sum}
\end{equation}
\begin{equation}
	\sum^{n-1}_{k=0}\max_{\lambda_{k+m-n}\le\mu_k\le\lambda_k}\mu_k-1-\log\mu_k,
	\label{eq:PSD_KLD_op_sum}
\end{equation}
\begin{equation}
	\sum^{n-1}_{k=0} \max_{\lambda_{k+m-n}\le\mu_k\le\lambda_k}\log\frac{\mu_k+1}{2\sqrt{\mu_k}}.
	\label{eq:PSD_LDD_op_sum}
\end{equation}
Thus, they can be solved by the derivate of the objective function
\begin{equation}
	\frac{\text{d}\log^2\mu_k}{\text{d} \mu_k}=2\frac{\log \mu_k}{\mu_k},
	\label{eq:dev_rem_op}
\end{equation}
\begin{equation}
	\frac{\text{d}\mu_k-1-\log\mu_k}{\text{d} \mu_k}=1-\frac{1}{\mu_k},
	\label{eq:dev_kl_op}
\end{equation}
\begin{equation}
	\frac{\text{d}\log\frac{\mu_k+1}{2\sqrt{\mu_k}}}{\text{d} \mu_k}=\frac{\mu_k-1}{2(1+\mu_k)\mu_k}.
	\label{eq:dev_ld_op}
\end{equation}
According to their derivates, these three sorts of objective functions are monotonically decreasing in $(0,1]$ and monotonically increasing in $[1,\infty)$. Therefore, the optimal $\mu^*_i$ is chosen from the two ends of the interval $[\lambda_{i+m-n},\lambda_i]$, i.e.,
\begin{equation}
	\mu^*_i|_{\varphi_{\mathcal{D}_{\text{RD}}}}=\arg\max_{\mu_i\in\{\lambda_i,\lambda_{i+m-n}\}}\log^2\mu_i,
\end{equation}
\begin{equation}
	\mu^*_i|_{\varphi_{\mathcal{D}_{\text{KL}}}}=\arg\max_{\mu_i\in\{\lambda_i,\lambda_{i+m-n}\}}\mu_i-1-\log\mu_i,
\end{equation}
\begin{equation}
	\mu^*_i|_{\varphi_{\mathcal{D}_{\text{LD}}}}=\arg\max_{\mu_i\in\{\lambda_i,\lambda_{i+m-n}\}}\log\frac{\mu_i+1}{2\sqrt{\mu_i}}.
\end{equation}
Then the enhanced mapping can be obtained by the algorithm~\ref{alg:P2W}, which is
\begin{equation}
	\mathbf{W}_*=\mathbf{C}_2^{-\frac{1}{2}}
	\begin{bmatrix}
		\sqrt{\frac{\mu^*_0-\lambda_{m-n}}{\lambda_0-\lambda_{m-n}}}\bm{v}^H_0+\sqrt{\frac{\lambda_0-\mu^*_0}{\lambda_0-\lambda_{m-n}}}\bm{v}^H_{m-n}\\
		\sqrt{\frac{\mu^*_1-\lambda_{m-n+1}}{\lambda_1-\lambda_{m-n+1}}}\bm{v}^H_1+\sqrt{\frac{\lambda_1-\mu^*_1}{\lambda_1-\lambda_{m-n+1}}}\bm{v}^H_{m-n+1}\\
		\vdots\\
		\sqrt{\frac{\mu^*_{n-1}-\lambda_{m-1}}{\lambda_{n-1}-\lambda_{m-1}}}\bm{v}^H_{n-1}+\sqrt{\frac{\lambda_{n-1}-\mu^*_{n-1}}{\lambda_{n-1}-\lambda_{m-1}}}\bm{v}^H_{m-1}
	\end{bmatrix},
\end{equation}
where $\bm{v}_0,\bm{v}_1,\cdots,\bm{v}_{m-1}$ are the corresponding eigenvectors of $\lambda_0,\dots,\lambda_{m-1}$.\par
\emph{Computational efficiency}: The computation of solving the optimization (\ref{eq:PSD_rem_op}-\ref{eq:PSD_LDD_op}) is without iterations, and the main computational burden is from working out each $\mu^*_k$ by a closed analytic expression. So, the computational complexity is $\mathcal{O}(m)$ for solving (\ref{eq:PSD_rem_op}-\ref{eq:PSD_LDD_op}). The previous methods\cite{Yang2020} need the gradient descent-based method to solve the optimization problem, and the SVD (singular value decomposition) is required in each iteration. That means the computational complexity is $\mathcal{O}(km^3)$ ($k$ indicates the number of iteration) for directly solving (\ref{eq:THPD_rem_op}-\ref{eq:THPD_LDD_op}) on Grassmannian manifold.
In addition, the transformations from (\ref{eq:THPD_rem_op}-\ref{eq:THPD_LDD_op}) to (\ref{eq:PSD_rem_op}-\ref{eq:PSD_LDD_op}) depends on the computation of eigenvalues, so the computation complexity of the transformations is $\mathcal{O}(m^3)$.
Totally, the computational complexity of solving the optimization (\ref{eq:THPD_rem_op}-\ref{eq:THPD_LDD_op}) descends from $\mathcal{O}(km^3)$ to $\mathcal{O}(m^3)$ by transforming them to (\ref{eq:PSD_rem_op}-\ref{eq:PSD_LDD_op}).\par
The detection performances of the enhanced detectors are shown in Fig.\ref{fig:curve_pd_enhance}. According to this experiment, the following conclusions can be drawn:
\begin{itemize}
	\item The closed-form solution of the enhanced mapping is figured out by transforming (\ref{eq:THPD_rem_op}-\ref{eq:THPD_LDD_op}) to (\ref{eq:PSD_rem_op}-\ref{eq:PSD_LDD_op}).
	\item The complexity of computing the enhanced mapping descends from $\mathcal{O}(km^3)$ to $\mathcal{O}(m^3)$.
	\item KL divergence is the best among these three sorts of geometric measures.
	\item The improvement of the enhancement of KLD is less than others.
	\item The smaller dimension $n$ results in the better performance of the enhanced detector.
\end{itemize}

\section{Analytic Method Based on Dual Power Spectrum Manifold for Detection Performance}\label{sec:perform}
As discussed in section~\ref{sec:THPD_SP}, the affine invariant geometric measures on $\mathcal{M}_{\mathcal{T}H_{++}}$ is equivalent to a corresponding induced potential function on $\mathcal{M}_{\text{P}}$. So, the geometric detector can be expressed as a simpler form by the induced potential function than the original form, and the simpler form can more obviously reveal the relation between the test statistics and the characteristic of the signal, which would benefit the analysis of the detection performance. Moreover, the remaining problem in the enhanced detection, that the selection criterion of the dimension of the enhanced mapping, is also possible to be solved from the view of the induced potential function. This section would introduce the analytic method of the detection performance and discuss the optimal dimension of the enhanced mapping on $\mathcal{M}_{\text{P}}$.\par
\subsection{Performance Analysis Based on the Induced Potential Function}
As the decision rule (\ref{eq:dec_rul}), the large geometric measure means that the signal has a great chance to be detected, and the small one is the opposite.
In addition, according to Theorem~\ref{the:afi2ipf}, the geometric detector using the affine invariant geometric measure can be reformulated as the form based on the induced potential function, so the detection performance is related to the value of the induced potential function. As illustrated by Fig.\ref{fig:analytic_method}, the maximal point of the induced potential function means the best detection performance, and the advantageous characteristics, which benefit the detection performance, can be deduced by analyzing the corresponding power spectrum to the maximal point. 
Similarly, the disadvantageous characteristics can be analyzed by the corresponding power spectrum to the minimal point.
Without loss of generality, only the advantageous characteristics for the detection performance is considered in this part. In the next contents, we use the partial derivative to obtain the maximal point and deduce the advantageous characteristics.\par
\begin{figure}[htp]
	\centering
	\includegraphics[width=0.49\textwidth]{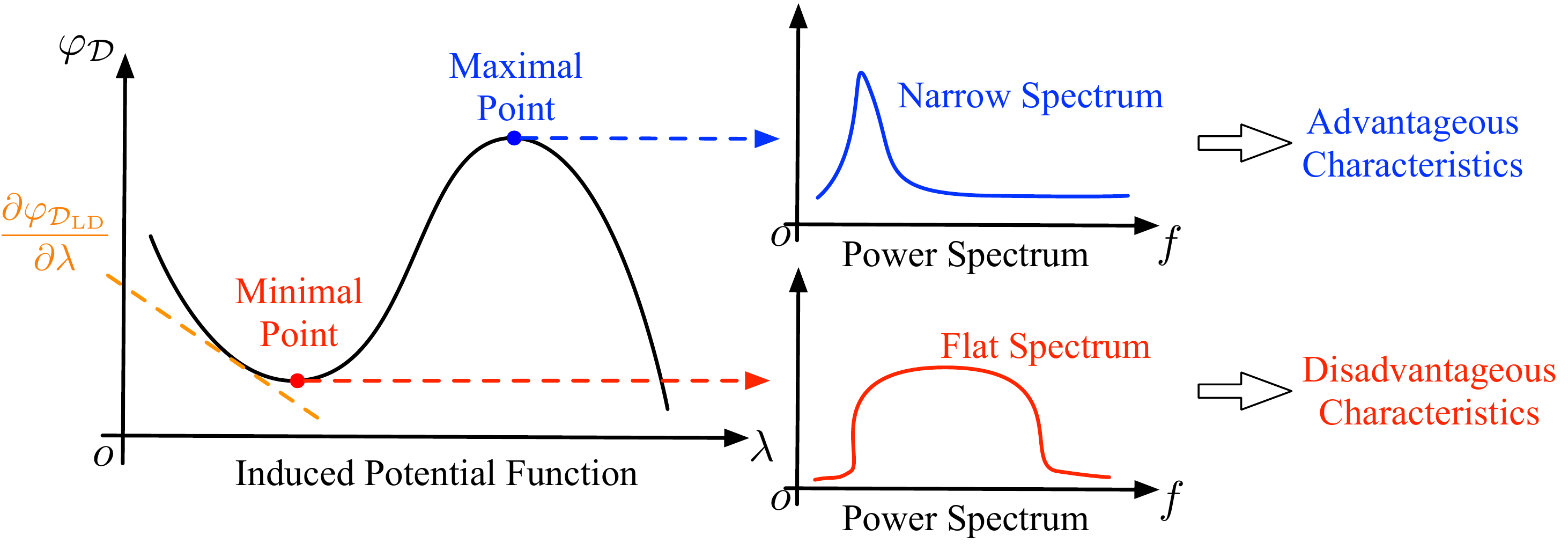}
	\caption{The variable of the induced potential function corresponds to the power spectrum of the observed data. The advantageous and disadvantageous characteristics for detection performance are embedded in the corresponding power spectrums to the minimal point and the maximal point.}
	\label{fig:analytic_method}
\end{figure}
Firstly, the background settings are introduced as follows. Suppose that the estimate $\hat{\mathbf{C}}$ precisely equals to the Toeplitz covariance matrix of the clutter in the primary data. That means the Toeplitz HPD matrix $\mathbf{C}_{\bm{x}}$ and matrix $\hat{\mathbf{C}}$ can be expressed as
\begin{equation}
	\begin{cases}
		\mathbf{C}_{\bm{x}}=\mathbf{C}_{\bm{s}}+\mathbf{C}_{\bm{w}},\\
		\hat{\mathbf{C}}=\mathbf{C}_{\bm{w}},\\
	\end{cases}
\end{equation}
where $\mathbf{C}_{\bm{w}}$ is the Toeplitz covariance matrix of the clutter in the primary data.\par
According to Lemma~\ref{lem:lam2p} and the attached remark, the matrix can be approximated to the following form,
\begin{equation}
	\begin{cases}
		\mathbf{C}_{\bm{x}}\approx\mathbf{F}\text{diag}\,(\lambda_0^s+\lambda'_0,\dots,\lambda^s_{m-1}+\lambda'_{m-1})\mathbf{F}^H,\\
		\hat{\mathbf{C}}\approx\mathbf{F}\text{diag}\,(\lambda'_0,\dots,\lambda'_{m-1})\mathbf{F}^H,\\
	\end{cases}
\end{equation}
where $\lambda^s_k\,(k=0,\dots,m-1)$ are the eigenvalues of $\mathbf{C}_{\bm{s}}$, and $\lambda'_k\,(k=0,\dots,m-1)$ are the eigenvalues of $\mathbf{C}_{\bm{w}}$. Let $\lambda^*_k=\lambda^s_k/\lambda'_k$, the $\lambda^*_k$ can be regarded as the power of the $k^{\text{th}}$ frequency component of the target echo whitened by the clutter according to Lemma~\ref{lem:lam2p} and Definition~\ref{def:map}.
Then $\mathcal{P}(\mathbf{C}_{\bm{x}},\hat{\mathbf{C}})=(\lambda^*_0+1,\dots,\lambda^*_{m-1}+1)$ is established. 
Because the summation of eigenvalues equals to the trace, the following equation is established,
\begin{equation}
	\sum^{m-1}_{k=0}\lambda^*_k=mc_0|_{\mathcal{H}_0}=m\sigma^2\;\Leftrightarrow\; \lambda^*_0=m\sigma^2-\sum^{m-1}_{k=1}\lambda^*_k,
\end{equation}
where $\sigma^2$ indicates the SCR. Then, under fixed SCR $\sigma^2$, the induced potential function can be reformulated as the following adjusted form,
\begin{equation}
	\begin{split}
		&\varphi'_{\mathcal{D}}(\lambda^*_1,\dots,\lambda^*_{m-1})\\
		=&\varphi_{\mathcal{D}}\left(1+m\sigma^2-\sum^{m-1}_{k=1}\lambda^*_k,1+\lambda^*_1,\dots,1+\lambda^*_{m-1}\right).
	\end{split}
\end{equation}\par
As instances, the typical geometric measures are analyzed by this approach. The adjusted induced potential functions with respect to $\lambda^*_1,\dots,\lambda^*_{m-1}$ are as follows,
\begin{equation}
	\begin{split}
		\varphi'_{\mathcal{D}_{\text{RD}}}&(\lambda^*_1,\dots,\lambda^*_{m-1})\\
		&=\sum^{m-1}_{k=1}\log^2(1+\lambda^*_k)+\log^2(1+m\sigma^2-\sum^{m-1}_{k=1}\lambda^*_k),
	\end{split}
	\label{eq:fun_rem}
\end{equation}
\begin{equation}
	\begin{split}
		\varphi'_{\mathcal{D}_{\text{KL}}}&(\lambda^*_1,\dots,\lambda^*_{m-1})\\
		&=m\sigma^2-\sum^{m-1}_{k=0}\log(1+\lambda^*_k)-\log(1+m\sigma^2-\sum^{m-1}_{k=1}\lambda^*_k),
	\end{split}
	\label{eq:fun_kl}
\end{equation}
\begin{equation}
	\begin{split}
		\varphi'_{\mathcal{D}_{\text{LD}}}&(\lambda^*_1,\dots,\lambda^*_{m-1})\\
		&=\sum^{m-1}_{k=1}\log\frac{2+\lambda^*_k}{2\sqrt{\lambda^*_k+1}}+\log\frac{2+m\sigma^2-\sum^{m-1}_{k=1}\lambda^*_k}{2\sqrt{1+m\sigma^2-\sum^{m-1}_{k=1}\lambda^*_k}},
	\end{split}
	\label{eq:fun_ld}
\end{equation}
and the domain of definition is
\begin{equation}
	\left\{(\lambda^*_1,\dots,\lambda^*_{m-1})\;\bigg|\;\lambda^*_k\ge 0,\,\sum^{m-1}_{k=1}\lambda^*_k\le m\sigma^2\right\}.
\end{equation}
\begin{figure*}[thp]
	\centering
	\subfigure[Riemannian distance]{\includegraphics[width=0.32\textwidth]{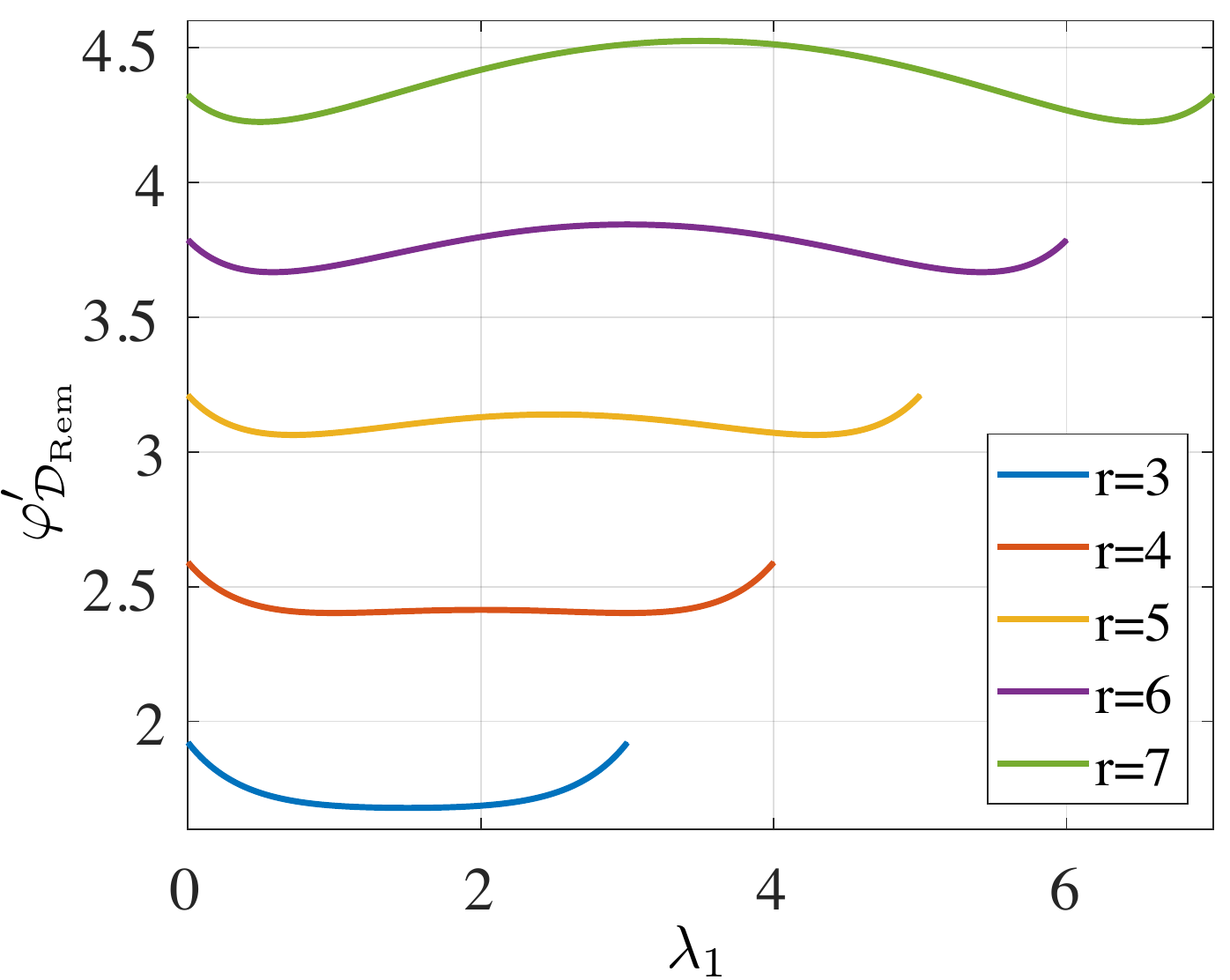}}
	\subfigure[Kullback-Leibler Divergence]{\includegraphics[width=0.32\textwidth]{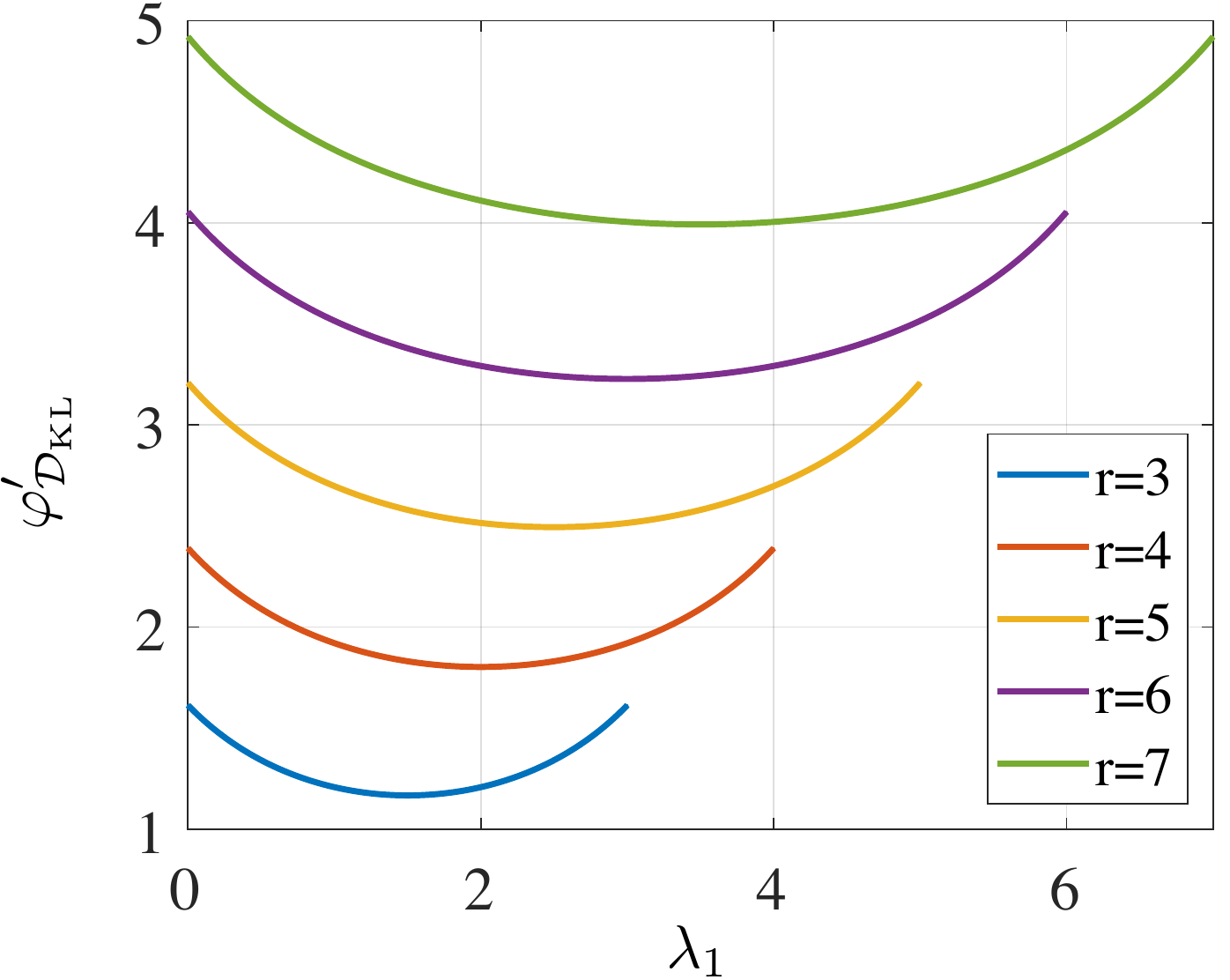}}
	\subfigure[log-determinant Divergence]{\includegraphics[width=0.32\textwidth]{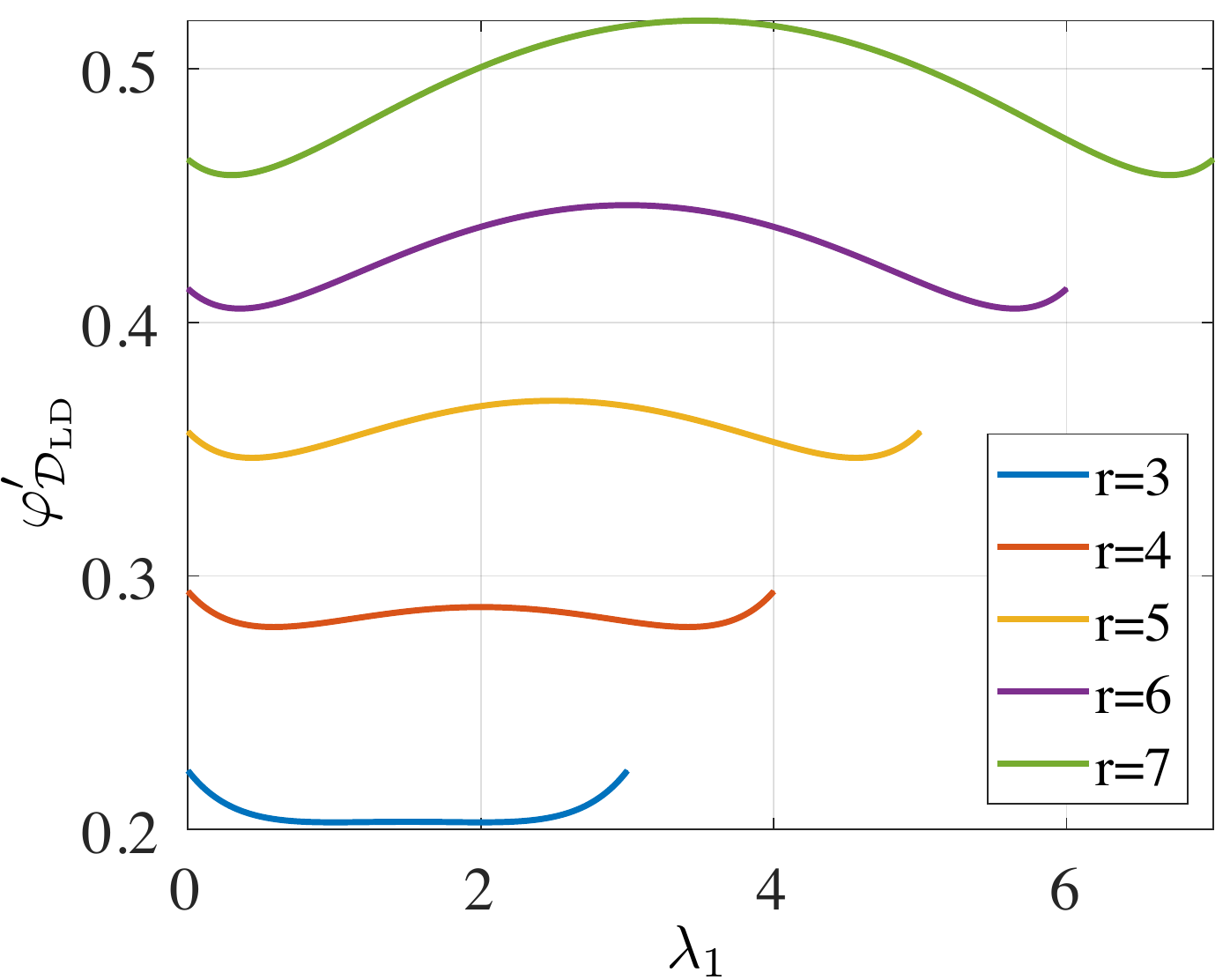}}
	\caption{The curve of induced potential functions with respect to $\lambda^*_1$, where other eigenvalues $\lambda^*_2,\dots,\lambda^*_m$ are fixed. The $r$ in the legends is $r=m\sigma^2-\sum^{m-1}_{k=2}\lambda^*_k=\lambda^*_0+\lambda^*_1$.}
	\label{fig:curve_pf}
\end{figure*}
The curves of induced potential functions (\ref{eq:fun_rem}-\ref{eq:fun_ld}) with respect to $\lambda^*_1$ (suppose $\lambda^*_2,\dots,\lambda^*_{m-1}$ is fixed) are shown in Fig.\ref{fig:curve_pf}. 
For $\mathcal{D}_{\text{RD}}$ and $\mathcal{D}_{\text{LD}}$, the ratio of the medium to the two ends is increasing when $r$ ($r=\lambda^*_0+\lambda^*_1$) is ascending. The maximal point is in the medium when $r$ is large, and the maximal point is in the two ends when $r$ is small. Moreover, for $\mathcal{D}_{\text{KL}}$, the maximal point is always in the two ends regardless of the value of $r$, and the KLD reaches the minimum when $\lambda^*_1$ is in the medium.\par
Moreover, the partial derivate of these functions are as follows,
\begin{equation}
	\frac{\partial \varphi'_{\mathcal{D}_{\text{RD}}}}{\partial \lambda^*_i}=2\left(\frac{\log (1+\lambda^*_i)}{1+\lambda^*_i}-\frac{\log (1+m\sigma^2-\sum\limits^{m-1}_{k=1}\lambda^*_k)}{1+m\sigma^2-\sum\limits^{m-1}_{k=1}\lambda^*_k}\right),
	\label{eq:dev_rem}
\end{equation}
\begin{equation}
	\frac{\partial \varphi'_{\mathcal{D}_{\text{KL}}}}{\partial \lambda^*_i}=-\frac{1}{1+\lambda^*_i}+\frac{1}{1+m\sigma^2-\sum\limits^{m-1}_{k=1}\lambda^*_k},
	\label{eq:dev_kl}
\end{equation}
\begin{equation}
	\begin{split}
		\frac{\partial \varphi'_{\mathcal{D}_{\text{LD}}}}{\partial \lambda^*_i}=&\frac{\lambda^*_i}{2(2+\lambda^*_i)(1+\lambda^*_i)}\\
	&-\frac{m\sigma^2-\sum\limits^{m-1}_{k=1}\lambda^*_k}{2(2+m\sigma^2-\sum\limits^{m-1}_{k=1}\lambda^*_k)(1+m\sigma^2-\sum\limits^{m-1}_{k=1}\lambda^*_k)}.
	\end{split}	
	\label{eq:dev_ld}
\end{equation}
By the partial derivative of the induced potential function, the maximal points and the minimal points can be calculated. In the following contents, we discuss the maximal points of these induced potential function to conclude the factors which benefit the detection performance.\par
Analyzing the monotonic intervals of (\ref{eq:fun_rem}-\ref{eq:fun_ld}) by their partial derivates (\ref{eq:dev_rem}-\ref{eq:dev_ld}), the maximal value is reached over the following situations.
\begin{itemize}
	\item Riemannian Distance and log-determinant Divergence:
	\begin{equation}
		\lambda^*_i\in\left\{0,m\sigma^2-\sum\limits_{1\le k\le m-1\atop k\ne i}\lambda^*_k,\frac{m\sigma^2}{2}-\sum\limits_{1\le k\le m-1\atop k\ne i}\frac{\lambda^*_k}{2}\right\};
		\label{eq:ep_rem}
	\end{equation}
	\item Kullback-Leibler Divergence:
	\begin{equation}
		\lambda^*_i\in\left\{0,m\sigma^2-\sum\limits_{1\le k\le m-1\atop k\ne i}\lambda^*_k\right\}.
		\label{eq:ep_kl}
	\end{equation}
\end{itemize}
\begin{remark}
	The equation (\ref{eq:ep_rem}) corresponds to the Fig.\ref{fig:curve_pf}, each component of the maximal point is chosen from two ends or the medium, and which situation is maximal depends on the value of $m\sigma^2$, i.e., SCR. The equation (\ref{eq:ep_kl}) also reveals same conclusions as Fig.\ref{fig:curve_pf}, that the maximal points are in the two ends.
\end{remark}\par
By using the adjustment method to the (\ref{eq:ep_rem}) and (\ref{eq:ep_kl}), the maximal values of these considered induced potential functions are from the following power spectrum, respectively.
\begin{itemize}
	\item Riemannian Distance and log-determinant Divergence:
	\begin{equation}
		P^*_k=\bigg(0,\dots,0,\underbrace{\frac{m\sigma^2}{k},\dots,\frac{m\sigma^2}{k}}_{k}\bigg)\quad(k=1,\dots,m),
		\label{eq:mp_rem}
	\end{equation}
	$P^*_k$ is the candidate of the maximal power spectrum, i.e., $P^*_{\max}\in\{P^*_1,\dots,P^*_m\}$, and which one is maximal depends on the value of $m\sigma^2$.
	\item Kullback-Leibler Divergence:
	\begin{equation}
		P^*_{\max}=\left(0,\dots,0,m\sigma^2\right).
		\label{eq:mp_kl}
	\end{equation}
\end{itemize}
Equation (\ref{eq:mp_kl}) represents the power spectrum with only one nonzero component. It means that the narrower spectrum encourages larger induced potential function when the geometric measure is KL divergence. Moreover, equation (\ref{eq:mp_rem}) represents the flat spectrums with different bandwidths, and which bandwidth is maximal depends on the SCR ($m\sigma^2$). So, the maximal test statistic is selected from the flat spectrum when the geometric measure is one of other twos. Besides, according to the curves in Fig.\ref{fig:curve_pf}, the optimal $k$ is incremental by the ascending SCR.
\begin{figure*}[thp]
	\centering
	\subfigure[Riemannian distance]{\includegraphics[width=0.32\textwidth]{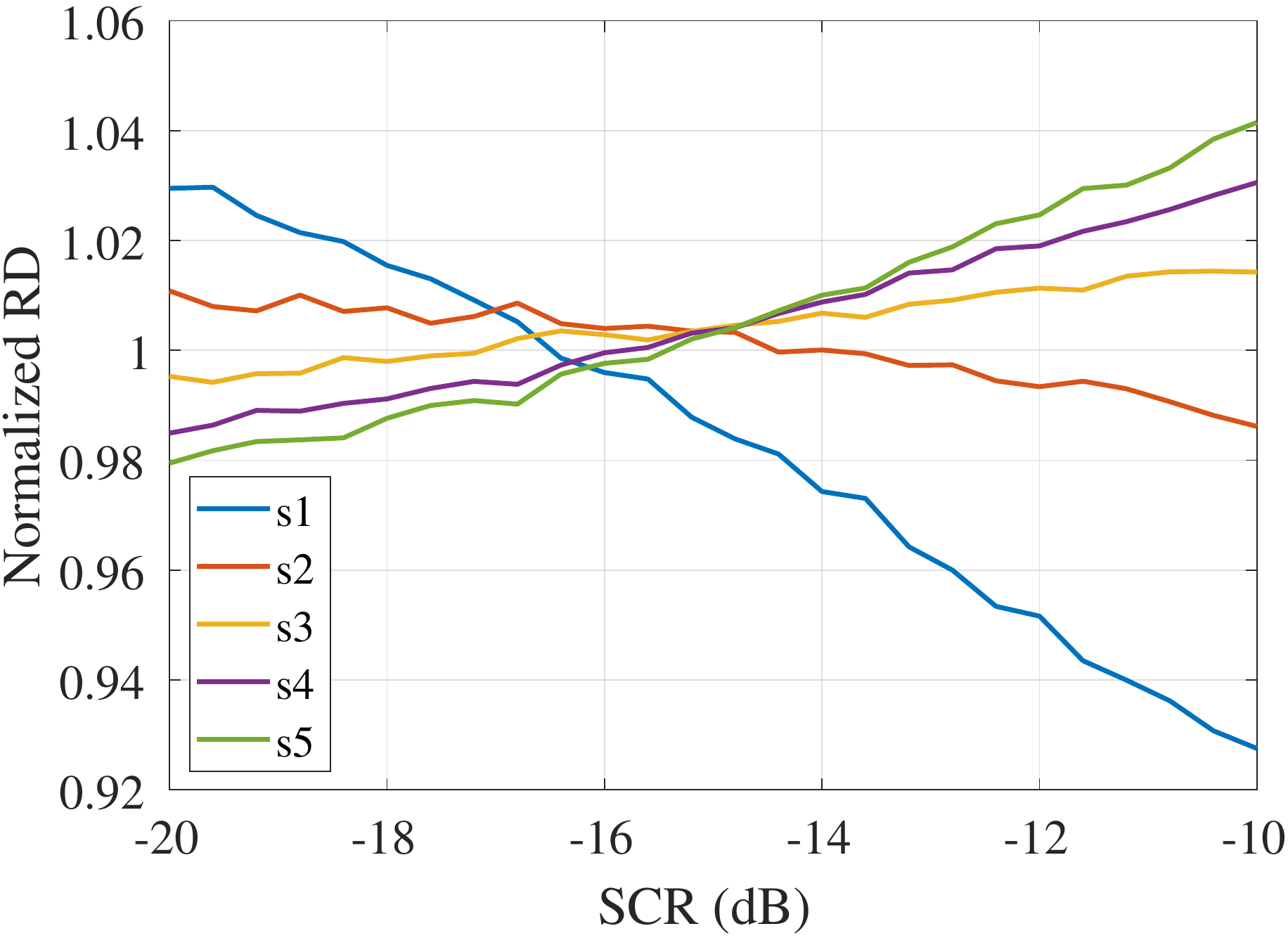}}
	\subfigure[Kullback-Leibler divergence]{\includegraphics[width=0.32\textwidth]{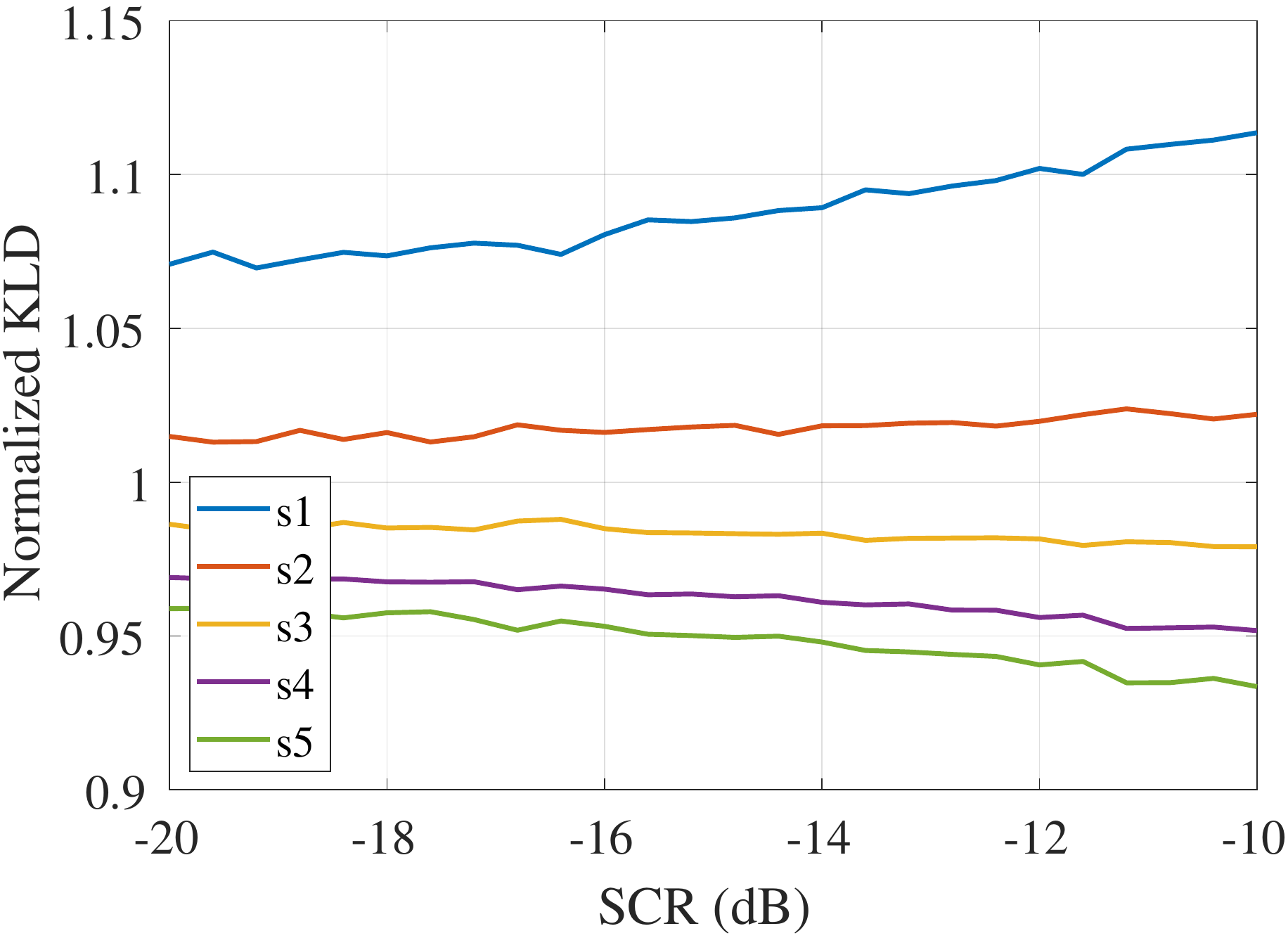}}
	\subfigure[log-determinant divergence]{\includegraphics[width=0.32\textwidth]{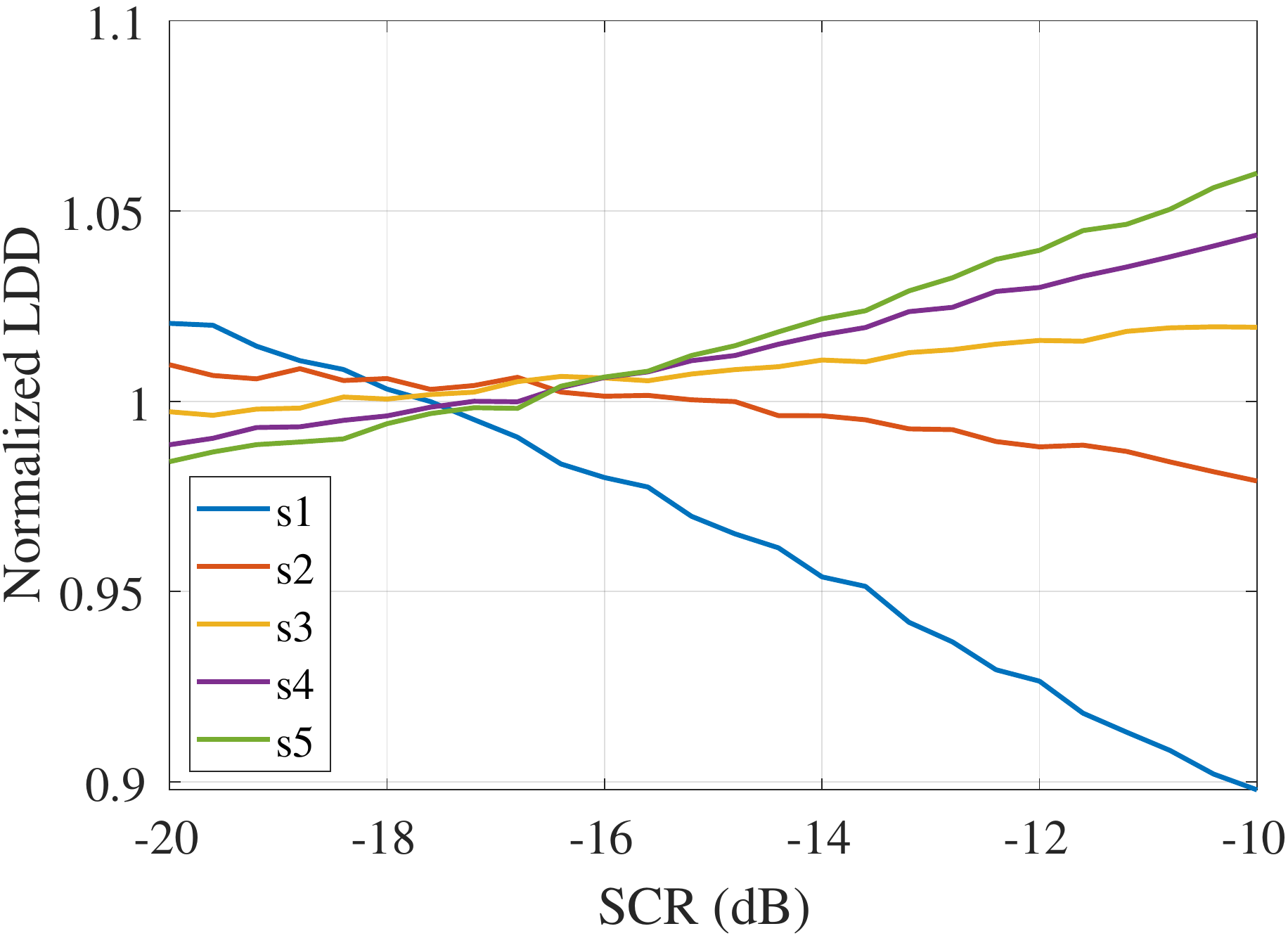}}
	\caption{The normalized geometric measures versus SCR with $\bm{s}_1,\dots,\bm{s}_5$. For each SCR, the geometric measure of $\bm{s}_k$ is normalized by the mean of $\bm{s}_1,\dots,\bm{s}_5$ to clearly show the relative value of them. Moreover, each normalized geometric measure is the average of $10^5$ Monte Carlo runs.}
	\label{fig:curve_ts}
\end{figure*}
\begin{figure*}[htpb]
	\centering
	\subfigure[Riemannian distance]{\includegraphics[width=0.32\textwidth]{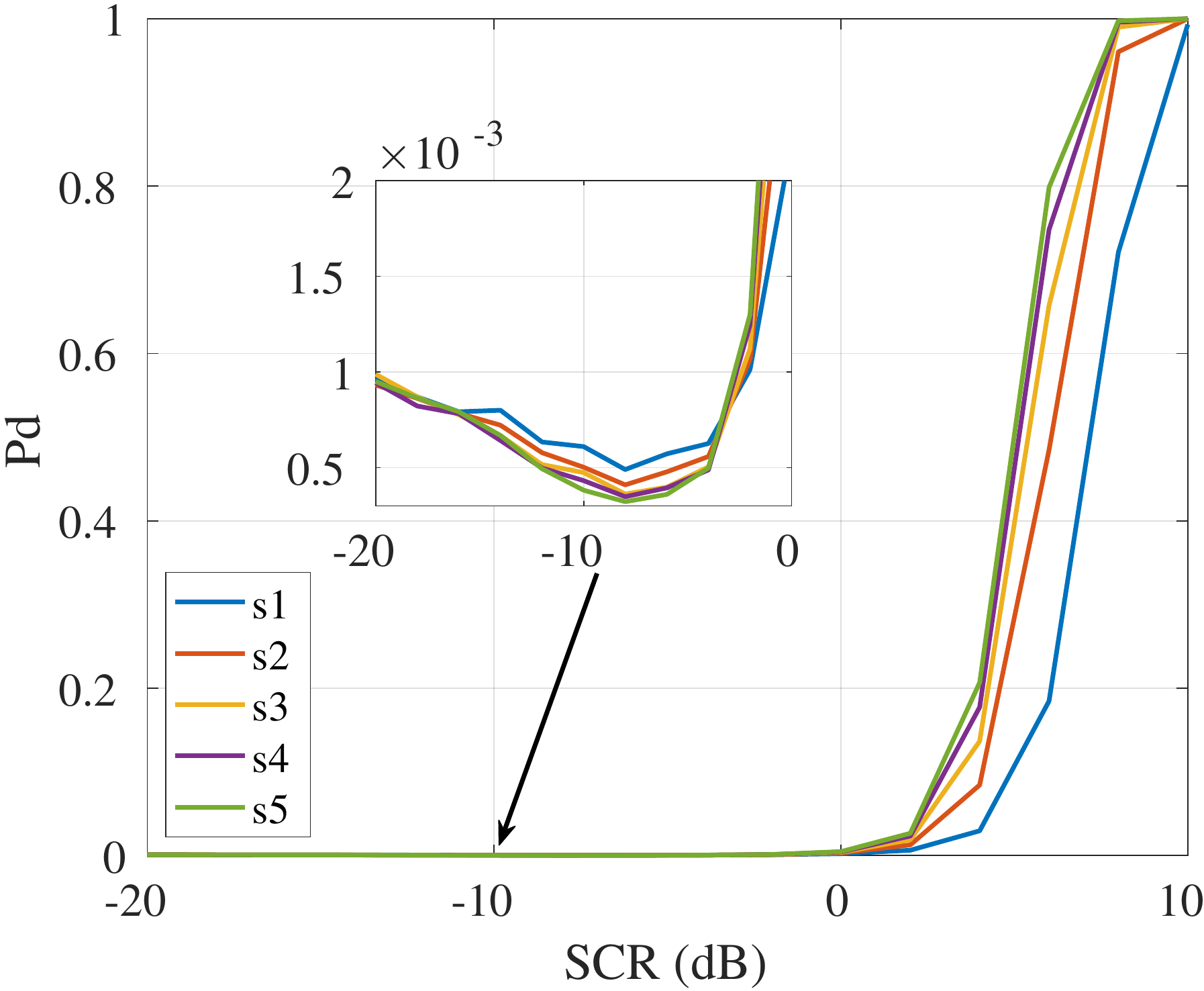}}
	\subfigure[Kullback-Leibler divergence]{\includegraphics[width=0.32\textwidth]{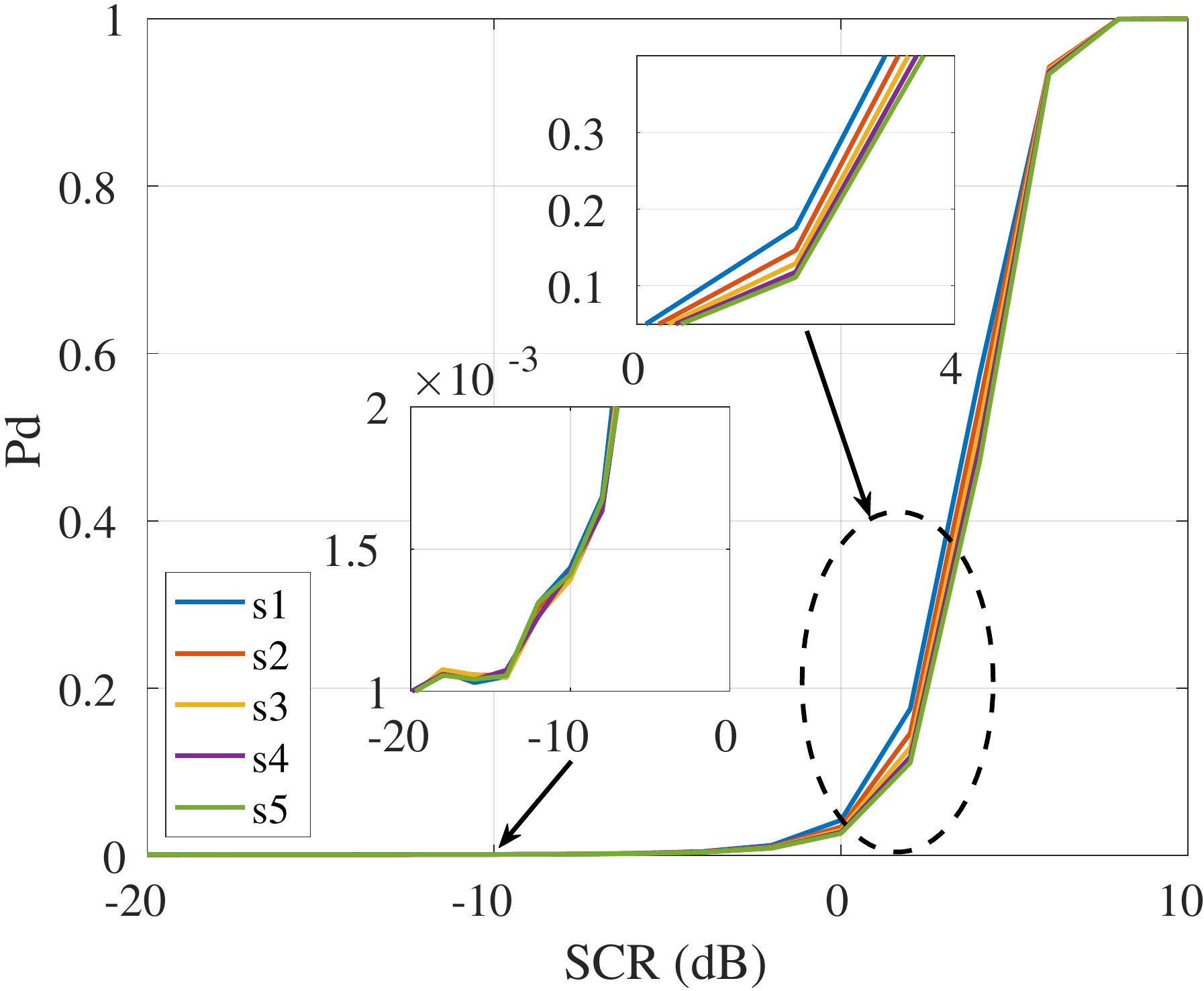}}
	\subfigure[log-determinant divergence]{\includegraphics[width=0.32\textwidth]{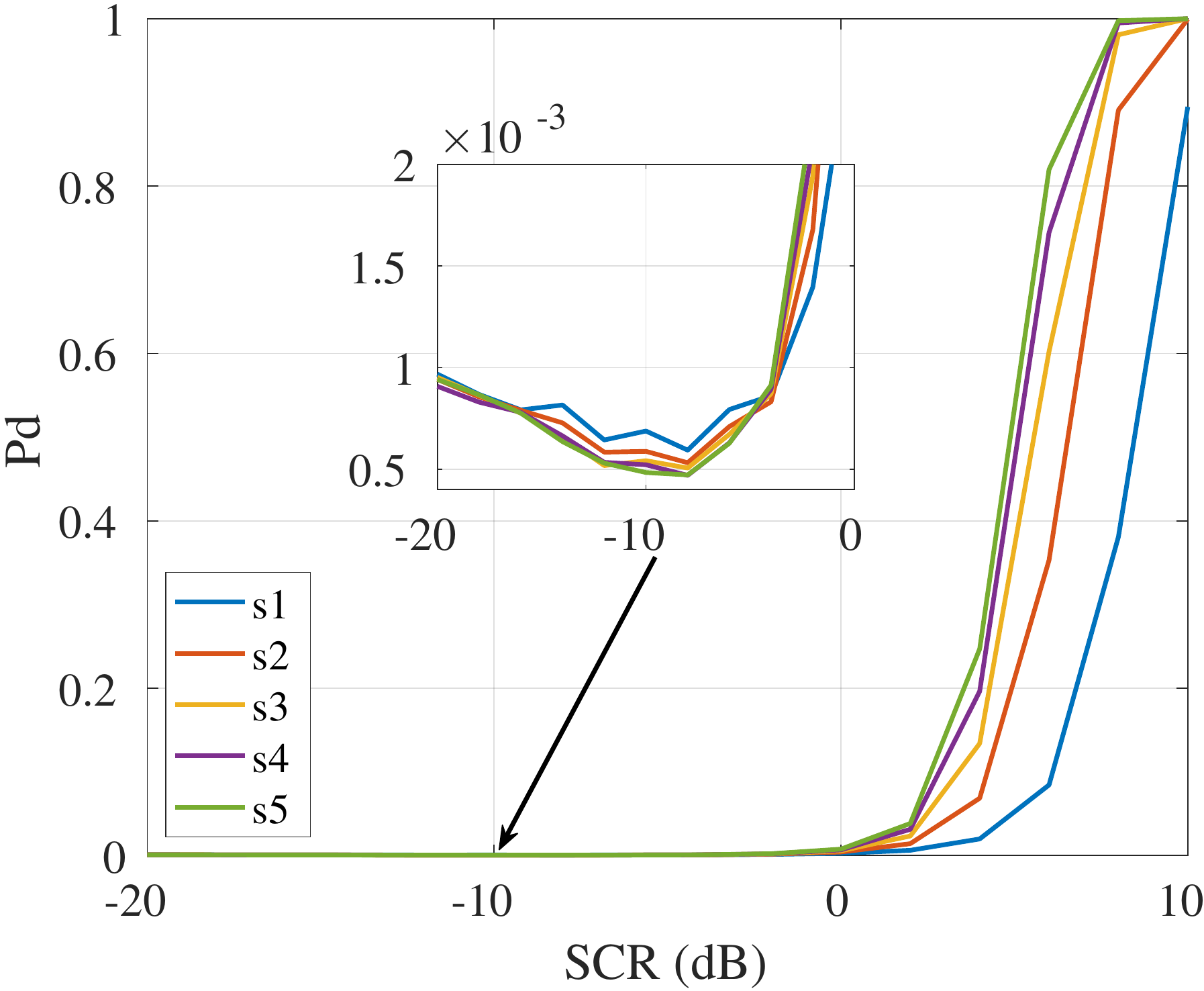}}
	\caption{The detection performance of $\bm{s}_k\,(k=1,\dots,5)$, under SCR from -20 dB to 10 dB. In this figure, the detection probability is calculated based on $5\times10^5$ Monte Carlo runs. And, the curves of the detection probability under low SCR (-20 dB to 0 dB) are attached in the center of figures.}
	\label{fig:curve_pd}
\end{figure*}
\subsubsection*{Numerical Example}
The above conclusions are verified by the following sample. 
\begin{example}
	The majority of settings are as same as example~\ref{exp:exp}, which are shown in Table~\ref{Tab:para}.
Differently, there are five sorts of target echo steering vectors,
\begin{equation}
	\bm{s}_k=A\mathcal{F}^{-1}\bigg(0,\dots,0,\underbrace{\frac{1}{\sqrt{k}},\dots,\frac{1}{\sqrt{k}}}_k\bigg),\quad k=1,\dots,5,
\end{equation}
where $A$ is the amplitude of the target echo and $\mathcal{F}^{-1}$ denotes the inverse discrete Fourier transform. For convenience, we define the discrete bandwidth $B$ of $\bm{s}_k$ is $k$, then the discrete bandwidths of these selected target echo are $1,\dots,5$, respectively.
\end{example}
To verify the proposed analytic method, the test statistics of geometric detectors with $\bm{s}_1,\dots,\bm{s}_5$ would be discussed firstly. To coincide with the above analysis, we suppose the reference data is actually equivalent to the component of the clutter in the primary data. For showing the order changing, the test statistics are normalized by the mean of $\bm{s}_1,\dots,\bm{s}_5$ for each SCR. The normalized test statistics of geometric detectors with $\bm{s}_1,\dots,\bm{s}_5$ are presented in Fig.\ref{fig:curve_ts}.
The results correspond to the conclusion that the discrete bandwidth $k$ of the maximal test statistic depends on the SCR for RD and LDD, and the maximal $k$ equals to 1 for KLD. Moreover, the results prove the discussion that the optimal $k$ of the RD and the LDD ascends when SCR is increasing.\par
Furthermore, the detection performance of this example with the geometry-based detection scheme (as Fig.\ref{fig:CFAR} depicted) is shown in Fig.\ref{fig:curve_pd}. For KLD, the detection probabilities of the five signals $\bm{s}_1,\dots,\bm{s}_5$ are similar. In the magnified figure, we can see that $\bm{s}_1$ has the largest detection probability and is followed by $\bm{s}_2,\bm{s}_3,\bm{s}_4,\bm{s}_5$, successively. For the RD and LDD, the order of the detection probabilities in descending is $\bm{s}_5,\bm{s}_4,\bm{s}_3,\bm{s}_2,\bm{s}_1$ under high SCR, and it is $\bm{s}_1,\bm{s}_2,\bm{s}_3,\bm{s}_4,\bm{s}_5$ under low SCR.\par
\begin{figure*}[htpb]
	\centering
	\subfigure[Enhanced Riemannian distance of $\bm{s}_1$]{\includegraphics[width=0.32\textwidth]{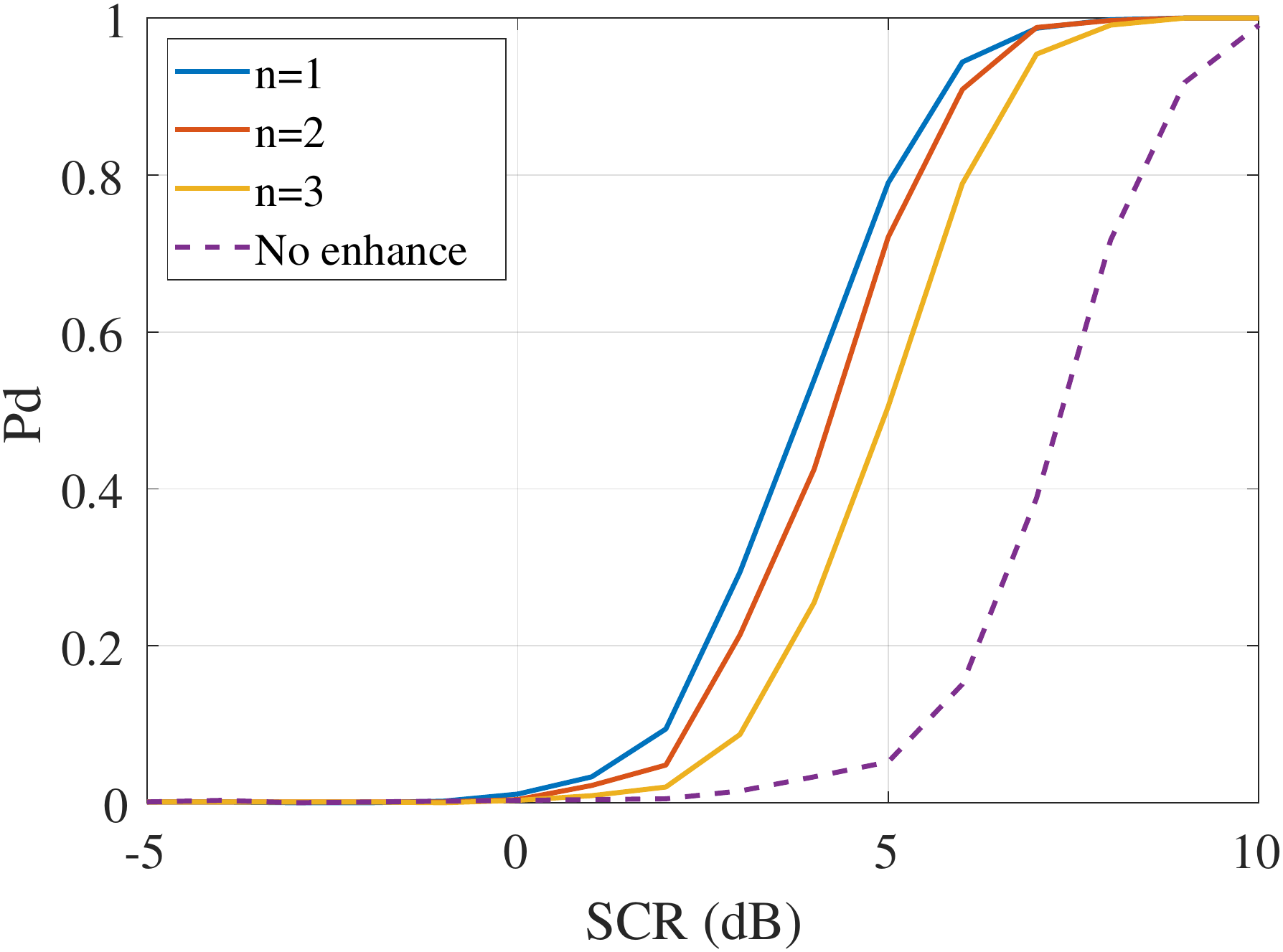}}
	\subfigure[Enhanced Kullback-Leibler divergence of $\bm{s}_1$]{\includegraphics[width=0.32\textwidth]{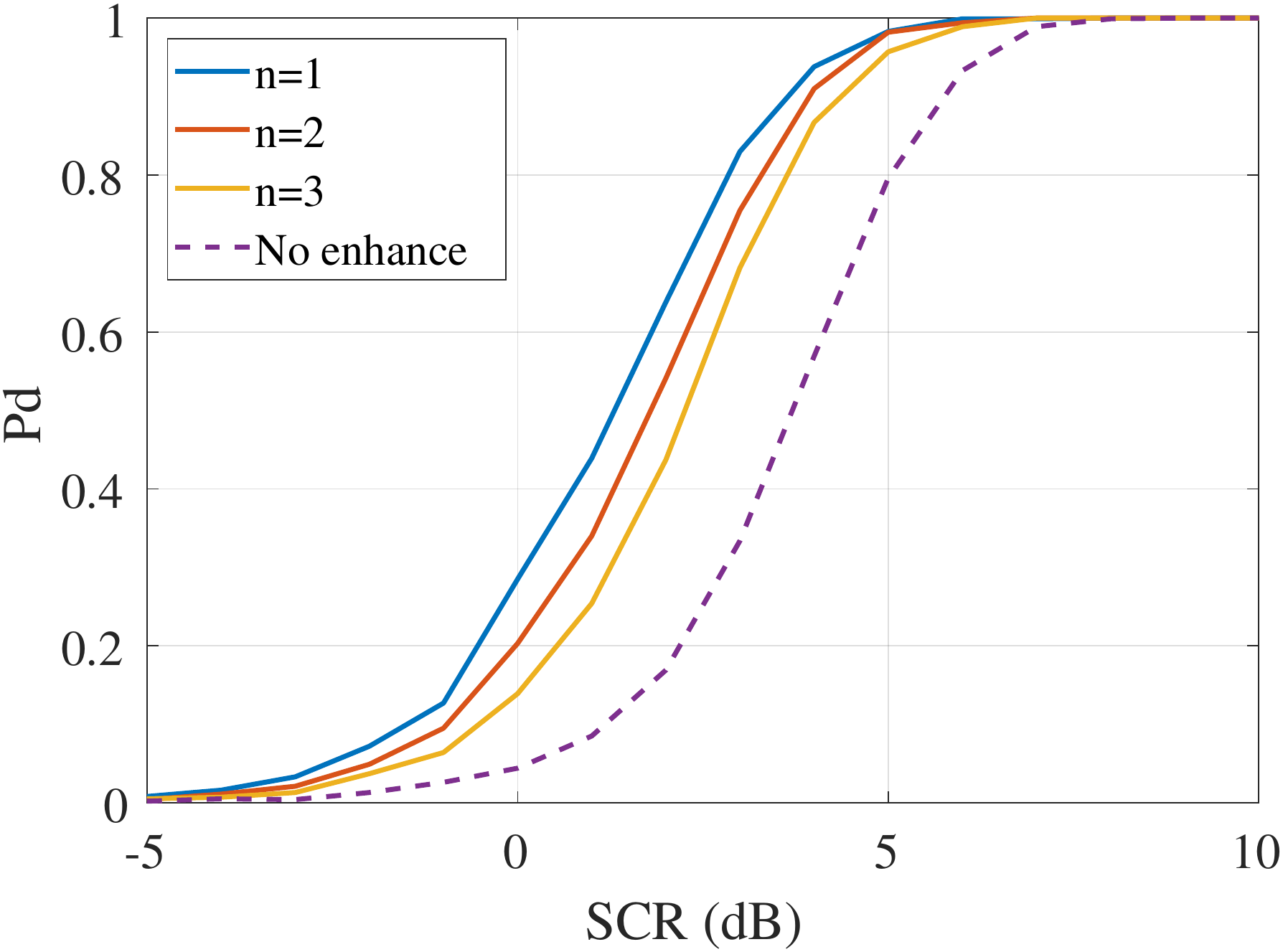}}
	\subfigure[Enhanced log-determinant divergence of $\bm{s}_1$]{\includegraphics[width=0.32\textwidth]{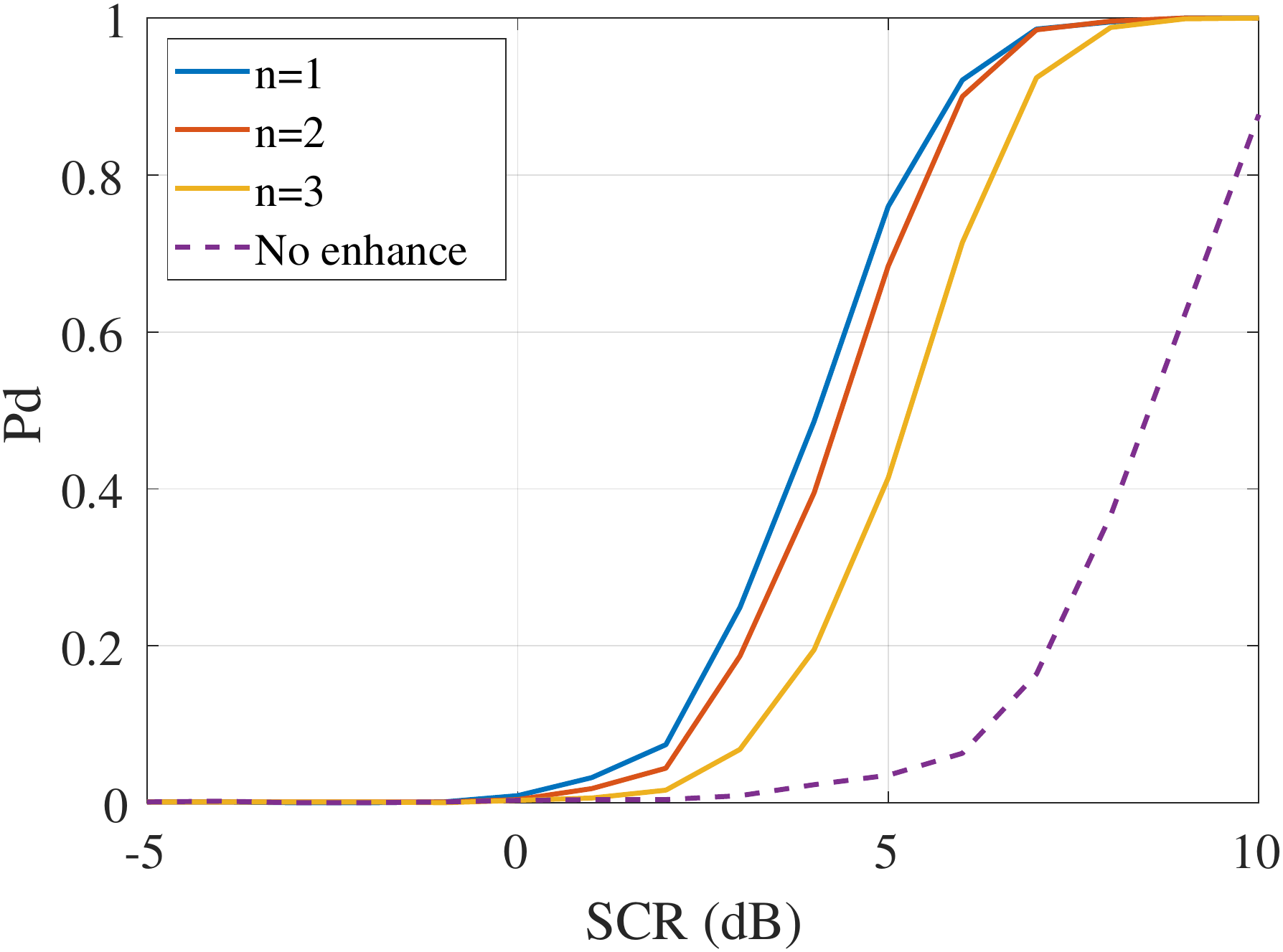}}
	\subfigure[Enhanced Riemannian distance of $\bm{s}_1$]{\includegraphics[width=0.32\textwidth]{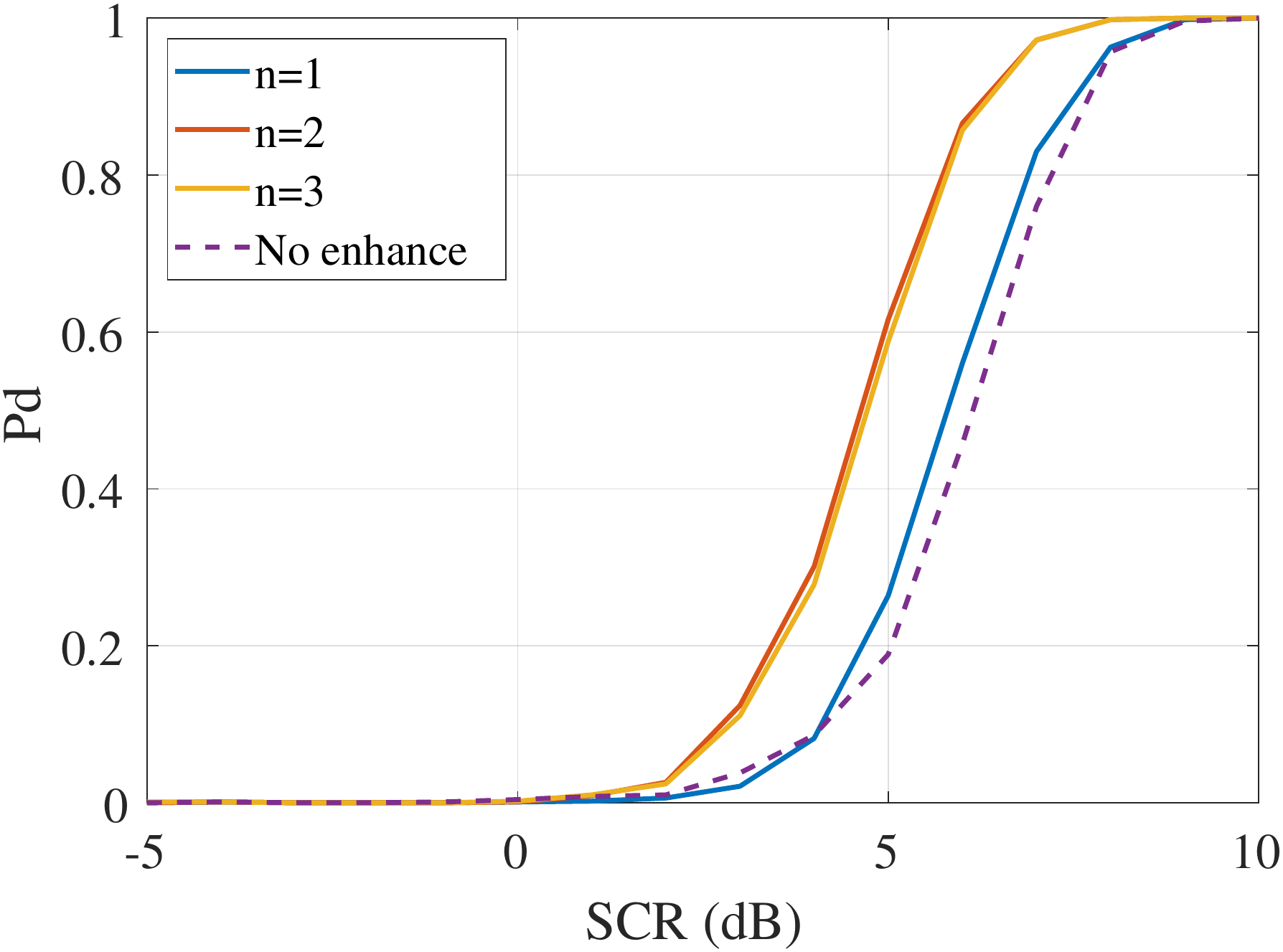}}
	\subfigure[Enhanced Kullback-Leibler divergence of $\bm{s}_2$]{\includegraphics[width=0.32\textwidth]{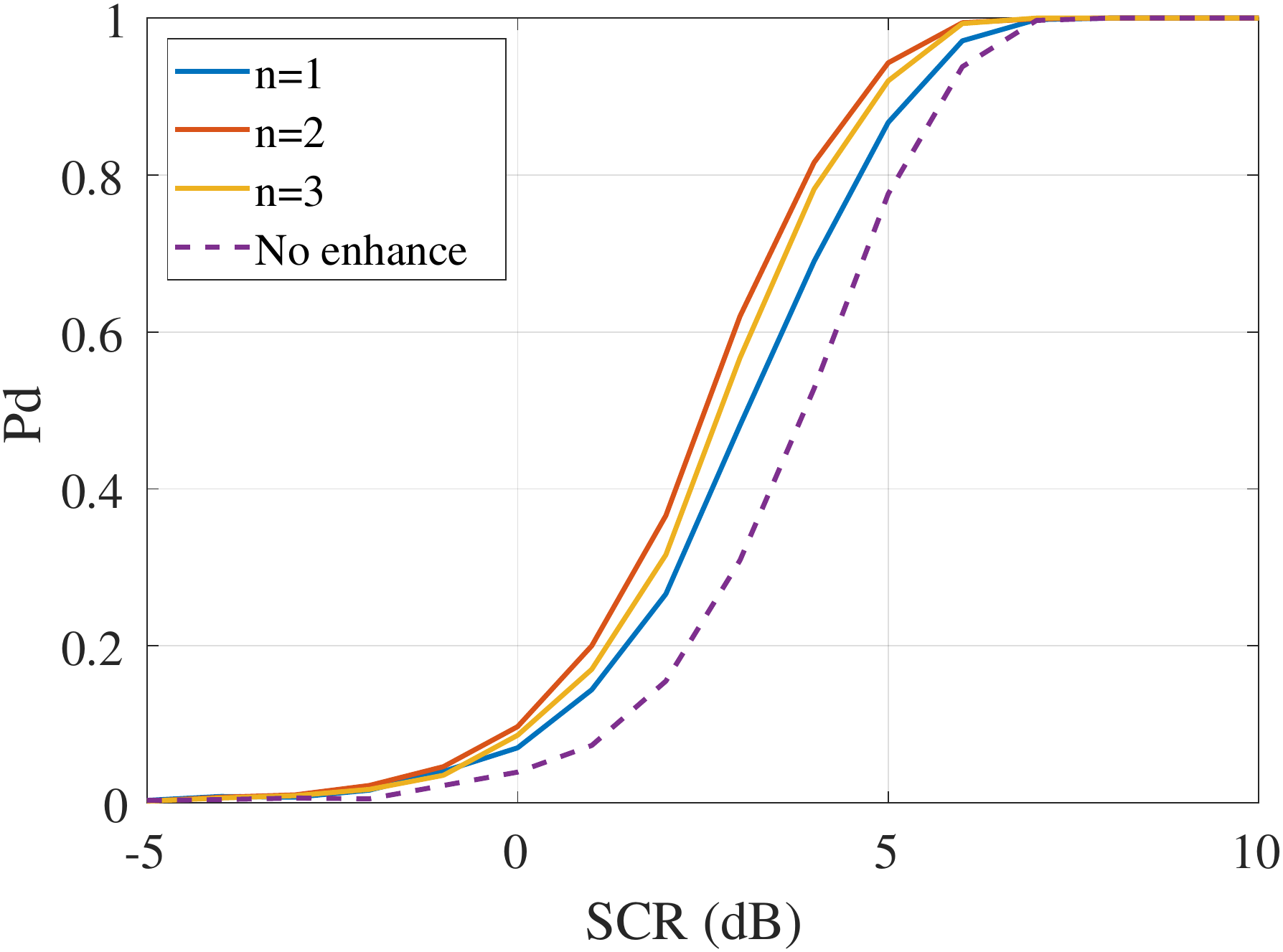}}
	\subfigure[Enhanced log-determinant divergence of $\bm{s}_2$]{\includegraphics[width=0.32\textwidth]{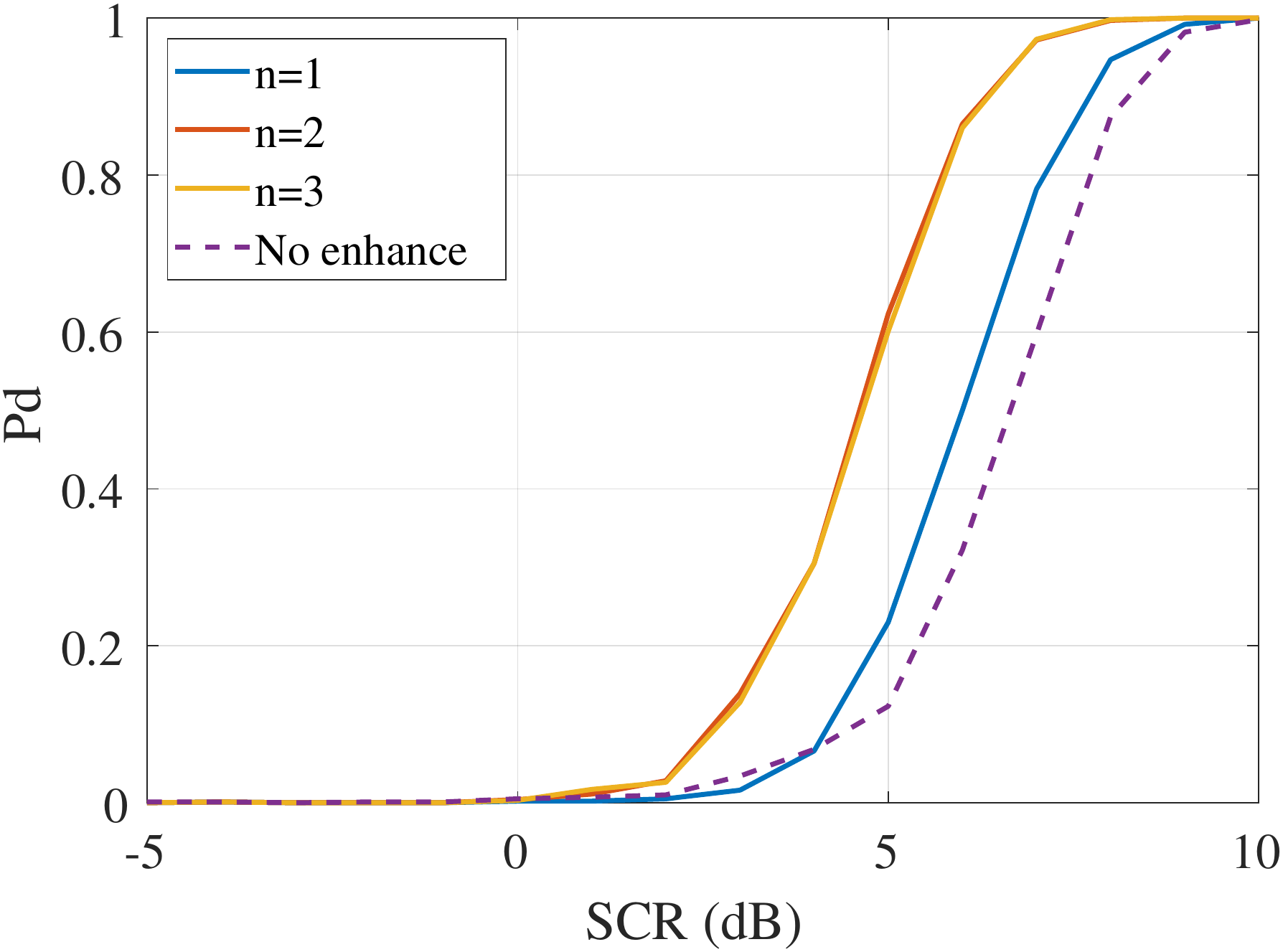}}
	\subfigure[Enhanced Riemannian distance of $\bm{s}_3$]{\includegraphics[width=0.32\textwidth]{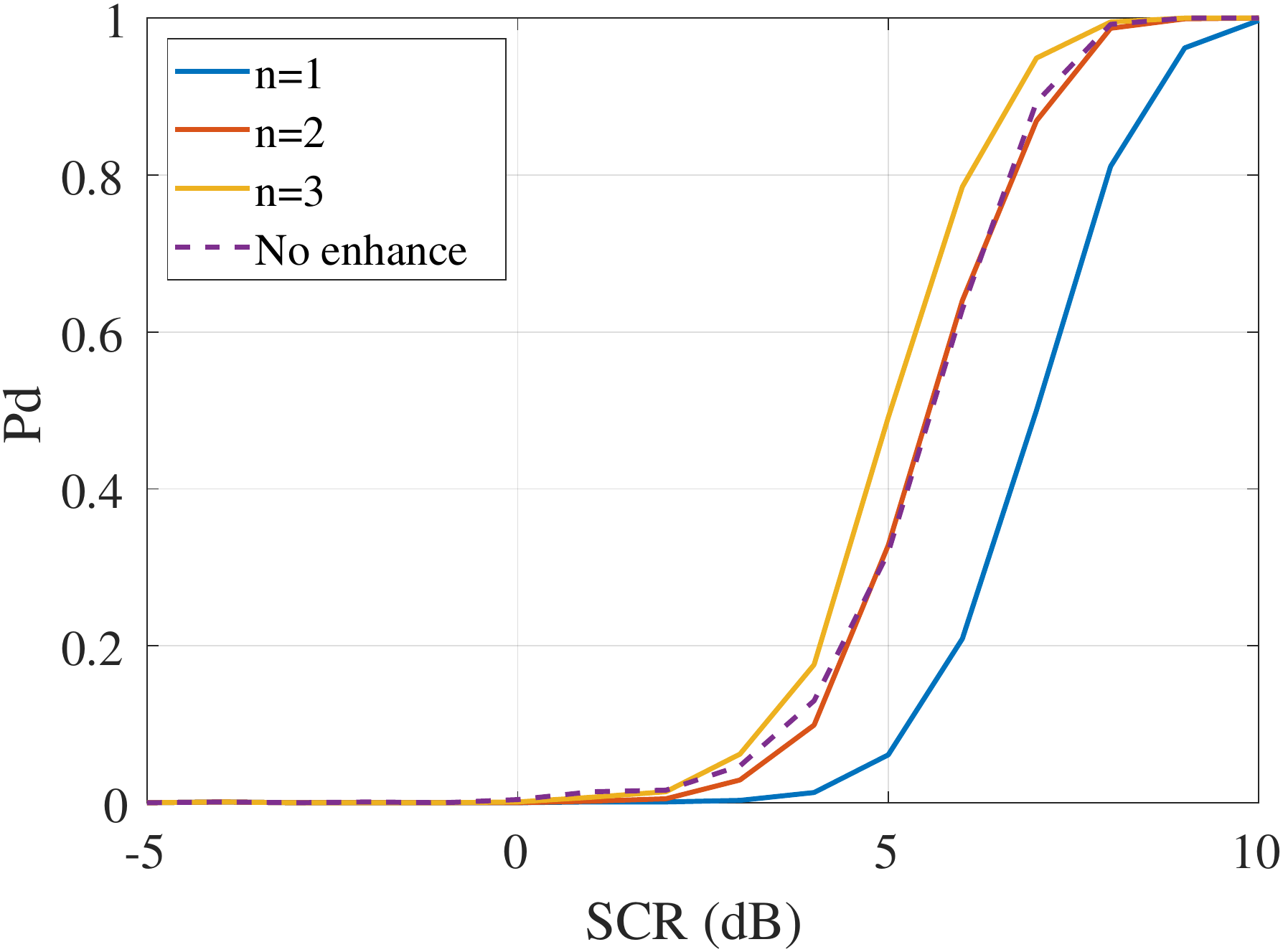}}
	\subfigure[Enhanced Kullback-Leibler divergence of $\bm{s}_3$]{\includegraphics[width=0.32\textwidth]{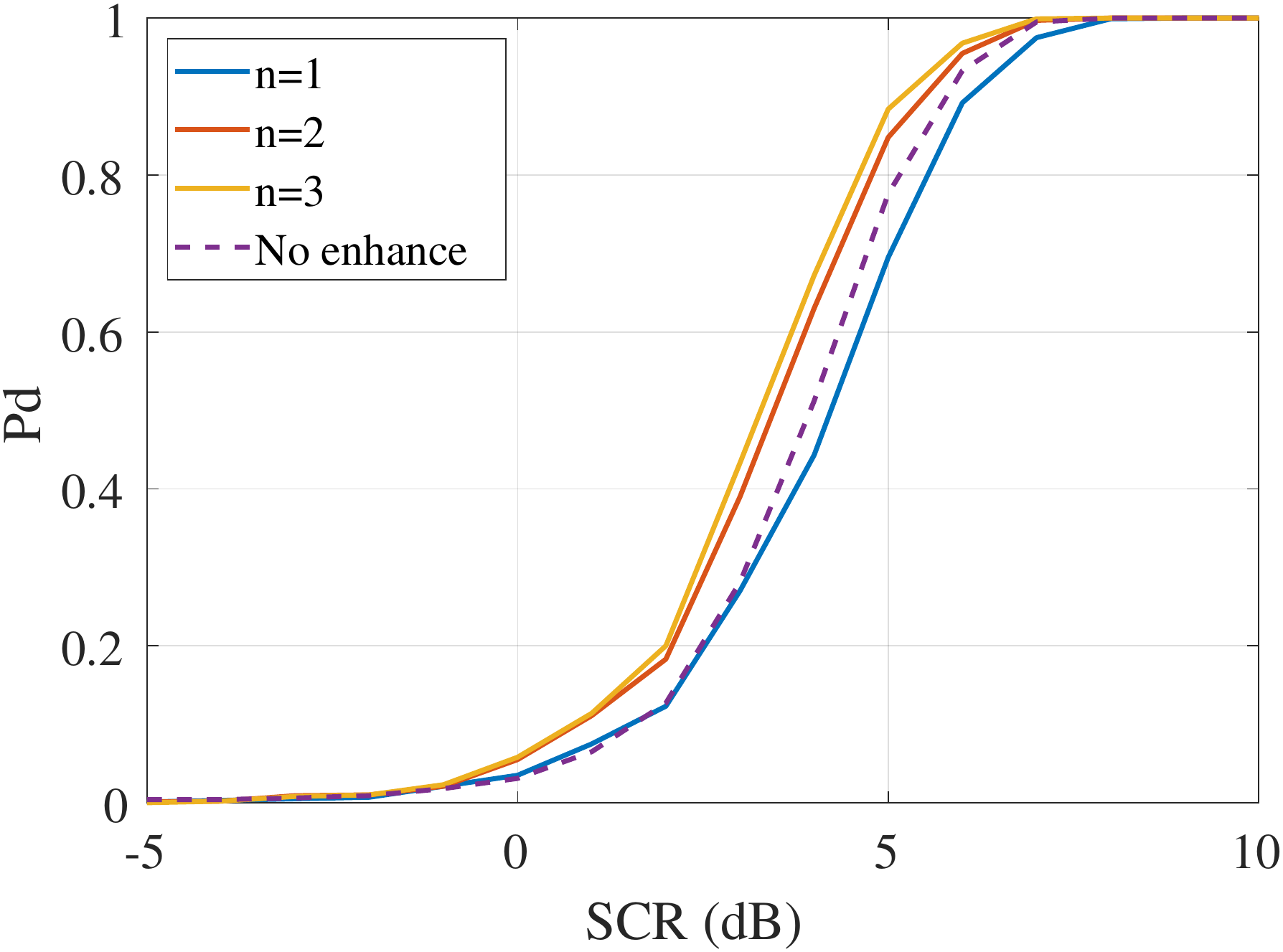}}
	\subfigure[Enhanced log-determinant divergence of $\bm{s}_3$]{\includegraphics[width=0.32\textwidth]{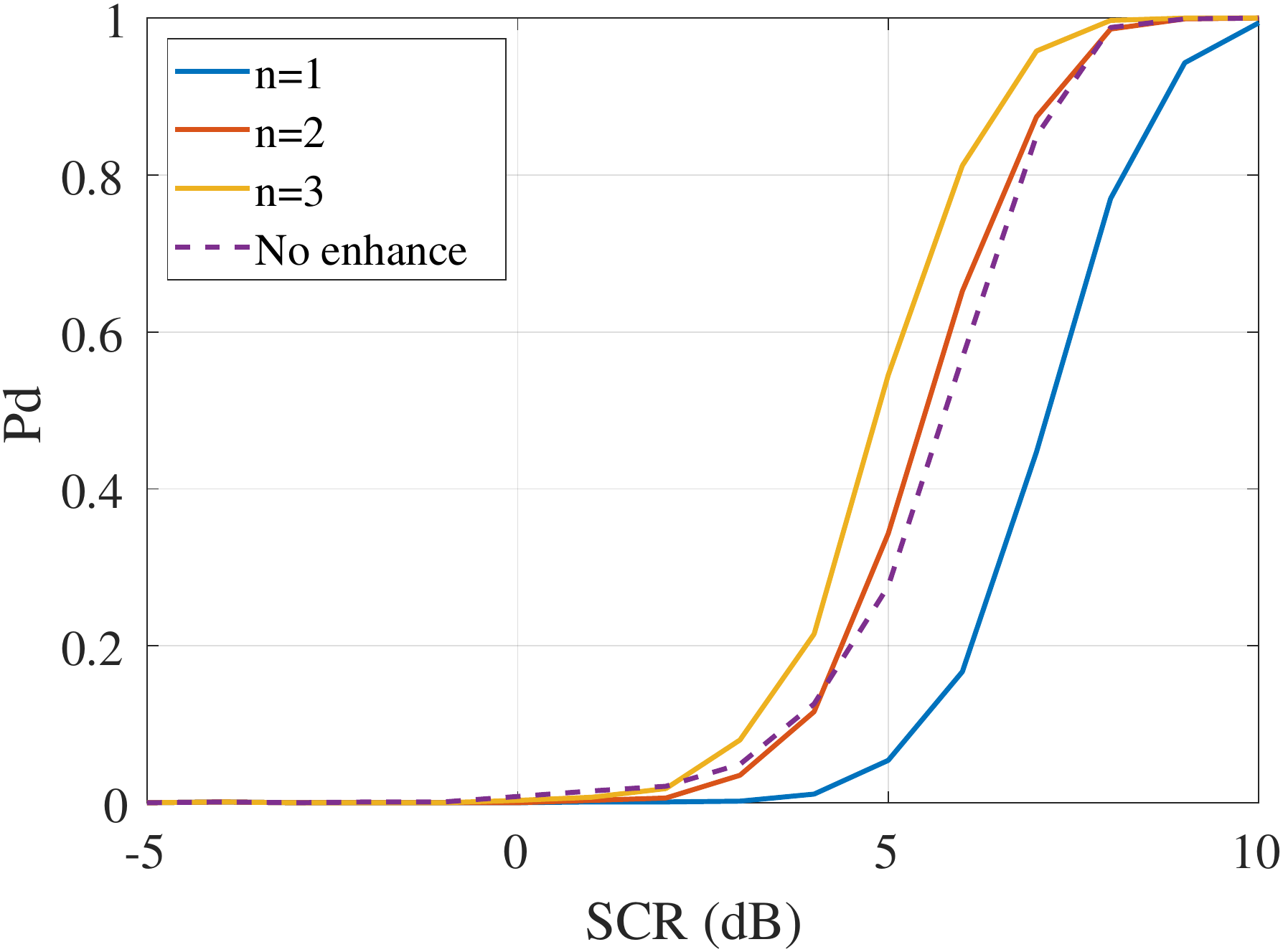}}
	\caption{The detection performance of enhanced detector with SCR from -5 dB to 10 dB. In these figures, the detection probabilities are calculated based on $10^4$ Monte Carlo runs.}
	\label{fig:curve_pd_enhance2}
\end{figure*}
In addition, two problems about the experimental results should be discussed: 1) The cross points (the order changed points) of detection performance curves are inconsistent with the normalized geometric measures in the SCR axis for the RD and LDD; 2) The detection probability descends when SCR increases from -20 dB to -8 dB (KLD is from -18 dB to -16 dB). This two problems both result from the accuracy of the mean matrix estimates. For the first problem, the normalized geometric measures in Fig.\ref{fig:curve_ts} are calculated without the estimate of the mean matrix, but this estimate is considered in the detection. Therefore, under same SCR, the test statistics with the estimate constitute more clutter components than another. So that, the critical points of detection probability are larger than the normalized geometric measures in terms of SCR. As for the second problem, the difference between the estimate and the ground truth of the clutter covariance matrix results in that the eigenvalues in $\mathcal{P}(\mathbf{C}_{\bm{x}},\hat{\mathbf{C}})$ might be lower than 1, i.e., $\lambda^*_k<0\,(1+\lambda^*_k<1)$ might be encountered. But, $\varphi_{\mathcal{D}_{\text{RD}}}$, $\varphi_{\mathcal{D}_{\text{KL}}}$ and $\varphi_{\mathcal{D}_{\text{LDD}}}$ is monotonously decreasing when $\lambda^*_k<0$. So, the test statistics descend when SCR increases from -10 dB to -5 dB, then the detection probability also descends.\par
Totally, the proposed analytic method is effective that the detection performance is consistent with the analysis. By the proposed analytic method, we deduce the maximal points of the induced potential functions, and then the advantageous characteristics in terms of detection performance are concluded by the maximal points.
\subsection{Optimal Dimension for Enhanced Matrix CFAR Detection}
As depicted in Fig.\ref{fig:curve_pd_enhance}, the different dimension $n$ of the enhanced mapping results in the distinct detection performance. This part would discuss the relation between the optimal $n$ and the signal from the view of dual power spectrum manifold.\par
Consider the geometric measures $\mathcal{D}_{\text{RD}}$, $\mathcal{D}_{\text{KL}}$ and $\mathcal{D}_{\text{LD}}$. Actually, the induced potential functions of them can be commonly expressed as the following form
\begin{equation}
	\varphi_{\mathcal{D}}(\lambda_1,\dots,\lambda_m)=\sum^m_{k=0} \varphi_{\mathcal{D}}^*(\lambda_k),
	\label{eq:common_pf}
\end{equation}
where $\varphi_{\mathcal{D}}^*(\lambda)=\log^2 \lambda$, $\varphi_{\mathcal{D}}^*(\lambda)=\lambda-1-\log \lambda$ and $\varphi_{\mathcal{D}}^*(\lambda)=\log\frac{\lambda+1}{2\sqrt{\lambda}}$ correspond to the $\varphi_{\mathcal{D}_{\text{RD}}}$, $\varphi_{\mathcal{D}_{\text{KLD}}}$ and $\varphi_{\mathcal{D}_{\text{LD}}}$, respectively. According to the relation between the eigenvalues and the power spectrum, the eigenvalue $\lambda_k$ corresponds to the $k$-th component of the power spectrum. Suppose the discrete bandwidth of the target echo $\bm{s}$ is $B$, which is defined as the bandwidth in the DFT. In the summation (\ref{eq:common_pf}), only the $B$ of the $m$ addends ($\varphi_{\mathcal{D}}^*(\lambda_k),\;k=1,\dots,m$) contain the components of $\bm{s}$, and other $m-B$ are pure clutter. So, the dimension of $\mathcal{M}_{\text{P}}$ is suggested to be reduced to $B$. That means the optimal $n$ equals to the discrete bandwidth of the target echo.\par
This conclusion would be verified by the following sample.
\begin{example}
	The majority of settings are same as example~\ref{exp:exp}, which are shown in Table~\ref{Tab:para}. Differently, there are three target echo steering vectors,
	\begin{equation}
	\bm{s}_k=A\mathcal{F}^{-1}\bigg(0,\dots,0,\underbrace{\frac{1}{\sqrt{k}},\dots,\frac{1}{\sqrt{k}}}_k\bigg),\quad k=1,\dots,3,
\end{equation}
where $A$ is the power of target echo and $\mathcal{F}^{-1}$ denotes the inverse discrete Fourier transform. The discrete bandwidth $B$ of $\bm{s}_1$, $\bm{s}_2$ and $\bm{s}_3$ are 1, 2 and 3, respectively.
\end{example}
The experimental results illustrated in Fig.\ref{fig:curve_pd_enhance2} correspond to the conclusions discussed above, that the optimal $n$ equals to the discrete bandwidth $B$. 
In addition, it is worth pointing that the detection performance of enhanced detector with $n=1$ is worser than the original detector without enhancement when the target echo is $\bm{s}_3$.
Moreover, the following conclusions can be drawn:
\begin{itemize}
	\item The optimal dimension of the enhanced mapping equals to the discrete bandwidth of target echo for enhanced detector.
	\item The closer dimension of the enhanced mapping to the discrete bandwidth encourages the better detection performance.
	\item The improvement of detection performance is obvious, while the discrete bandwidth of the target echo is small, i.e., the spectrum is narrow.
	\item When the bandwidth is large, the enhanced mapping with low dimension would degrade the detection performance of the geometric detector.
\end{itemize}
\section{Conclusions}\label{sec:con}
In this paper, the dual power spectrum manifold $\mathcal{M}_{\text{P}}$, as the dual manifold of Toeplitz Hermitian positive definite (HPD) manifold $\mathcal{M}_{\mathcal{T}H_{++}}$, have been studied. 
By the duality of these two manifolds, we have shown that there exists an equivalent induced potential function on $\mathcal{M}_{\text{P}}$ for every affine invariant geometric measure on $\mathcal{M}_{\mathcal{T}H_{++}}$. Based on that, the geometric detectors can be reformulated as the form related to the power spectrum, and two problems have been circumvented: 
1) The enhancement of the geometric detector is reformulated as a simpler problem and is easier to solve than it used to be. 
In the presented examples, 
the analytic expression of the enhanced mapping is provide by the proposed method, that the enhanced mapping was obtained by the gradient descent-based methods. Moreover, the effectiveness of the enhanced detector has been verified by the experimental results. 
2) The performance of geometric detectors have been analyzed by the induced potential functions. 
Besides, the optimal dimension of the enhance mapping also have been deduced by the induced potential function. Experimentally, these deduced conclusions have been validated.

\appendices
\section{The Affine Invariant Properties of the Geometric Measures $\mathcal{D}_{\text{RD}},\mathcal{D}_{\text{KL}},\mathcal{D}_{\text{JS}},\mathcal{D}_{\text{LD}}$}\label{app:aff_invar_mea}
\subsection{Riemannian Distance $\mathcal{D}_{\text{RD}}$}
The Riemannian distance can be transformed to
\begin{equation}
	\begin{split}
		\mathcal{D}_{\text{RD}}(\mathbf{C}_1,\mathbf{C}_2)=&\Arrowvert \log(\mathbf{C}_1^{-\frac{1}{2}}\mathbf{C}_2\mathbf{C}_1^{-\frac{1}{2}})\Arrowvert_F^2\\
		=&\text{tr}\left(\log^2(\mathbf{C}_1^{-\frac{1}{2}}\mathbf{C}_2\mathbf{C}_1^{-\frac{1}{2}})\right)\\
		=&\sum_{k=1}^m\lambda_k\left(\log^2(\mathbf{C}_1^{-\frac{1}{2}}\mathbf{C}_2\mathbf{C}_1^{-\frac{1}{2}})\right)\\
		\overset{\text{(a)}}{=}&\sum_{k=1}^m\log^2\left(\lambda_k(\mathbf{C}_1^{-\frac{1}{2}}\mathbf{C}_2\mathbf{C}_1^{-\frac{1}{2}})\right)\\
		\overset{\text{(b)}}{=}&\sum_{k=1}^m\log^2\left(\lambda_k(\mathbf{C}_2\mathbf{C}_1^{-1})\right),\\
	\end{split}
\end{equation}
where $\lambda_k(\mathbf{A})$ indicates the $k$-th eigenvalue of matrix $\mathbf{A}$, (a) holds by the property of matrix logarithm function $\lambda(\log(\mathbf{C}))=\log(\lambda(\mathbf{C}))$ and (b) holds by the well-known property of matrix eigenvalue that matrices $\mathbf{C}_1\mathbf{C}_2$ and $\mathbf{C}_2\mathbf{C}_1$ have the same eigenvalues.
Therefore,
\begin{equation}
	\begin{split}
		&\mathcal{D}_{\text{RD}}(\mathbf{W}^H\mathbf{C}_1\mathbf{W},\mathbf{W}^H\mathbf{C}_2\mathbf{W})\\
		=&\sum_{k=1}^m\log^2\left(\lambda_k(\mathbf{W}^H\mathbf{C}_2\mathbf{W}\mathbf{W}^{-1}\mathbf{C}_1^{-1}(\mathbf{W}^H)^{-1})\right)\\
		=&\sum_{k=1}^m\log^2\left(\lambda_k(\mathbf{W}^H\mathbf{C}_2\mathbf{C}_1^{-1}(\mathbf{W}^H)^{-1})\right)\\
		=&\sum_{k=1}^m\log^2\left(\lambda_k(\mathbf{C}_2\mathbf{C}_1^{-1})\right)=\mathcal{D}_{\text{RD}}(\mathbf{C}_1,\mathbf{C}_2).
	\end{split}
\end{equation}
\subsection{Kullback-Leibler Divergence $\mathcal{D}_{\text{KL}}$}
\begin{equation}
	\begin{split}
		&\mathcal{D}_{\text{KL}}(\mathbf{W}^H\mathbf{C}_1\mathbf{W},\mathbf{W}^H\mathbf{C}_2\mathbf{W})\\
	=&\text{tr}(\mathbf{W}^H\mathbf{C}_1\mathbf{W}\mathbf{W}^{-1}\mathbf{C}_2^{-1}(\mathbf{W}^H)^{-1}-\mathbf{I})\\
	&-\log\left|\mathbf{W}^H\mathbf{C}_1\mathbf{W}\mathbf{W}^{-1}\mathbf{C}_2^{-1}(\mathbf{W}^H)^{-1}\right|\\
	=&\text{tr}(\mathbf{C}_1\mathbf{C}_2^{-1}(\mathbf{W}^H)^{-1}\mathbf{W}^H-\mathbf{I})\\
	&-\log\left(\left|\mathbf{W}^H\right|\left|\mathbf{C}_1\mathbf{C}_2^{-1}\right|\left|(\mathbf{W}^{H})^{-1}\right|\right)\\
	=&\text{tr}(\mathbf{C}_1\mathbf{C}_2^{-1}-\mathbf{I})-\log\left|\mathbf{C}_1\mathbf{C}_2^{-1}\right|=\mathcal{D}_{\text{KL}}(\mathbf{C}_1,\mathbf{C}_2).\\
	\end{split}
\end{equation}
\subsection{Jensen-Shannon Divergence $\mathcal{D}_{\text{JS}}$}
\begin{equation}
	\begin{split}
		&\mathcal{D}_{\text{JS}}(\mathbf{W}^H\mathbf{C}_1\mathbf{W},\mathbf{W}^H\mathbf{C}_2\mathbf{W})\\
		=&\frac{1}{2}\left[\mathcal{D}_{\text{KL}}\left(\mathbf{W}^H\mathbf{C}_1\mathbf{W},\frac{\mathbf{W}^H\mathbf{C}_1\mathbf{W}+\mathbf{W}^H\mathbf{C}_2\mathbf{W}}{2}\right)\right.\\
		&+\left.\mathcal{D}_{\text{KL}}\left(\mathbf{W}^H\mathbf{C}_2\mathbf{W},\frac{\mathbf{W}^H\mathbf{C}_1\mathbf{W}+\mathbf{W}^H\mathbf{C}_2\mathbf{W}}{2}\right)\right]\\
		=&\frac{1}{2}\left[\mathcal{D}_{\text{KL}}\left(\mathbf{C}_1,\frac{\mathbf{C}_1+\mathbf{C}_2}{2}\right)+\mathcal{D}_{\text{KL}}\left(\mathbf{C}_2,\frac{\mathbf{C}_1+\mathbf{C}_2}{2}\right)\right]\\
		=&\mathcal{D}_{\text{JS}}(\mathbf{C}_1,\mathbf{C}_2).
	\end{split}
\end{equation}
\subsection{log-determinant Divergence $\mathcal{D}_{\text{LD}}$}
\begin{equation}
	\begin{split}
		&\mathcal{D}_{\text{LD}}(\mathbf{W}^H\mathbf{C}_1\mathbf{W},\mathbf{W}^H\mathbf{C}_2\mathbf{W})\\
	=&\log\left|\frac{\mathbf{W}^H\mathbf{C}_1\mathbf{W}+\mathbf{W}^H\mathbf{C}_2\mathbf{W}}{2}\right|-\log\sqrt{\left|\mathbf{W}^H\mathbf{C}_1\mathbf{W}\mathbf{W}^H\mathbf{C}_2\mathbf{W}\right|}\\
	=&\log\frac{\left|\mathbf{W}^H\right|\left|\frac{1}{2}\left(\mathbf{C}_1+\mathbf{C}_2\right)\right|\left|\mathbf{W}\right|}{\sqrt{\left|\mathbf{C}_1\mathbf{C}_2\right|}|\mathbf{W}|\left|\mathbf{W}^H\right|}\\
	=&\log\left|\frac{\mathbf{C}_1+\mathbf{C}_2}{2}\right|-\log\sqrt{\left|\mathbf{C}_1\mathbf{C}_2\right|}=\mathcal{D}_{\text{LD}}(\mathbf{C}_1,\mathbf{C}_2).
	\end{split}
\end{equation}
\section{Induced Potential Function of the Geometric Measures $\mathcal{D}_{\text{RD}},\mathcal{D}_{\text{KL}},\mathcal{D}_{\text{JS}},\mathcal{D}_{\text{LD}}$}\label{app:potential_function}
Let $\mathbf{C}=\mathbf{C}_2^{-\frac{1}{2}}\mathbf{C}_1\mathbf{C}_2^{-\frac{1}{2}}$, the induced potential functions of $\mathcal{D}_{\text{RD}},\mathcal{D}_{\text{KL}},\mathcal{D}_{\text{JS}},\mathcal{D}_{\text{LD}}$ are shown as follows.
\subsection{Riemannian Distance $\mathcal{D}_{\text{RD}}$}
\begin{equation}
	\begin{split}
		&\varphi_{\mathcal{D}_{\text{RD}}}(\mathcal{P}(\mathbf{C}_1,\mathbf{C}_2))=\mathcal{D}_{\text{RD}}(\mathbf{C}_1,\mathbf{C}_2)=\mathcal{D}_{\text{RD}}(\mathbf{C},\mathbf{I})\\
		=&\parallel \log\mathbf{C}^{-1}\parallel^2_F=\text{tr}(\log^2\mathbf{C}^{-1})=\sum^{m-1}_{k=0}\log^2\lambda_k.
	\end{split}
\end{equation}
\subsection{Kullback-Leibler Divergence $\mathcal{D}_{\text{KL}}$}
\begin{equation}
	\begin{split}
		&\varphi_{\mathcal{D}_{\text{KL}}}(\mathcal{P}(\mathbf{C}_1,\mathbf{C}_2))=\mathcal{D}_{\text{KL}}(\mathbf{C}_1,\mathbf{C}_2)=\mathcal{D}_{\text{KL}}(\mathbf{C},\mathbf{I})\\
		=&\text{tr}(\mathbf{C}-\mathbf{I})-\log|\mathbf{C}|=\sum^{m-1}_{k=0} \lambda_k-1-\log\lambda_k.
	\end{split}
\end{equation}\subsection{Jensen-Shannon Divergence $\mathcal{D}_{\text{JS}}$}
\begin{equation}
	\begin{split}
		&\varphi_{\mathcal{D}_{\text{JS}}}(\mathcal{P}(\mathbf{C}_1,\mathbf{C}_2))=\mathcal{D}_{\text{JS}}(\mathbf{C}_1,\mathbf{C}_2)=\mathcal{D}_{\text{JS}}(\mathbf{C},\mathbf{I})\\
		=&\frac{1}{2}\left\{\text{tr}\left[\mathbf{C}\left(\frac{\mathbf{C}+\mathbf{I}}{2}\right)^{-1}-\mathbf{I}\right]-\log\left|\mathbf{C}\left(\frac{\mathbf{C}+\mathbf{I}}{2}\right)^{-1}\right|\right.\\
		&+\left.\text{tr}\left[\left(\frac{\mathbf{C}+\mathbf{I}}{2}\right)^{-1}-\mathbf{I}\right]-\log\left|\left(\frac{\mathbf{C}+\mathbf{I}}{2}\right)^{-1}\right|\right\}\\
		=&\text{tr}\left[\left(\frac{\mathbf{C}+\mathbf{I}}{2}\right)\left(\frac{\mathbf{C}+\mathbf{I}}{2}\right)^{-1}-\mathbf{I}\right]-\frac{1}{2}\log\left|\mathbf{C}\left(\frac{\mathbf{C}+\mathbf{I}}{2}\right)^{-2}\right|\\
		=&\sum^{m-1}_{k=0} \log\frac{\lambda_k+1}{2\sqrt{\lambda_k}}.
	\end{split}
\end{equation}
\subsection{log-determinant Divergence $\mathcal{D}_{\text{LD}}$}
\begin{equation}
	\begin{split}
		&\varphi_{\mathcal{D}_{\text{LD}}}(\mathcal{P}(\mathbf{C}_1,\mathbf{C}_2))=\mathcal{D}_{\text{LD}}(\mathbf{C}_1,\mathbf{C}_2)=\mathcal{D}_{\text{LD}}(\mathbf{C},\mathbf{I})\\
		=&\log\left|\frac{\mathbf{C}+\mathbf{I}}{2}\right|-\log\sqrt{\left|\mathbf{C}\right|}=\sum^{m-1}_{k=0}\log\frac{\lambda_k+1}{2\sqrt{\lambda_k}}.
	\end{split}
\end{equation}

\ifCLASSOPTIONcaptionsoff
  \newpage
\fi
\bibliographystyle{IEEEtranM}
\bibliography{Toeplitz}
\end{document}